\pgfplotsset{compat=1.15}
\newcommand{\Hquad}{\hspace{0.5em}} 
\newcommand{\br}[1]{\left( #1 \right)}
\newcommand{\brsq}[1]{\left[ #1 \right]}
\newcommand{\brcur}[1]{\left\{ #1 \right\}}
\newcommand{\RR}{\mathbb{R}}
\newcommand{\CC}{\mathbb{C}}
\newcommand{\expc}[2]{\mathbb{E}_{#1}\brsq{#2}}
\newcommand{\expcs}[2]{\mathbb{E}_{#1}[#2]}
\newcommand{\pr}[1]{\mathrm{Pr}\brcur{#1}}
\newcommand{\bx}{\bm{x}}
\newcommand{\by}{\bm{y}}
\newcommand{\bzero}{\bm{0}}
\newcommand{\hS}{\hat{S}}
\newcommand{\hM}{\hat{M}}
\newcommand{\calS}{\mathcal{S}}
\newcommand{\calR}{\mathcal{R}}
\newcommand{\calC}{\mathcal{C}}
\newcommand{\calX}{\mathcal{X}}
\newcommand{\calY}{\mathcal{Y}}
\newcommand{\calZ}{\mathcal{Z}}
\newcommand{\calU}{\mathcal{U}}
\newcommand{\calV}{\mathcal{V}}
\newcommand{\calT}{\mathcal{T}}
\newcommand{\calW}{\mathcal{W}}
\newcommand{\calM}{\mathcal{M}}
\newcommand{\calE}{\mathcal{E}}
\newcommand{\calP}{\mathcal{P}}
\newcommand{\typset}[2]{\calT^{(n)}_{\epsilon_{#1}}({#2})}
\newcommand{\cdscf}[1]{C^{\mathrm{SC}}({#1})}
\newcommand{\cdcf}[1]{C^{\mathrm{C}}({#1})}
\newcommand{\cdncf}[1]{C^{\mathrm{NC}}({#1})}
\newcommand{\pdscf}[1]{\calP^{\mathrm{SC}}_{#1}}
\newcommand{\pdcf}[1]{\calP^{\mathrm{C}}_{#1}}
\newcommand{\pdncf}[1]{\calP^{\mathrm{NC}}_{#1}}
\newcommand{\pddf}[2]{\calP_{#1, #2}}
\newcommand{\pdd}{\pddf{D_1}{D_2}}
\newcommand{\pddscf}[2]{\calP^{\mathrm{SC}}_{#1, #2}}
\newcommand{\pddsc}{\calP^{\mathrm{SC}}_{D_1, D_2}}
\newcommand{\pddcf}[2]{\calP^{\mathrm{C}}_{#1, #2}}
\newcommand{\pddc}{\pddcf{D_1}{D_2}}
\newcommand{\pddnc}{\calP^{\mathrm{NC}}_{D_1, D_2}}
\newcommand{\ridd}{\calR(\pdd)}
\newcommand{\riddsc}{\calR(\pddsc)}
\newcommand{\riddc}{\calR(\pddc)}
\newcommand{\riddnc}{\calR(\pddnc)}
\newcommand{\cddscf}[2]{\calC^{\mathrm{SC}}({#1}, {#2})}
\newcommand{\cddsc}{\calC^{\mathrm{SC}}(D_1, D_2)}
\newcommand{\cddcf}[2]{\calC^{\mathrm{C}}({#1}, {#2})}
\newcommand{\cddc}{\calC^{\mathrm{C}}(D_1, D_2)}
\newcommand{\cddnc}{\calC^{\mathrm{NC}}(D_1, D_2)}
\DeclareMathOperator*{\argmin}{arg\,min}
\newacronym{mimo}{MIMO}{multiple-input multiple-output}
\newacronym{mu}{MU}{multi-user}
\newacronym{ofdm}{OFDM}{orthogonal frequency-division multiplexing}
\newacronym{5g}{5G}{fifth generation}
\newacronym{6g}{6G}{sixth generation}
\newacronym{b5g}{B5G}{Beyond 5G}
\newacronym{crlb}{CRB}{Cram\'er-Rao bound}
\newacronym{bcrb}{BCRB}{Bayesian Cram\'er-Rao bound}
\newacronym{bfim}{BFIM}{Bayesian Fisher information matrix}
\newacronym{repms}{REPMS}{Riemannian Exact Penalty Method via Smoothing}
\newacronym{srepms}{SREPMS}{Stochastic Riemannian Exact Penalty Method via Smoothing}
\newacronym{sdr}{SDR}{semidefinite relaxation}
\newacronym{sdp}{SDP}{semidefinite programming}
\newacronym{mmwave}{mmWave}{millimeter wave}
\newacronym{isac}{ISAC}{integrated sensing and communications}
\newacronym{aoa}{AoA}{angle of arrival}
\newacronym{aod}{AoD}{angle of departure}
\newacronym{qos}{QoS}{Quality of Service}
\newacronym{th}{THz}{Terahertz}
\newacronym{dof}{DoFs}{degrees of freedom}
\newacronym{iot}{IoT}{Internet of Things}
\newacronym{uav}{UAV}{unmanned aerial vehicle}
\newacronym{tx}{Tx}{transmitter}
\newacronym{rx}{Rx}{receiver}
\newacronym{ula}{ULA}{uniform linear array}
\newacronym{dft}{DFT}{Discrete Fourier Transform}
\newacronym{re}{RE}{resource element}
\newacronym{dl}{DL}{Downlink}
\newacronym{ul}{UL}{Uplink}
\newacronym{psd}{PSD}{positive semidefinite}
\newacronym{mse}{MSE}{mean squared error}
\newacronym{svd}{SVD}{singular value decomposition}
\newacronym{cpu}{CPU}{central processing unit}
\newacronym{mui}{MUI}{multi-user interference}
\newacronym{mi}{MI}{mutual information}
\newacronym{srmo}{SRMO}{stochastic Riemannian manifold optimization}
\newacronym{srgd}{SRGD}{stochastic Riemannian gradient descent}
\newacronym{srcg}{SRCG}{stochastic Riemannian conjugate gradient}
\newacronym{kkt}{KKT}{Karush–Kuhn–Tucker}
\newacronym{as}{a.s.}{almost sure}
\newacronym{papr}{PAPR}{peak to average power ratio}
\newacronym{rd}{RD}{rate-distortion}
\newacronym{map}{MAP}{maximum a posteriori}
\newacronym{mmse}{MMSE}{minimum \ac{mse}}
\newacronym{iid}{i.i.d}{independently and identically distributed}
\newacronym{dmc}{DMC}{discrete memoryless channel}
\newacronym{sddmc}{SD-DMC}{state-dependent discrete memoryless channel}
\newacronym{sddmbc}{SD-DMBC}{state-dependent discrete memoryless broadcast channel}
\newacronym{csi}{CSI}{channel state information}
\newacronym{csit}{CSIT}{channel state information at transmitter}
\newacronym{csir}{CSIR}{channel state information at receiver}
\newacronym{bc}{BC}{broadcast channel}
\newacronym{simo}{SIMO}{single-input-multiple-output}
\newacronym{bcd}{BCD}{block coordinate descent}
\newacronym{ba}{BA}{Blahut-Arimoto}
\newacronym{si}{SI}{side information}
\newacronym{sit}{SI-T}{side information at transmitter}
\newacronym{sir}{SI-R}{side information at receiver}
\newacronym{llm}{LLM}{law of large numbers}
\newacronym{cd}{C-D}{capacity-distortion}
\newtheorem{theorem}{Theorem}
\newtheorem{lemma}{Lemma}
\newtheorem{prop}{Proposition}
\theoremstyle{definition}
\newtheorem{definition}{Definition}
\newtheorem{remark}{Remark}
\newcommand{\INDSTATE}{\State\hspace{\algorithmicindent}}
\tikzset{%
    block-common/.style={draw, fill=white, minimum height=2.5em, minimum width=4em},
    block/.style={rectangle, block-common},
    bigblock/.style={block, minimum height=8em},
    txtblock/.style={block, align=center, minimum height=4em},
    txtbigblock/.style={bigblock, align=center},
    input/.style={inner sep=1pt},       
    output/.style={inner sep=1pt},      
    sum/.style = {draw, fill=white, circle, minimum size=1.1em, inner sep=0pt,
      font={\small$+$}},
    prod/.style = {draw, fill=white, circle, minimum size=1.1em, inner sep=0pt,
      font={\normalsize$\times$}},
    pinstyle/.style = {pin edge={to-,thin,black}}
}
\begin{document}
% \title{A Unified Capacity-Distortion Framework for Integrated Sensing and Communications}
% \title{Information-Theoretic Analysis of Capacity-Distortion Trade-Offs in ISAC Systems}
\title{An Analysis of Capacity-Distortion Trade-Offs in Memoryless ISAC Systems}
%
%
% author names and IEEE memberships
% note positions of commas and nonbreaking spaces ( ~ ) LaTeX will not break
% a structure at a ~ so this keeps an author's name from being broken across
% two lines.
% use \thanks{} to gain access to the first footnote area
% a separate \thanks must be used for each paragraph as LaTeX2e's \thanks
% was not built to handle multiple paragraphs
%

\author{Xinyang~Li \orcidlink{0000-0001-7262-5948},~\IEEEmembership{Student Member,~IEEE,}
        Vlad~C.~Andrei \orcidlink{0000-0001-5443-0100},~\IEEEmembership{Student Member,~IEEE,}
        Aladin~Djuhera \orcidlink{0009-0005-1641-8801},~\IEEEmembership{Student Member,~IEEE,}
        Ullrich~J.~M\"onich \orcidlink{0000-0002-2390-7524},~\IEEEmembership{Senior Member,~IEEE,}
        and~Holger~Boche \orcidlink{0000-0002-8375-8946},~\IEEEmembership{Fellow,~IEEE}% <-this % stops a space
\thanks{The authors are with the Department
of Electrical and Computer Engineering, Technical University of Munich, Munich, 80333 Germany (e-mail: \{xinyang.li, vlad.andrei, aladin.djuhera, moenich, boche\}@tum.de).}% <-this % stops a space
% \thanks{Manuscript received April 19, 2005; revised August 26, 2015.}
}

% The paper headers
% \markboth{Journal of \LaTeX\ Class Files,~Vol.~14, No.~8, August~2015}%
% {Shell \MakeLowercase{\textit{et al.}}: Bare Demo of IEEEtran.cls for IEEE Journals}

\maketitle

\begin{abstract}
This manuscript investigates the information-theoretic limits of \ac{isac}, aiming for simultaneous reliable communication and precise channel state estimation. We model such a system with a \ac{sddmc} with present or absent channel feedback and generalized side information at the transmitter and the receiver, where the joint task of message decoding and state estimation is performed at the receiver. The relationship between the achievable communication rate and estimation error, the \ac{cd} trade-off, is characterized across different causality levels of the side information. This framework is shown to be capable of modeling various practical scenarios by assigning the side information with different meanings, including monostatic and bistatic radar systems. The analysis is then extended to the two-user degraded broadcast channel, and we derive an achievable \ac{cd} region that is tight under certain conditions. To solve the optimization problem arising in the computation of \ac{cd} functions/regions, we propose a proximal \ac{bcd} method, prove its convergence to a stationary point, and derive a stopping criterion. Finally, several representative examples are studied to demonstrate the versatility of our framework and the effectiveness of the proposed algorithm. 
\end{abstract}

% Note that keywords are not normally used for peerreview papers.
\begin{IEEEkeywords}
Integrated sensing and communications, capacity-distortion trade-off, proximal block coordinate descent algorithm.
\end{IEEEkeywords}

% For peer review papers, you can put extra information on the cover
% page as needed:
% \ifCLASSOPTIONpeerreview
% \begin{center} \bfseries EDICS Category: 3-BBND \end{center}
% \fi
%
% For peerreview papers, this IEEEtran command inserts a page break and
% creates the second title. It will be ignored for other modes.
\IEEEpeerreviewmaketitle

\glsresetall

\section{Introduction}\label{sec:intro}
% The very first letter is a 2 line initial drop letter followed
% by the rest of the first word in caps.
% 
% form to use if the first word consists of a single letter:
% \IEEEPARstart{A}{demo} file is ....
% 
% form to use if you need the single drop letter followed by
% normal text (unknown if ever used by the IEEE):
% \IEEEPARstart{A}{}demo file is ....
% 
% Some journals put the first two words in caps:
% \IEEEPARstart{T}{his demo} file is ....
% 
% Here we have the typical use of a "T" for an initial drop letter
% and "HIS" in caps to complete the first word.
% \IEEEPARstart{T}{his} demo file is intended to serve as a ``starter file''
% for IEEE journal papers produced under \LaTeX\ using
% IEEEtran.cls version 1.8b and later\cite{ahmadipour2022information}.
% You must have at least 2 lines in the paragraph with the drop letter
% (should never be an issue)
\subsection{Background and Related Works}

\IEEEPARstart{O}{ne} of the key technologies in future communication networks is \ac{isac}~\cite{liu2022integrated}, which aims to merge the traditionally distinct functions of sensing and communications into a unified hardware and software framework. This integration stems from the increasing demand for spectrum and power efficiency and the need for intelligent systems capable of perceiving their environment while maintaining reliable communication links.
It is envisioned that \ac{isac} capabilities will be a critical feature in next-generation
standards, such as 6G~\cite{boche_fitzek_6g} and Wi-Fi~7~\cite{du2022overview}, enabling a myriad of new applications, including human activity recognition, object detection as well as localization and tracking~\cite{liu2022integrated}. 

Traditional communication systems are primarily designed to maximize the reliability and efficiency of data transfers, where the performance can be characterized by the channel capacity\cite{el2011network}. On the other hand, the field of sensing, which includes technologies like radar, LiDAR, and sonar, seeks to acquire and interpret the environmental state accurately and is usually assessed from the perspective of signal processing\cite{kay1993fundamentals}, by metrics such as the \ac{mse} and detection probability. This separation in research methodologies of both domains poses challenges for exploring the fundamental limits in \ac{isac} systems\cite{liu2022survey}. 

Prior to the advent of \ac{isac}, researchers have already invested significant effort in establishing and investigating a comprehensive theoretical framework, bridging information theory and estimation theory\cite{sutivong2002rate, kim2008state, zhang2011joint, choudhuri2013causal, bross2017rate, bross2020message}. These works focus on a \ac{sddmc}, wherein the receiver jointly decodes the messages and estimates the channel state. The core of these analyses is the \ac{cd} trade-off, a concept describing the interplay between achievable communication rates and sensing distortion.
In\cite{sutivong2002rate} and \cite{kim2008state}, the authors delve into this trade-off by considering scenarios where the channel state information is noncausally present at the transmitter. Subsequently, the authors in\cite{zhang2011joint} explore the \ac{cd} function in the case of unavailable state information. The scenarios of strictly causal and causal state information are studied in\cite{choudhuri2013causal} and the impact of the channel feedback is analyzed in\cite{bross2017rate}. An extension to the two-user degraded \ac{sddmbc} is made in~\cite{bross2020message}, where one receiver performs the joint task and the other one only decodes the messages. The corresponding \ac{cd} region is derived for this case, delineating the achievable rates at both receivers and the distortion at the first receiver.

In recent years, numerous works\cite{liu2022survey} have delved into the exploration of performance limits in \ac{isac}. Most studies concentrate on specific systems, such as \ac{mimo}\cite{xiong2023fundamental} or \ac{ofdm}\cite{Li2312:Optimal}, while analyzing both aspects from different points of view. The work in\cite{ahmadipour2022information} is the first attempt to develop an information-theoretic framework and adopt the idea of \ac{cd} trade-offs in \ac{isac} systems. There, the monostatic radar echo signal is modeled by a generalized feedback, which is processed to estimate the channel state at the transmitter. However, this setup is too specific to be extended to other systems, such as bistatic radar and device-based \ac{isac}\cite{liu2022survey}. Moreover, the impact of \ac{sit} is also not investigated in\cite{ahmadipour2022information}, which, as shown later, can capture more complicated systems like multi-sensor platforms.

In addition to theoretical analysis, it is also challenging to characterize the \ac{cd} trade-offs quantitatively for arbitrary channels, as it requires solving optimization problems through numerical methods.
% designing numerical methods to determine the \ac{cd} relationships for arbitrary channels brings another challenge. This step always involves solving an optimization problem through numerical methods.
% characterizing the \ac{cd} trade-offs always involves solving an optimization problem. 
Blahut\cite{blahut1972computation} and Arimoto\cite{arimoto1972algorithm} have developed an algorithm to solve the classical channel capacity and rate-distortion function based on the \ac{bcd} approach and proved its convergence to the global optimum. When \ac{sit} is present, an additional \ac{bcd} step updating the transmit signal distribution conditioned on \ac{sit}\cite{heegard1983capacity} or utilizing the Shannon strategy\cite{dupuis2004blahut} can also ensure an optimal solution. However, the convergence behavior of the classical \ac{bcd} algorithm becomes undetermined when applied to the problems studied in this manuscript as they do not exhibit the properties of joint convexity or coordinate-wise strict convexity~\cite{grippo2000convergence,bertsekas1997nonlinear}, necessitating the development of suitable algorithms.

\subsection{Contributions}

% To address the aforementioned limitations in the existing literature and encountered challenges, t
This manuscript introduces a comprehensive theoretical framework for \ac{isac} by unifying and extending prior results. This framework proves to be flexible and adaptable in modeling various \ac{isac} systems. Within this framework, we derive and analyze the \ac{cd} trade-offs, highlighting the benefits of integrating both functionalities. Our contributions can be summarized as follows:
\begin{enumerate}
    \item We build a \ac{sddmc} model, in which the transmitter, aided by \ac{sit} and channel feedback (if present), encodes the messages for reliable communication while assisting the receiver in estimating the channel state. We further generalize the interpretations of \ac{sit} and \ac{sir}, such that this framework is capable of modeling most \ac{isac} components, including monostatic and bistatic radar. 
    \item An achievable \ac{cd} function of the same form is derived and characterized for the point-to-point model in cases of strictly causal, causal, and noncausal \ac{sit}, with and without feedback. This \ac{cd} function is shown to be tight for the strictly causal and causal cases while being an inner bound for the noncausal case. The coding strategy is based on random binning\cite{gel1980coding} and block Markov coding\cite{choudhuri2013causal} techniques. We also explore different operational modes on the \ac{cd} function, such as communication-only and sensing-only modes. The non-decreasing and concave properties of the \ac{cd} function emphasize the merits of a jointly designed signal for simultaneous communication and sensing tasks. 
    \item We extend the point-to-point model to the two-user degraded \ac{sddmbc}. An achievable \ac{cd} region of the same form is derived for three causality levels of \ac{sit} with different types of feedback, which is tight under specific conditions, including the standard \ac{isac} system consisting of a radar link and a communication link. To demonstrate the versatility of this model, we revisit the system studied in~\cite{ahmadipour2022information} under our proposed framework, deriving consistent conclusions while offering novel perspectives.
    \item We propose the proximal \ac{bcd} to solve the optimization problems in the \ac{cd} functions/regions. This algorithm is proven to guarantee convergence to a stationary point. Based on the property of the stationary points on the unit simplex, a stopping criterion is established for the alternating update procedures. 
    \item The flexibility and versatility of the proposed framework is demonstrated for both a multi-sensor platform and the standard \ac{isac} system with numerical results not only verifying our theoretical analyses but also providing interesting insights into the designed signals, such as the random-deterministic trade-offs. 
\end{enumerate}

\subsection{Paper Structure}

The rest of this manuscript is organized as follows. Section~\ref{sec:notations} introduces the basic notations, and lemmas used throughout this manuscript. The main results for the point-to-point channel and the broadcast channel are presented in Sections~\ref{sec:p2p} and \ref{sec:braodcast}, respectively. Optimization approaches are described in Section~\ref{sec:opt}, and Section~\ref{sec:example} demonstrates several illustrative examples and numerical results. Finally, Section~\ref{sec:conclusion} concludes this manuscript.

\section{Notation and Preliminaries}\label{sec:notations}
% Throughout this manuscript, we follow the conventional notation as in the prior standard literature\cite{el2011network,cover1999elements,kramer2008topics}. 
% If specified,
Random variables are denoted by uppercase letters like $X$, and their realizations by lowercase letters like $x$.
$X\sim P_X$ indicates that $X$ follows the distribution $P_X$. We use $X^n \triangleq (X_1, X_2, ..., X_n)$ and $x^n\triangleq (x_1, x_2, ..., x_n)$ to denote the sequences of random variables and their realizations of length $n$, and a subsequence $X_i^n$ stands for $(X_i, X_{i+1}, ..., X_n)$ if $i\le n$. Vectors of dimension $n$ defined over the real space $\RR^n$ or complex space $\CC^n$ are written in boldface like $\bx$, while matrices are in uppercase boldface $\bm{X}$. Other sets and alphabets are denoted in calligraphic, such as $\calX$, whose cardinality is $|\calX|$. $\varnothing$ stands for the empty set.
% If the sample space of a random variable is empty, we assume it is a single-element alphabet with the probability of that element always being $1$, such that the information-theoretic quantities like information entropy and mutual information are always well-defined. 
$\calX|_{\mathrm{cons}}$ indicates a set $\calX$ with additional constraints $\mathrm{cons}$. For a positive integer $M$, we use the notation $[M] \triangleq \{1,2,...,M\}$. The logarithm is taken to the natural base $e$.

Let $X\in \calX$ be a random variable distributed according to $P_X$ and $X^n$ be a sequence with each element $X_i$ for $i\in [n]$ sampled \ac{iid} from $P_X$. Let $\epsilon >0$, a sequence $x^n$ is said to be $\epsilon$-typical if
\begin{equation}
    \left| \frac{N(a|x^n)}{n} - P_X(a) \right| \le \epsilon\cdot P_X(a), \quad \forall a \in \calX,
\end{equation}
where $N(a | x^n)$ counts the number of occurrences of $a$ in $x^n$. The set of all $\epsilon$-typical $x^n$ is called $\epsilon$-typical set with respect to $P_X$, denoted as $\typset{}{P_X}$.

\begin{lemma}[Conditional Typicality Lemma\cite{el2011network,kramer2008topics}]\label{lemma:condtyp}
    Given $(X,Y)\sim P_{XY}$, if $x^n \in \typset{1}{P_X}$ for some $\epsilon_1>0$ and $Y^n$ is emitted by $P_{Y|X}$, we have
    \begin{equation}
        \lim_{n\rightarrow \infty} \pr{(x^n, Y^n) \in \typset{2}{P_{XY}}} = 1
    \end{equation}
    for every $\epsilon_2 > \epsilon_1$.
\end{lemma}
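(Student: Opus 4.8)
The plan is to exploit the fact that $x^n$ is a fixed (deterministic) sequence, so the only randomness in the event $\brcur{(x^n, Y^n) \in \typset{2}{P_{XY}}}$ comes from $Y^n$, which is generated coordinate-by-coordinate through the memoryless channel $P_{Y|X}$. First I would partition the $n$ positions according to the value of $x_i$: for each letter $a \in \calX$ with $P_X(a)>0$, let $n(a) \triangleq N(a|x^n)$ be the number of positions carrying $a$. On exactly those $n(a)$ positions the outputs $Y_i$ are \ac{iid} with law $P_{Y|X}(\cdot\mid a)$, so the joint count $N(a,b\mid x^n, Y^n)$ is a sum of $n(a)$ independent Bernoulli indicators, each with mean $P_{Y|X}(b\mid a)$.

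The crux is then a concentration argument combined with the typicality of $x^n$. By the weak law of large numbers, $N(a,b\mid x^n, Y^n)/n(a) \to P_{Y|X}(b\mid a)$ in probability as $n(a)\to\infty$; and since $x^n \in \typset{1}{P_X}$ guarantees $n(a) \ge (1-\epsilon_1)\,n\,P_X(a)$, we indeed have $n(a)\to\infty$ for every $a$ with $P_X(a)>0$. Writing the empirical joint frequency as the product
\[
\frac{N(a,b\mid x^n,Y^n)}{n} = \frac{N(a,b\mid x^n,Y^n)}{n(a)}\cdot\frac{n(a)}{n},
\]
I would bound the two factors separately: the first lies within relative error $\delta$ of $P_{Y|X}(b\mid a)$ with probability tending to one, while the second lies \emph{deterministically} within relative error $\epsilon_1$ of $P_X(a)$ by typicality of $x^n$. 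Multiplying, the joint frequency lies within relative error $(1+\epsilon_1)(1+\delta)-1$ above, and correspondingly below, of $P_{XY}(a,b)=P_X(a)\,P_{Y|X}(b\mid a)$.

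This is exactly where the hypothesis $\epsilon_2 > \epsilon_1$ enters, and I expect it to be the only delicate point. Because $\epsilon_2-\epsilon_1>0$, one can fix a slack $\delta>0$ small enough that $\epsilon_1+\delta+\epsilon_1\delta \le \epsilon_2$, which forces both the upper and lower relative deviations of the product into the required window $[(1-\epsilon_2),(1+\epsilon_2)]\,P_{XY}(a,b)$. Letters with $P_X(a)=0$ need no separate treatment: the relative-deviation definition of $\typset{1}{P_X}$ forces $N(a|x^n)=0$, so such letters contribute nothing to any marginal or joint count. Finally, since $\calX\times\calY$ is finite, a union bound over all pairs $(a,b)$ shows that the probability of any coordinate of the joint type violating its constraint vanishes, yielding $\pr{(x^n, Y^n)\in \typset{2}{P_{XY}}}\to 1$. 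The main obstacle is purely the bookkeeping of these multiplicative deviations and the choice of $\delta$; the probabilistic content reduces to the elementary law of large numbers applied independently on each conditional block.
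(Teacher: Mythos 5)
The paper states this lemma without proof, citing it directly from the standard textbooks \cite{el2011network,kramer2008topics}, and your argument is precisely the standard proof given there: condition on the fixed typical $x^n$, apply the weak law of large numbers to each conditional block of \ac{iid} outputs, and absorb the two multiplicative deviations into the slack $\epsilon_2-\epsilon_1>0$ before taking a union bound over the finite alphabet. The argument is correct; the only point worth making explicit is the degenerate case $P_{Y|X}(b\mid a)=0$ with $P_X(a)>0$, where the relative-error form of your LLN step is vacuous but the joint count $N(a,b\mid x^n,Y^n)$ is almost surely zero anyway, so the typicality constraint for that pair holds trivially.
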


\begin{lemma}[Covering Lemma\cite{el2011network}]\label{lemma:covering}
    Given a joint distribution $P_{XY}$ and $X^n$ generated \ac{iid} according to $P_X$ such that $\lim_{n\rightarrow\infty} \mathrm{Pr}\{X^n \in \typset{1}{P_X}\} = 1$. Let $m\in [M]$ with $M$ the largest positive integer smaller than $2^{nR}$ for some $R\ge 0$ and $\{Y^n(m)\}_{m=1}^M$  be a set of sequences, each of which is emitted \ac{iid} by the marginal $P_Y$ so that $X^n$ is independent of $Y^n(m)$ for all $m$. Then, there exists a $\delta(\epsilon_2)$ that tends to zero as $\epsilon_2 \rightarrow 0$ such that for every $\epsilon_2 > \epsilon_1 > 0$
    \begin{equation}
        \lim_{n\rightarrow \infty} \pr{\forall m\in [M]: (X^n, Y^n(m)) \notin \typset{2}{P_{XY}}} = 0
    \end{equation}
    if $R> I(X;Y) +\delta(\epsilon_2)$.
\end{lemma}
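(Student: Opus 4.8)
The plan is to bound the probability of the complementary ``covering failure'' event---that \emph{every} sequence $Y^n(m)$ fails to be jointly typical with $X^n$---and to show that it vanishes as $n\to\infty$. First I would condition on the realization $X^n = x^n$ and split the resulting expectation over $x^n$ into the atypical part, $x^n \notin \typset{1}{P_X}$, and the typical part. By hypothesis the atypical part contributes at most $\pr{X^n \notin \typset{1}{P_X}}$, which tends to zero, so it suffices to control the conditional failure probability for a fixed typical $x^n$.

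Since the sequences $\{Y^n(m)\}_{m=1}^M$ are generated independently of one another and of $X^n$, the conditional failure probability factorizes into
\begin{equation}
\pr{\forall m\in[M]: (x^n, Y^n(m)) \notin \typset{2}{P_{XY}} \,\middle|\, X^n = x^n} = \br{1 - p(x^n)}^M,
\end{equation}
where $p(x^n) \triangleq \pr{(x^n, Y^n) \in \typset{2}{P_{XY}}}$ with $Y^n$ drawn i.i.d.\ from $P_Y$. The heart of the argument is a lower bound on $p(x^n)$ that is uniform over $x^n \in \typset{1}{P_X}$.

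To obtain this bound I would write $p(x^n)$ as the sum of $P_Y^n(y^n)$ over those $y^n$ that are jointly $\epsilon_2$-typical with $x^n$. The number of such $y^n$ is at least roughly $2^{n(H(Y\mid X) - \delta')}$, which follows from the Conditional Typicality Lemma (Lemma~\ref{lemma:condtyp})---giving that the conditional mass of jointly typical $y^n$ is close to one---together with the standard upper bound $2^{-n(H(Y\mid X)-\delta')}$ on the conditional probability of each such sequence. Moreover, every jointly typical $y^n$ is marginally $\epsilon_2$-typical with respect to $P_Y$, so $P_Y^n(y^n) \ge 2^{-n(H(Y)+\delta'')}$. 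Multiplying the two estimates gives $p(x^n) \ge 2^{-n(I(X;Y) + \delta(\epsilon_2))}$ for all large $n$, where the accumulated slack $\delta(\epsilon_2)$ can be taken to vanish as $\epsilon_2 \to 0$.

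Finally I would combine the pieces through the elementary inequality $(1-u)^M \le e^{-uM}$. Applying it with $u = p(x^n)$ and using $M \ge 2^{nR}-1 \ge \tfrac{1}{2}2^{nR}$ for large $n$, the conditional failure probability is at most $\exp\br{-\tfrac{1}{2}2^{n(R - I(X;Y) - \delta(\epsilon_2))}}$, which tends to zero exactly when $R > I(X;Y) + \delta(\epsilon_2)$. Taking the expectation over $x^n$ and adding back the negligible atypical contribution then yields the claim. I expect the main obstacle to be the uniform lower bound on $p(x^n)$: one must track the three sources of slack ($\delta'$, $\delta''$, and the typicality tolerance) so that they collapse into a single $\delta(\epsilon_2)$ tending to zero, and check that the cardinality estimate derived from Lemma~\ref{lemma:condtyp} holds uniformly over the typical set rather than only in an averaged, probabilistic sense.
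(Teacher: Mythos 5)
Your proof is correct and follows the standard argument for the covering lemma given in the cited reference~\cite{el2011network}; the paper itself states this lemma without proof, deferring entirely to that source. The conditioning on a typical $x^n$, the factorization of the failure probability into $\br{1-p(x^n)}^M$, the uniform lower bound $p(x^n)\ge 2^{-n(I(X;Y)+\delta(\epsilon_2))}$ obtained from the conditional typicality lemma together with the per-sequence probability estimates, and the final application of $(1-u)^M\le e^{-uM}$ are exactly the steps of that reference's proof.
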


\begin{lemma}[Packing Lemma\cite{el2011network}]\label{lemma:packing}
    Given a joint distribution $P_{XY}$ and $X^n$ generated according to $P_{X^n}$ (not necessarily \ac{iid}), let $m\in [M]$ with $M$ the largest positive integer smaller than $2^{nR}$ for some $R\ge 0$ and $\{Y^n(m)\}_{m=1}^M$  be a set of sequences, each of which is emitted \ac{iid} by the marginal $P_Y$ so that $X^n$ is independent of $Y^n(m)$ for all $m$. Then, there exists a $\delta(\epsilon)$ with $\epsilon > 0$ that tends to zero as $\epsilon \rightarrow 0$ such that
    \begin{equation}
        \lim_{n\rightarrow \infty} \pr{\exists m\in [M]: (X^n, Y^n(m)) \in \typset{}{P_{XY}}} = 0
    \end{equation}
    if $R< I(X;Y) -\delta(\epsilon)$.
\end{lemma}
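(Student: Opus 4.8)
The plan is to establish the claim by a union bound over the $M$ codeword indices, which reduces the problem to controlling, for a single index $m$, the probability that an independently drawn $Y^n(m)$ is jointly typical with $X^n$. The crucial structural observation is that, since each $Y^n(m)$ is emitted \ac{iid} from the marginal $P_Y$ and is independent of $X^n$, the possibly non-\ac{iid} law $P_{X^n}$ will be irrelevant: it suffices to bound the per-codeword probability uniformly over every realization $x^n$ and then average.

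First I would fix an arbitrary sequence $x^n$ and estimate $\pr{(x^n, Y^n(m)) \in \typset{}{P_{XY}}}$ with $Y^n(m) \sim \prod_{i=1}^n P_Y$. Expanding this probability as a sum of $\prod_{i=1}^n P_Y(y_i)$ over all $y^n$ satisfying $(x^n, y^n) \in \typset{}{P_{XY}}$, I would invoke two standard consequences of $\epsilon$-typicality. Joint typicality of $(x^n, y^n)$ forces $y^n$ to be marginally typical with respect to $P_Y$, whence $\prod_{i=1}^n P_Y(y_i) \le 2^{-n(H(Y) - \delta_1(\epsilon))}$; and the number of sequences $y^n$ that are conditionally typical given $x^n$ is at most $2^{n(H(Y|X) + \delta_2(\epsilon))}$. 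Multiplying the cardinality bound by the per-sequence probability bound yields $\pr{(x^n, Y^n(m)) \in \typset{}{P_{XY}}} \le 2^{-n(I(X;Y) - \delta(\epsilon))}$, where $\delta(\epsilon) = \delta_1(\epsilon) + \delta_2(\epsilon)$ tends to $0$ as $\epsilon \to 0$. This bound holds for \emph{every} $x^n$: if $x^n$ is itself atypical, the restricting set is empty and the probability is simply zero.

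Next I would average this estimate over $X^n \sim P_{X^n}$; because $X^n$ and $Y^n(m)$ are independent, the same exponential bound survives the averaging irrespective of the law $P_{X^n}$. A union bound over $m \in [M]$ with $M < 2^{nR}$ then gives $\pr{\exists m \in [M]: (X^n, Y^n(m)) \in \typset{}{P_{XY}}} \le 2^{-n(I(X;Y) - \delta(\epsilon) - R)}$. Under the hypothesis $R < I(X;Y) - \delta(\epsilon)$ the exponent is strictly positive, so the right-hand side vanishes as $n \to \infty$, establishing the lemma.

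The main obstacle is the per-codeword estimate of the second step, since the whole argument rests on the two typicality facts -- the upper bound on the probability of a marginally typical sequence and the cardinality bound on the conditional typical set -- together with the verification that the aggregate slack $\delta(\epsilon)$ can be driven to zero. Once this estimate is secured, the union bound and the averaging step are routine, and it is precisely the independence of $X^n$ and $Y^n(m)$ that lets the proof disregard the exact, possibly non-\ac{iid}, distribution of $X^n$.
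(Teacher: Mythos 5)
Your proof is correct and follows the standard argument for the Packing Lemma from El Gamal and Kim, which is exactly the source the paper cites for this result without reproducing the proof: a union bound over the at most $2^{nR}$ indices, combined with the per-codeword joint-typicality estimate $\pr{(x^n, Y^n(m)) \in \typset{}{P_{XY}}} \le 2^{-n(I(X;Y)-\delta(\epsilon))}$ obtained by pairing the cardinality bound on the conditionally typical set with the probability bound on marginally typical sequences. Your observations that the bound holds uniformly in $x^n$ (vacuously for atypical $x^n$) and that independence of $X^n$ and $Y^n(m)$ makes the non-i.i.d.\ law $P_{X^n}$ irrelevant are precisely the points that justify the lemma's generality.
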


\begin{lemma}[Support Lemma\cite{el2011network,csiszar2011information}]\label{lemma:support}
    Let $\calX$ be a finite set, $\calU$ be an arbitrary set, and $\calP$ be a connected compact subset of probability mass functions on $\calX$. Given $P_{X|U}\in \calP$ conditioned on $U\in \calU$ and a set of $K$ continuous functions $\{g_k\}_{k=1}^K$ with $g_k: \calP\to \mathbb{R}$, then for every cumulative distribution $F_U$ of $U$, there exists a random variable $U'\sim P_{U'}$ defined on $\calU'$ with $|\calU'|\le K$ and a set of $P_{X|U'} \in \calP$ such that for all $k\in [K]$:
    \begin{equation}
        \int_{u\in \calU} g_k(P_{X|U}(\cdot|u)) dF_U(u) = \sum_{u' \in \calU'} g_k(P_{X|U'}(\cdot|u')) P_{U'}(u').
    \end{equation}
\end{lemma}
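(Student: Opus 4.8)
The plan is to reduce the statement to a finite-dimensional convexity fact and then invoke the Fenchel--Eggleston strengthening of Carath\'eodory's theorem, which is the device that yields the cardinality bound $|\calU'|\le K$ rather than $K+1$. First I would collect the $K$ functionals into a single vector-valued map $\Phi:\calP\to\mathbb{R}^K$, $\Phi(P)=(g_1(P),\dots,g_K(P))$. Since each $g_k$ is continuous and $\calP$ is connected and compact, the image $\mathcal{A}\triangleq\Phi(\calP)$ is a connected compact subset of $\mathbb{R}^K$; in finite dimensions its convex hull $\mathrm{conv}(\mathcal{A})$ is then also compact, hence closed.

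Second, I would identify the target point. Writing $v_k\triangleq\int_{u\in\calU} g_k(P_{X|U}(\cdot|u))\,dF_U(u)$, the vector $\mathbf{v}=(v_1,\dots,v_K)$ is the $F_U$-average of the map $u\mapsto\Phi(P_{X|U}(\cdot|u))$, all of whose values lie in $\mathcal{A}$. The key sub-step is to argue that such an average lands in $\mathrm{conv}(\mathcal{A})$: any point outside the closed convex set $\mathrm{conv}(\mathcal{A})$ can be strictly separated by a hyperplane, and integrating the associated linear functional against $dF_U$ would contradict the fact that every integrand value sits on the admissible side of that hyperplane. Hence $\mathbf{v}\in\mathrm{conv}(\mathcal{A})$.

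Third, I would apply Fenchel--Eggleston--Carath\'eodory: because $\mathcal{A}$ is connected and $\mathbf{v}\in\mathrm{conv}(\mathcal{A})\subseteq\mathbb{R}^K$, there exist points $\mathbf{a}_1,\dots,\mathbf{a}_K\in\mathcal{A}$ and weights $\lambda_1,\dots,\lambda_K\ge 0$ with $\sum_{j}\lambda_j=1$ such that $\mathbf{v}=\sum_{j=1}^K\lambda_j\mathbf{a}_j$. By definition of $\mathcal{A}$, each $\mathbf{a}_j=\Phi(P_j)$ for some $P_j\in\calP$. I would then take $\calU'=[K]$, define $P_{U'}(j)=\lambda_j$ and $P_{X|U'}(\cdot|j)=P_j\in\calP$, and read the identity coordinate-wise: $\sum_{u'\in\calU'}g_k(P_{X|U'}(\cdot|u'))\,P_{U'}(u')=\sum_j\lambda_j(\mathbf{a}_j)_k=v_k$ for every $k\in[K]$, which is exactly the claimed equality. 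Discarding any zero weights keeps $|\calU'|\le K$.

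The main obstacle is not the algebra but the two geometric facts carrying the argument: (i) that the averaged vector $\mathbf{v}$ genuinely lands in the convex hull of the image, which requires the separating-hyperplane/closedness argument rather than a naive ``mixture of finitely many points'' claim, since $F_U$ may be continuous over an arbitrary $\calU$; and (ii) the upgrade from $K+1$ to $K$ via connectedness of $\mathcal{A}$, which is precisely where the hypothesis that $\calP$ is connected enters. The continuity and compactness bookkeeping ensuring that $\mathrm{conv}(\mathcal{A})$ is closed is then routine once finite dimensionality is in place.
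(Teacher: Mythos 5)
Your proof is correct and is essentially the argument the paper relies on: the paper states this lemma without proof, citing El Gamal--Kim and Csisz\'ar--K\"orner, and the standard proof there is exactly your route --- map $\calP$ into $\mathbb{R}^K$ via the $K$ functionals, place the averaged vector in the (closed) convex hull of the connected compact image by a separating-hyperplane argument, and invoke Fenchel--Eggleston--Carath\'eodory to get a representation with at most $K$ atoms, each pulled back to some $P_{X|U'}(\cdot|u')\in\calP$. You correctly identify where connectedness of $\calP$ is used (the upgrade from $K+1$ to $K$ points) and why the naive finite-mixture argument does not suffice for a general $F_U$.
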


Given the random variables $S\in \calS$, $V\in\calV$, $W\in\calW$ following a joint distribution $P_{SVW}$, we define a function $h:\calV\times \calW \to \hat{\calS}$ to estimate $S$ based on $(V,W)$ with $\hat{\calS}$ the reconstruction set. Let the estimation error measured by a distortion function $d: \calS \times \hat{\calS} \rightarrow [0, \infty)$, the following lemmas provide important properties for the choice of $h$.

\begin{lemma}\label{lemma:optest}
    Given a joint distribution $P_{SVW}$, the minimum expected distortion $\expcs{}{d(S,\hS)}$ is achieved by the optimal estimator
    % the optimal estimator of $S$ based on observations $V=v$ and $W=w$ with respect to a distortion function $d(s, \hat{s})$ is given by
    \begin{equation}
    \begin{split}
        h^*(v,w) &= \argmin_{\hat{s}\in \hat{\calS}}\expc{}{d(S,\hat{s})|V=v,W=w}\\
        &= \argmin_{\hat{s}\in \hat{\calS}} \sum_{s\in \calS} P_{S|VW}(s|v,w)d(s,\hat{s}).\label{eq:optest}
    \end{split}
    \end{equation}
\end{lemma}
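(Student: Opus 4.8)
The plan is to reduce the minimization over the (infinite-dimensional) space of estimators $h:\calV\times\calW\to\hat{\calS}$ to a family of independent pointwise minimizations, one for each realization $(v,w)$, by invoking the tower property of expectation. First I would expand the objective for an arbitrary admissible estimator $h$ by conditioning on $(V,W)$:
\begin{equation}
    \expcs{}{d(S,h(V,W))} = \sum_{(v,w)} P_{VW}(v,w) \sum_{s\in\calS} P_{S|VW}(s|v,w)\, d(s,h(v,w)).
\end{equation}
The crucial structural observation is that the decision variable $h(v,w)$ enters \emph{only} the inner sum indexed by that same pair $(v,w)$; the choices made by $h$ at distinct arguments are mutually decoupled and subject to no joint constraint.

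Next I would argue that, because every weight $P_{VW}(v,w)$ is nonnegative, a function that minimizes each inner conditional expected distortion separately simultaneously minimizes the entire weighted sum. Concretely, for each fixed $(v,w)$ with $P_{VW}(v,w)>0$ I would define the pointwise minimizer
\begin{equation}
    h^*(v,w) = \argmin_{\hat{s}\in\hat{\calS}} \sum_{s\in\calS} P_{S|VW}(s|v,w)\, d(s,\hat{s}) = \argmin_{\hat{s}\in\hat{\calS}} \expc{}{d(S,\hat{s})\,|\,V=v,W=w},
\end{equation}
which is exactly the estimator in the statement. For any competing $h$ one then has $\sum_s P_{S|VW}(s|v,w)d(s,h^*(v,w)) \le \sum_s P_{S|VW}(s|v,w)d(s,h(v,w))$ for every $(v,w)$, and multiplying by $P_{VW}(v,w)\ge 0$ and summing preserves the inequality, yielding $\expcs{}{d(S,h^*(V,W))} \le \expcs{}{d(S,h(V,W))}$. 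On the null set $\{(v,w): P_{VW}(v,w)=0\}$ the value of $h$ is immaterial, so $h^*$ may be defined arbitrarily there without affecting optimality.

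This argument is genuinely elementary, so I do not anticipate a substantial obstacle; the only points requiring care are the attainment of the inner $\argmin$ and the measurability of $h^*$. Since $\calS$ is finite, each inner objective $\hat{s}\mapsto\sum_s P_{S|VW}(s|v,w)d(s,\hat{s})$ is a finite convex combination of the maps $\hat{s}\mapsto d(s,\hat{s})$, so the minimum is attained whenever $\hat{\calS}$ is finite, or more generally compact with $d$ continuous and bounded below (consistent with the standing assumptions under which Lemma~\ref{lemma:support} operates). I would remark that ties in the $\argmin$ can be broken by any fixed selection rule, which does not change the attained minimum distortion value.
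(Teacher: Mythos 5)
Your proof is correct and follows essentially the same route as the paper's: the paper's one-line argument is precisely to apply the law of total expectation conditioned on $(V,W)=(v,w)$ and minimize each inner conditional expectation separately, which is what you do in detail. Your added remarks on attainment of the $\argmin$, tie-breaking, and the null set are fine but not needed given the finite-alphabet setting.
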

\begin{proof}
    The proof follows by applying the law of total expectations on $\expcs{}{d(S,\hS)}$ conditioned on $V=v$ and $W=w$, and taking the minimum for each individual $(v,w)$.
\end{proof}

\begin{lemma}\cite{zhang2011joint,choudhuri2013causal}\label{lemma:markovest}
    Given a Markov chain $S-V-W$ and a distortion function $d(s, \hat{s})$, for every estimation function $h(v,w)$, there exists a $h'(v)$ such that
    \begin{equation}
        \expcs{}{d(S, h'(V))} \le \expcs{}{d(S, h(V,W))}.
    \end{equation}
\end{lemma}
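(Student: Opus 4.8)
The plan is to exploit the conditional independence encoded in the Markov chain $S-V-W$, namely $P_{S|VW}(s|v,w)=P_{S|V}(s|v)$, to argue that the extra observation $W$ carries no information about $S$ beyond what $V$ already provides, so it can be discarded without increasing the distortion. In effect I want to reduce the two-argument estimator $h(v,w)$ to the one-argument optimal estimator supplied by Lemma~\ref{lemma:optest}, specialized to the observation $V$ alone.

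First I would expand $\expcs{}{d(S,h(V,W))}$ via the law of total expectation, conditioning on $(V,W)=(v,w)$ exactly as in the proof of Lemma~\ref{lemma:optest}; this produces the inner sum $\sum_{s} P_{S|VW}(s|v,w)\,d(s,h(v,w))$. The decisive step is to substitute the Markov identity $P_{S|VW}(s|v,w)=P_{S|V}(s|v)$, after which this inner sum depends on $w$ \emph{only} through the point $h(v,w)$ at which $d$ is evaluated, and no longer through the weighting distribution. This is the structural observation that makes $W$ dispensable.

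Next I would define the candidate $V$-only estimator as the pointwise minimizer
\begin{equation}
    h'(v) \triangleq \argmin_{\hat{s}\in\hat{\calS}} \sum_{s\in\calS} P_{S|V}(s|v)\,d(s,\hat{s}),
\end{equation}
which is precisely the optimal estimator of Lemma~\ref{lemma:optest} when the receiver observes $V$ only. For each fixed $v$ and every $w$, optimality of $h'(v)$ gives $\sum_s P_{S|V}(s|v)\,d(s,h'(v)) \le \sum_s P_{S|V}(s|v)\,d(s,h(v,w))$. Averaging this inequality over $P_{VW}$, and noting that the left-hand side is constant in $w$ so that its average reduces to $\sum_v P_V(v)\sum_s P_{S|V}(s|v)\,d(s,h'(v)) = \expcs{}{d(S,h'(V))}$, yields the claimed bound $\expcs{}{d(S,h'(V))} \le \expcs{}{d(S,h(V,W))}$.

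The only genuine subtlety, and the point I would check carefully rather than the averaging itself, is well-definedness of the minimizer defining $h'$: the $\argmin$ must be attained. This is immediate when $\hat{\calS}$ is finite, and otherwise follows from compactness of $\hat{\calS}$ together with continuity of $d(s,\cdot)$, matching the standing assumptions already implicitly used whenever Lemma~\ref{lemma:optest} is invoked. No covering or packing machinery is required; the argument is a direct consequence of the Markov structure plus optimality of the conditional-mean-style estimator.
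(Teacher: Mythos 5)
Your argument is correct and is essentially the same as the one the paper defers to via its citation of \cite[Appendix A]{choudhuri2013causal}: use the Markov identity $P_{S|VW}=P_{S|V}$ to make the inner conditional expectation depend on $w$ only through the estimate, take $h'(v)$ as the pointwise minimizer from Lemma~\ref{lemma:optest}, and average over $P_{VW}$. Your remark on attainment of the $\argmin$ is a non-issue in the paper's finite-alphabet setting, so nothing further is needed.
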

\begin{proof}
    See \cite[Appendix A]{choudhuri2013causal}.
\end{proof}
Combining Lemmas~\ref{lemma:optest}~and~\ref{lemma:markovest}, we can conclude that the optimal estimator of $S$ can only depend on $V$ if $S-V-W$ forms a Markov chain and thus is viewed as a generalized version of Lemma 1 in\cite{ahmadipour2022information}. In the following, we use $h^*$ to indicate all optimal estimators defined in~\eqref{eq:optest}, irrespective of its domain and codomain if there is no confusion.

\section{Point-to-Point Channel}\label{sec:p2p}
\subsection{Channel Model}

A point-to-point \ac{isac} model is illustrated in Fig.~\ref{fig:p2pmodel}, where an encoder tries to communicate reliably with a decoder over a \ac{sddmc} $P_{Y|XS}$ and simultaneously assists the decoder in estimating the channel state $S\in \calS$. More specifically, the encoder encodes a message\footnote{Without loss of generality, we assume $2^{nR}$ is an integer.} $M\in \calM=[2^{nR}]$ to a $n$-sequence $X^n$, whose elements are from the finite input alphabet $\calX$, with the help of \ac{sit} $S_T\in \calS_T$ and the channel feedback if present. Depending on whether $S_{T}^{i-1}$, $S_{T}^i$ or $S_T^n$ is available at time step $i$, we categorize the system into \textit{strictly causal}, \textit{causal} and \textit{non-causal} cases. The channel state $S_i$ follows the \ac{iid} distribution $P_S$ at each time $i$. Unlike previous works, where it is assumed that $S_T=\varnothing$\cite{zhang2011joint} or $S_T=S$\cite{choudhuri2013causal}, we consider a generalized \ac{sit} generated from $P_{S_T|S}$. Upon receiving $Y^n$, the decoder jointly decodes the message $\hM \in \calM$ and estimates the channel state $\hS^n \in \hat{\calS}^n$ with the help of \ac{sir} $S_R^n\in \calS_R^n$, which follows $P_{S_R|XSY}$. Note that $S_R$ and $Y$ can be viewed as joint channel outputs according to $P_{YS_R|XS}=P_{Y|XS}P_{S_R|XSY}$, and is thus denoted as $Z=(Y,S_R)$ in the following analyses. The additional dependency of $S_R$ on $X$ makes it possible to model radar systems, where the transmit signals are also known to the receiver (we will see later). We further assume that $Z$ is statistically independent of $S_T$ conditioned on $(X,S)$. 
To analyze the impact of present and absent channel feedback under the same framework, we denote the feedback as
a function of the channel output $Y'=\phi(Z) \in \calY'$ to include the cases for both $Y' = Y$ and $Y'=\varnothing$. 
We also remark that there is no need to distinguish between (strictly) causal and non-causal \ac{sir} because the decoder can always wait until the end of reception to perform the joint task.

\begin{figure}[h]
    \centering
    \begin{tikzpicture}[node distance=3em and 4em]
      \node[] (in) {};
      \node[block, right=of in] (enc) {Encoder};
      \node[txtblock, right=of enc] (channel) {$P_{Z|XS}$ \\ $Z=(Y,S_R)$};
      \node[block, above=of channel](state){$P_{SS_T}$};
      \node[block, right=of channel] (dec) {Decoder};
      \node[right=of dec] (out) {};
      \node[coordinate, below=of channel] (p){};
    
      \draw[->] (in) -- node[above] {$M$} (enc);
      \draw[->] (enc) -- node[above] {$X_i$} (channel);
      \draw[->] (state) --node [left] {$S_i$} (channel);
      \draw[->] (state) -| node[anchor=south]{$S_{T}^{i-1}/S_{T}^i/S_T^n$} (enc);
      \draw[->] (channel) -- node[above] {$Z_i$} (dec);
      \draw[-] (channel) -- (p);
      \draw[->] (p) -| node[anchor=south west] {$Y'^{i-1}$}(enc);
      \draw[->] (dec) -- node[above] {$\hM, \hS^n$} (out);
    \end{tikzpicture}
    \caption{Channel model for point-to-point \ac{isac} system.}
    \label{fig:p2pmodel}
\end{figure}
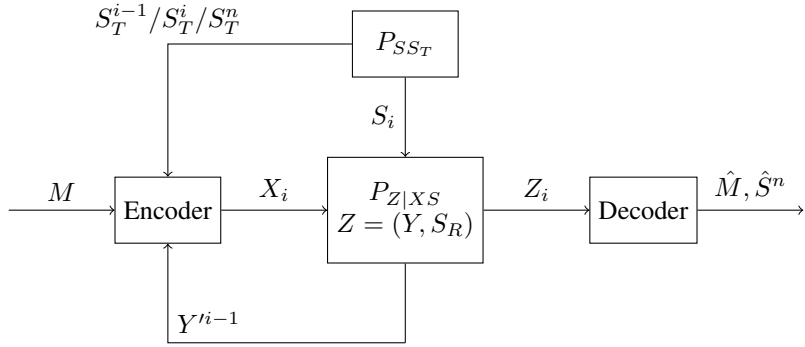

The model in Fig.~\ref{fig:p2pmodel} consists of the following components:
\begin{enumerate}
    \item An encoder $f_e^n = (f_{e,1}, f_{e,2}, ..., f_{e,n})$ with $f_{e,i}: \calM \times \calS_T' \times \calY'^{i-1} \rightarrow \calX$ with $\calS_T'= \calS_T^{i-1}$ for the strictly causal case, $\calS_T'= \calS_T^i$ for the causal case and $\calS_T'= \calS_T^n$ for the non-causal case;
    \item A message decoder $f_d: \calZ^n \rightarrow \calM$, where $\calZ=\calY\times \calS_R$;
    \item A state estimator $h^n=(h_1, h_2,...,h_n)$ with $h_i: \calZ^n \rightarrow \hat{\calS}$ for each state $S_i$.
\end{enumerate}
The message decoding error probability is given by
\begin{equation}
    P_e^{(n)} \triangleq \frac{1}{|\calM|}\sum_{m\in \calM}\mathrm{Pr}\{f_d(Z^n) \neq M | M=m\}.
\end{equation}
We further define the distortion function to measure the estimation error as $d: \calS \times \hat{\calS} \rightarrow [0, \infty)$, such that the expected distortion is
\begin{equation}
    D^{(n)}\triangleq \expcs{}{d^n(S^n, \hS^n)} = \frac{1}{n}\sum_{i=1}^n\expcs{}{d(S_i, h_i(Z^n))}.
\end{equation}

\subsection{Capacity-Distortion Function}

\begin{definition}\cite{zhang2011joint,ahmadipour2022information}\label{def:cdf}
    A rate-distortion pair $(R, D)$ is \textit{achievable} if there exists a sequence of $(2^{nR}, n)$ codes such that 
    % \begin{equation}
    \begin{align}
        \lim_{n\rightarrow \infty} P_e^{(n)} &= 0,\\
        \limsup_{n\rightarrow \infty} D^{(n)} &\le D.
        % &\lim_{n\rightarrow \infty} B^{(n)} \le B.
    \end{align}
    % \end{equation}
    A \textit{\acf{cd} function} $C(D)$ is the supremum of $R$ such that $(R, D)$ is achievable.
\end{definition}

\begin{lemma}\label{lemma:cdf}
    The \ac{cd} function $C(D)$ is a non-decreasing concave function in $D$.
\end{lemma}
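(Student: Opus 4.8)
The plan is to establish both properties directly from the operational Definition~\ref{def:cdf}, since no single-letter characterization is yet available at this point in the development. Monotonicity will follow from an inclusion of achievable rate sets, and concavity from a time-sharing (code-concatenation) argument that exploits the memorylessness of the channel and the \ac{iid} nature of the state.

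For the non-decreasing property, I would fix $D_1 \le D_2$ in the domain of $C$ and suppose $(R, D_1)$ is achievable. Then there is a sequence of $(2^{nR}, n)$ codes with $P_e^{(n)} \to 0$ and $\limsup_{n\to\infty} D^{(n)} \le D_1 \le D_2$, so the same code sequence also certifies that $(R, D_2)$ is achievable. Hence the set of rates achievable at $D_2$ contains that achievable at $D_1$, and taking suprema gives $C(D_2) \ge C(D_1)$.

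For concavity, fix $\lambda \in [0,1]$ and two achievable pairs $(R_1, D_1)$ and $(R_2, D_2)$ with witnessing code sequences. For each total block length $n$, I would build a concatenated code that runs the first code on the initial $n_1 = \lfloor \lambda n \rfloor$ channel uses and the second code on the remaining $n_2 = n - n_1$ uses, drawing an independent message from each sub-codebook and letting each per-symbol estimator $h_i$ depend only on the observations of the block containing index $i$. Since the state is \ac{iid} and the channel is memoryless, the two blocks are statistically decoupled, so: the combined rate is $(n_1 R_1 + n_2 R_2)/n \to \lambda R_1 + (1-\lambda) R_2$; the error probability is bounded by $P_{e,1}^{(n_1)} + P_{e,2}^{(n_2)} \to 0$ via the union bound; and the average distortion $\tfrac1n\!\left(n_1 D_1^{(n_1)} + n_2 D_2^{(n_2)}\right)$ has $\limsup \le \lambda D_1 + (1-\lambda) D_2$. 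Thus $\big(\lambda R_1 + (1-\lambda) R_2,\ \lambda D_1 + (1-\lambda) D_2\big)$ is achievable, giving $C(\lambda D_1 + (1-\lambda) D_2) \ge \lambda R_1 + (1-\lambda) R_2$; letting $R_1 \uparrow C(D_1)$ and $R_2 \uparrow C(D_2)$ over achievable pairs then yields the claimed concavity.

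The main obstacle will be bookkeeping rather than any deep difficulty: I must verify that the concatenation respects the causality level of the \ac{sit} and the feedback structure, i.e.\ that each sub-code can be run using only the \ac{sit} and feedback symbols belonging to its own block (strictly causal, causal, or non-causal within that block), and that restricting each estimator $h_i$ to its block's channel outputs is legitimate. The decoupling of the two blocks under the memoryless, \ac{iid}-state assumption is precisely what makes the rate, error, and distortion decompose additively, so confirming these independence properties carefully is the crux; the convergence $n_1/n \to \lambda$ handled via the floor function removes any need to separate rational from irrational $\lambda$.
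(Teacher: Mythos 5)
Your proposal is correct and follows essentially the same route as the paper: monotonicity via the observation that a code achieving distortion $D_1$ also certifies achievability at any $D_2 \ge D_1$ (the paper phrases this as a contradiction, you phrase it as set inclusion), and concavity via time-sharing, which the paper states in one sentence and you work out in full with the concatenated-code construction, union bound, and the causality bookkeeping. No gaps; your version is simply a more detailed rendering of the paper's sketch.
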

\begin{proof}
    The non-decreasing property can be proven by contradiction. Suppose $C(D_1) > C(D_2)$ if $D_1\le D_2$. This implies that one can achieve a lower distortion level $D_1$ at a higher rate $C(D_1)$, which should also be an achievable point under the higher distortion level $D_2$. This contradicts the fact that $C(D_2)$ is the supremum achievable rate at $D_2$. The concavity is verified by the rate-splitting method, in which we are able to achieve a distortion of $D=\lambda D_1 + (1-\lambda)D_2$ at a rate of $R = \lambda C(D_1) + (1-\lambda) C(D_2)$ through the time-sharing strategy for a factor $0\le \lambda \le 1$, such that it should follow $R \le C(D)$.
\end{proof}

Another criterion of interest in existing works\cite{ahmadipour2022information, zhang2011joint} is the input cost, which imposes additional constraints, such as the average input power, for the transmit signals $X$. Under this setting, Definition~\ref{def:cdf} is extended to the capacity-distortion-cost function. We remark that the additional input cost does not make any difference in the following analysis if we add the constraint on $P_X$ into the random variable set $\calP_D$ defined in \eqref{eq:pd}. Hence, for the sake of simplicity, it is omitted in this manuscript.

Depending on various causality levels of \ac{sit}, our goal in this section is to characterize the \ac{cd} functions for the three causality levels, i.e., strictly causal, causal, and non-causal. The respective \ac{cd} functions are denoted as
\begin{equation}
    \cdscf{D}, \ \cdcf{D}, \ \cdncf{D},
\end{equation}
and it is easy to verify that $\cdscf{D}\le \cdcf{D}\le \cdncf{D}$ for the same channel.

Let $\calP_D$ be the set of all random variables $(U,V,X,\hS)\in \calU\times \calV\times \calX\times \hat{\calS}$, 
such that $U-(X, S_T)-Z$ and $V-(U,S_T,Y')-Z$ form two Markov chains, i.e.,
\begin{equation}\label{eq:pd}
\begin{split}
    \calP_D \triangleq & \{(U,V,X, \hS) | P_{UVXSS_TZY'\hS}(u,v,x, s, s_T,z,y',\hat{s}) = P_S(s)P_{S_T|S}(s_T|s)P_{U|S_T}(u|s_T)P_{X|US_T}(x|u,s_T) \\
    &\quad \cdot  P_{Z|XS}(z|x,s)\mathbbm{1}\{y'=\phi(z)\}P_{V|US_TY'}(v|u,s_T,y')P_{\hS|UVZ}(\hat{s}|u,v,z); \expcs{}{d(S, \hS)}\le D\},
\end{split}
\end{equation}
where $\mathbbm{1}\{\cdot \}$ is the indicator function, and its three subsets $\pdscf{D}, \pdcf{D}, \pdncf{D} \subseteq \calP_D$ are
\begin{align}
    \pdscf{D} &\triangleq \{(U,V,X,\hS)\in\calP_D |P_{U|S_T}(u|s_T) = P_{U}(u), P_{X|US_T}(x|u,s_T) = P_{X|U}(x|u) \},\\
    \pdcf{D} &\triangleq \{(U,V,X,\hS)\in\calP_D |P_{U|S_T}(u|s_T) = P_{U}(u) \},\\
    \pdncf{D} &\triangleq \calP_D.
\end{align}
Accordingly, the random variable $U$ is independent of $S_T$ in $\pdscf{D}, \pdcf{D}$, and $X$ is independent of $S_T$ in $\pdscf{D}$. 
% Thus, according to Lemma~\ref{lemma:markovest}, the optimal estimator $h^*$ in $\pdscf{D}, \pdcf{D}$ can only depend on $(V,Z)$. 
Further, define the function of $\calP_D$
\begin{equation}\label{eq:p2prdfunc}
    R(\calP_D) \triangleq \max_{(U,V,X,\hS)\in \calP_D} I(U;Z) - I(U; S_T) - I(V; S_T|U,Z),
\end{equation}
which has the properties summarized in Proposition~\ref{prop:cdcausalconcave}.
\begin{prop}\label{prop:cdcausalconcave}\ 
    \begin{enumerate}
        \item $R(\calP_D)$ is a non-decreasing and concave function in $D$;
        \item The maximization in \eqref{eq:p2prdfunc} is achieved by the optimal estimator $h^*(u,v,z) = \argmin_{\hat{s}\in \hat{\calS}} \sum_{s\in\calS}P_{S|UVZ}(s|u,v,z)d(s,\hat{s})$, i.e., $P_{\hS|UVZ}(\hat{s}|u,v,z) = \mathbbm{1}\{\hat{s} = h^*(u,v,z)\}$;
        \item To evaluate $R(\calP_D)$, it is sufficient to make $X$ a deterministic function of $(U,S_T)$, i.e., $P_{X|US_T}$ only takes value $0$ or $1$. This encoding function is denoted by $f_e: \calU\times \calS_T\to \calX$ in the following;
        \item To evaluate $R(\calP_D)$, the cardinalities of $\calU$ and $\calV$ may be restricted to $|\calU| \le \min (|\calX||\calS_T|, |\calZ| )+1$ and $|\calV| \le |\calS_T|+1$.
        \item If $S_T=\varnothing$, $R(\calP_D)$ is evaluated by setting $U=X$ and $V=\varnothing$, and is given by
        \begin{equation}
           R(\calP_D|_{S_T=\varnothing} )= \max_{P_X:\expcs{}{d(S, h^*(X,Z))}\le D} I(X;Z).
        \end{equation}
    \end{enumerate}
    These also hold true for $R(\pdscf{D})$ and $R(\pdcf{D})$.
\end{prop}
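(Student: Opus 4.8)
The plan is to dispatch the five assertions one at a time, leaning on the two tools already in hand: the optimal-estimator Lemma~\ref{lemma:optest} and the Support Lemma~\ref{lemma:support}. The starting observation is that the objective $I(U;Z)-I(U;S_T)-I(V;S_T|U,Z)$ does not involve $\hS$ at all, so $\hS$ enters only through the distortion constraint. This settles the second claim at once: for any fixed joint law of $(S,U,V,Z)$, the $P_{\hS|UVZ}$ minimizing $\expcs{}{d(S,\hS)}$ is, by Lemma~\ref{lemma:optest} with observation $(U,V,Z)$, the deterministic $h^*$; since $h^*$ attains the smallest distortion it maximally relaxes $\expcs{}{d(S,\hS)}\le D$ and can be assumed without loss. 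The monotonicity half of the first claim is then immediate, as enlarging $D$ only enlarges the feasible set $\calP_D$, so the maximum cannot decrease.

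For concavity I would use time sharing with the switch absorbed into the auxiliary. Take optimizers for $D_1,D_2$, introduce $Q\in\{1,2\}$ with $\mathrm{Pr}\{Q=1\}=\lambda$ independent of $(S,S_T)$, and set $\tilde U=(U_Q,Q)$, $\tilde V=V_Q$, $\tilde X=X_Q$, $\tilde{\hS}=\hS_Q$; the scheme has distortion $\lambda D_1+(1-\lambda)D_2$, so the tuple lies in $\calP_{\lambda D_1+(1-\lambda)D_2}$. Using $Q\perp S_T$ gives $I(\tilde U;S_T)=\sum_q \mathrm{Pr}\{Q=q\}I(U_q;S_T|Q=q)$ and $I(\tilde V;S_T|\tilde U,Z)=\sum_q \mathrm{Pr}\{Q=q\}I(V_q;S_T|U_q,Z,Q=q)$, while $I(\tilde U;Z)\ge\sum_q \mathrm{Pr}\{Q=q\}I(U_q;Z|Q=q)$ because $I(Q;Z)\ge 0$. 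Subtracting yields objective $\ge \lambda R(\calP_{D_1})+(1-\lambda)R(\calP_{D_2})$, and feasibility bounds the left side by $R(\calP_{\lambda D_1+(1-\lambda)D_2})$. Keeping $Q$ inside $\tilde U$ is exactly what makes the conditional Markov chains hold and, in the subsets, preserves $\tilde U\perp S_T$ and $\tilde X\perp S_T\mid\tilde U$.

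For the deterministic-encoder claim I would absorb the encoding randomness into the auxiliaries. Write a stochastic $X$ as $X=f(U,S_T,W)$ with $W$ uniform and independent of everything, and replace $U,V$ by $U''=(U,W)$, $V''=(V,W)$; then $X$ is a deterministic function of $(U'',S_T)$. Since $W\perp(U,S_T)$ we keep $I(U'';S_T)=I(U;S_T)$, and a short computation shows the new objective equals the old one plus $I(W;Z|U)+I(V;W|U,Z)\ge 0$, where the crucial simplification $I(V;W|U,S_T,Z)=0$ holds because $V$ is generated from $(U,S_T,Y')$ with $Y'=\phi(Z)$ and is hence conditionally independent of $W$. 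The Markov conditions for $\calP_D$ are inherited and feeding the richer observation $(U'',V'',Z)$ to $h^*$ cannot raise the distortion, so restricting to deterministic encoders does not lower $R(\calP_D)$; when the law already lies in $\pdscf{D}$ or $\pdcf{D}$ one simply takes $f$ and $W$ not to depend on $S_T$, preserving the subset constraints. The fifth claim is then a direct specialization: with $S_T=\varnothing$ the two subtracted mutual informations vanish, the deterministic encoder lets us set $U=X$, $V=\varnothing$, and the objective collapses to $\max_{P_X}I(X;Z)$ evaluated at $h^*(X,Z)$.

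The main obstacle is the fourth claim, which is Support-Lemma bookkeeping~(Lemma~\ref{lemma:support}). To bound $|\calU|$ I would treat $U$ as the mixing variable and retain the continuous functionals fixing (i) either the input-side joint $P_{XS_T}$ or the output marginal $P_Z$ -- both pin down $P_Z$ and hence the marginal-entropy part of the objective, and the smaller alphabet produces the $\min(|\calX||\calS_T|,|\calZ|)$ -- plus (ii) the residual $U$-averaged part of the objective and (iii) the distortion, the last two giving the $+1$ after normalization. For $|\calV|$ I would apply the lemma conditioned on each $(U,Z)=(u,z)$ with $V$ mixing, retaining $P_{S_T\mid u,z}$, the averaged term $\sum_v P_{V\mid uz}(v)H(S_T\mid V{=}v,u,z)$, and the distortion, totalling $|\calS_T|+1$. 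The delicate point where I expect to spend the most care is verifying that the state-side marginals forced by the fixed $P_S,P_{S_T\mid S}$ are automatically respected by the reduced auxiliaries, so that exactly the stated count of functionals suffices, and that these reductions still keep the extra independence constraints defining $\pdscf{D}$ and $\pdcf{D}$.
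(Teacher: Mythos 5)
Your proposal is correct, and for items 1), 2), 4) and 5) it follows essentially the same route as the paper: monotonicity from $\calP_{D_1}\subseteq\calP_{D_2}$, concavity by time-sharing with the switch $Q$ absorbed into $U$, the estimator via Lemma~\ref{lemma:optest}, and the cardinality bounds via Lemma~\ref{lemma:support} with the same choice of functionals (the joint $P_{XS_T}$ or alternatively $P_Z$, the $U$-averaged objective, and the distortion for $\calU$; and, conditioned on each $(u,z)$, the marginal $P_{S_T|UZ}$, the conditional entropy $H(S_T|V,U{=}u,Z{=}z)$, and the distortion for $\calV$).

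The genuine divergence is item 3). The paper fixes all other distributions and argues that the objective is convex in $P_{X|US_T}$ — the only affected terms are $-H(U|Z)$ and $-H(V|U,Z)$, which are convex because the relevant marginals are linear in $P_{X|US_T}$ and conditional entropy is concave in them — so the maximum over the simplex is attained at an extreme point, i.e.\ a deterministic map, without changing $\calU$. You instead use a functional-representation argument: write $X=f(U,S_T,W)$ with fresh independent randomness $W$, fold $W$ into the auxiliaries as $U''=(U,W)$, $V''=(V,W)$, and show the objective changes by $I(W;Z|U)+I(V;W|U,Z)\ge 0$, the key cancellation $I(V;W|U,S_T,Z)=0$ holding because $V$ is generated from $(U,S_T,Y')$ with $Y'=\phi(Z)$. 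Your computation checks out. The paper's route keeps the auxiliary alphabet fixed and is purely local (a convexity statement about one coordinate); yours quantifies the gain from derandomizing and makes the preservation of the $\pdscf{D}/\pdcf{D}$ independence constraints transparent, at the cost of temporarily enlarging $\calU$ and $\calV$, which is harmless only because item 4) is applied afterwards. One small point worth making explicit in item 5): setting $V=\varnothing$ requires not just that $I(V;S_T|U,Z)$ vanishes but also that dropping $V$ from the estimator costs nothing; the paper justifies this via the Markov chain $S-(X,Z)-V$ and Lemma~\ref{lemma:markovest}, which your write-up uses implicitly when it asserts the estimator becomes $h^*(X,Z)$.
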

\begin{proof}
    See Appendix~\ref{app:p2p-prop}.
\end{proof}
\begin{theorem}\label{thm:cdcasual}
    The \ac{cd} functions $\cdscf{D}$, $\cdcf{D}$ and $\cdncf{D}$ for the channel model in Fig.~\ref{fig:p2pmodel} with strictly causal, causal, or non-causal \ac{sit} available satisfy
    \begin{align}
        \cdscf{D} &= R(\pdscf{D}),\label{eq:cdfunc}\\
        \cdcf{D} &= R(\pdcf{D}),\\
        \cdncf{D} &\ge R(\pdncf{D}).
    \end{align}
    Note that the term $I(U;S_T)$ becomes zero in $R(\pdscf{D})$ and $R(\pdcf{D})$.
\end{theorem}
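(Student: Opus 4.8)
The plan is to prove each relation by a matching achievability and converse, treating the non-causal case as achievability only. Since $\pdscf{D}\subseteq\pdcf{D}\subseteq\pdncf{D}$, a single block-Markov scheme establishes all three achievability bounds, so I would present it at the level of $\pdncf{D}$ and then specialize. Fix a distribution in $\calP_D$ with $X=f_e(U,S_T)$ deterministic and $\hS=h^*(U,V,Z)$, which is justified by parts (2) and (3) of Proposition~\ref{prop:cdcausalconcave}. The scheme superposes two codes over $B$ blocks: a random-binning (Gelfand--Pinsker) code carrying the message on $U$ against the state $S_T$, and a Wyner--Ziv state-description code on $V$ whose bin index is relayed in the following block, from which the receiver --- together with its decoded $U$ and its channel output $Z$ --- reconstructs the state estimate.

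For the rate accounting I would index the $u^n$-codebook by a triple: a message index of rate $R$, a state-description bin index of rate $R_W$, and a Gelfand--Pinsker index of rate $R_U$. The covering lemma (Lemma~\ref{lemma:covering}) requires $R_U>I(U;S_T)$ to find a $u^n$ jointly typical with $s_T^n$, while the packing lemma (Lemma~\ref{lemma:packing}) requires $R+R_W+R_U<I(U;Z)$ to decode $u^n$ from $z^n$; eliminating $R_U$ gives $R+R_W<I(U;Z)-I(U;S_T)$. The description layer applies the covering lemma at rate $I(V;S_T|U)$ and Wyner--Ziv binning, so that $(U,Z)$ suffice to recover $v^n$ with bin rate $R_W=I(V;S_T|U,Z)$, using the Markov chain $V-(U,S_T,Y')-Z$ built into $\calP_D$. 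Substituting yields $R<I(U;Z)-I(U;S_T)-I(V;S_T|U,Z)$, i.e. $R(\pdncf{D})$ up to terms vanishing as $B\to\infty$. The conditional typicality lemma (Lemma~\ref{lemma:condtyp}) makes $(U^n,V^n,S^n,Z^n)$ jointly typical with high probability, so the empirical distortion concentrates at $\expcs{}{d(S,h^*(U,V,Z))}\le D$. For the causal case the Shannon-strategy structure forces $U\perp S_T$, so $I(U;S_T)=0$ and no Gelfand--Pinsker binning is needed; for the strictly causal case $X$ is in addition a function of $U$ alone, matching $\pdscf{D}$.

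For the converse (causal and strictly causal only) I would start from Fano's inequality, $nR\le I(M;Z^n)+n\epsilon_n$, and single-letterize with the identification $U_i=(M,S_T^{i-1},Z^{i-1})$ and $V_i=Z_{i+1}^n$. This choice is deliberate: $X_i$ is a deterministic function of $(M,S_T^{i-1},{Y'}^{i-1})$, hence of $U_i$ in the strictly causal case (placing the induced distribution in $\pdscf{D}$) and of $(U_i,S_{T,i})$ in the causal case (placing it in $\pdcf{D}$), while $U_i\perp S_{T,i}$ follows from the i.i.d. state and $M\perp S_T^n$. Since $(U_i,V_i,Z_i)$ determines the whole $Z^n$, the estimate $\hS_i=h_i(Z^n)$ is a function of $(U_i,V_i,Z_i)$, so Lemma~\ref{lemma:optest} lets me replace it by $h^*$ without increasing distortion. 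I would then rewrite $I(M;Z^n)$ as $\sum_i\brsq{I(U_i;Z_i)-I(V_i;S_{T,i}|U_i,Z_i)}$, introduce a time-sharing variable $Q\sim\mathrm{Unif}[n]$, set $U=(U_Q,Q)$ and $V=(V_Q,Q)$, and invoke the cardinality bounds of Proposition~\ref{prop:cdcausalconcave} (via the support Lemma~\ref{lemma:support}) to conclude $R\le R(\pdscf{D})$, respectively $R\le R(\pdcf{D})$.

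The main obstacle is the converse's single-letterization of the state-description penalty: producing the exact term $-I(V_i;S_{T,i}|U_i,Z_i)$ and verifying the Markov chain $V_i-(U_i,S_{T,i},Y_i')-Z_i$. This requires a Wyner--Ziv-type rewriting together with the Csisz\'ar sum identity to trade the future block $Z_{i+1}^n$ against the past $Z^{i-1}$, leaning on the channel's memorylessness and the conditional independence $Z\perp S_T|(X,S)$ assumed in the model. A secondary delicacy is confirming that the induced distribution lands in the correct subset --- $X$ a function of $U$ for $\pdscf{D}$ versus a function of $(U,S_T)$ for $\pdcf{D}$ --- which hinges entirely on whether the encoder at time $i$ may access $S_{T,i}$; the absence of a matching converse for $\cdncf{D}$ reflects precisely the failure of this clean single-letterization once $U$ is allowed to depend non-causally on $S_T$.
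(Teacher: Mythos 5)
Your overall architecture (block-Markov achievability with a Wyner--Ziv state description relayed one block later, plus a Fano-based converse with $V_i=Z_{i+1}^n$) matches the paper's, but there are two concrete problems. First, a smaller one in the achievability accounting: in $\calP_D$ the description variable $V$ is generated from $P_{V|US_TY'}$, so the encoder must cover the \emph{pair} $(S_T^n,Y'^n)$, requiring a covering rate exceeding $I(V;S_T,Y'|U)$, not $I(V;S_T|U)$ as you state. Your net bin rate $I(V;S_T|U,Z)$ would follow from $I(V;S_T|U)-I(V;Z|U)$ only under the Markov chain $V-(U,S_T)-Z$, which fails when feedback is present (that is the whole point of letting $V$ depend on $Y'$). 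The correct chain is $I(V;S_T,Y'|U)-I(V;Z|U)=I(V;S_T,Y'|U,Z)=I(V;S_T|U,Z)$, where the last equality uses $Y'=\phi(Z)$. Your version happens to give the right final expression but only coincidentally, and only the corrected covering step justifies it.

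The more serious gap is in the converse, and you partly flag it yourself. The term $-I(V_i;S_{T,i}\,|\,U_i,Z_i)$ does not come out of $I(M;Z^n)$ by a Csisz\'ar-sum rearrangement; the paper never invokes that identity. Instead, the mechanism is to add $\sum_i I(\hS_i;S_{T,i}\,|\,W_i)$ (with $W_i=(M,S_T^{i-1})$) to both sides of Fano's inequality, use the fact that $\hS_i$ is a function of $Z^n$ to write $I(Z^n;S_{T,i}|W_i)=I(\hS_i;S_{T,i}|W_i)+I(Z^n;S_{T,i}|W_i,\hS_i)$, expand $I(M;Z^n)+I(Z^n;S_T^n|M)=I(M,S_T^n;Z^n)$, single-letterize, and only at the very end cancel the $I(\hS_i;S_{T,i}|W_i)$ terms from both sides. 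Without this device the Wyner--Ziv penalty simply does not appear, so your converse is missing its central step rather than merely deferring a routine verification. Two further remarks: your choice $U_i=(M,S_T^{i-1},Z^{i-1})$ (versus the paper's $(M,S_T^{i-1},Y'^{i-1})$) does still satisfy the required independence from $S_{T,i}$ and the factorization of $\calP_D$, and it makes the replacement of $h_i(Z^n)$ by $h^*(U_i,V_i,Z_i)$ immediate, so that deviation is harmless; and the final passage from the per-symbol bound to $\cdcf{D}$ also needs the concavity and monotonicity of the \ac{cd} function (Proposition~\ref{prop:cdcausalconcave}) applied to $\frac1n\sum_i \expcs{}{d(S_i,\hS_i)}$, which your sketch leaves implicit.
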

\begin{proof}
    See Appendix~\ref{app:p2p-causal}.
\end{proof}

\begin{remark}
% Since $U$ is independent of $S_T$ in $\cdscf{D}$ and $\cdcf{D}$, the term $I(U; S_T)$ in $R(\pdscf{D})$ and $R(\pdcf{D})$ becomes zero and can be dropped. Moreover, 
% In $\pdscf{D}$, $X$ is only a function of $U$, so the \ac{cd} function 
$\cdscf{D}$ can also be written as 
    \begin{equation}
        \cdscf{D} =R(\pdscf{D}) = \max_{(X,V,X,\hS)\in \pdscf{D}} I(X; Z) - I(V; S_T|X,Z)
    \end{equation}
by replacing $U$ by $X$. 
% Additionally, it's noteworthy that the presence of feedback $Y'$ can improve the joint task performance by providing more degrees of freedom to design $V$, despite the fact that it doesn't increase the channel capacity of a \ac{sddmc}\cite{cover1999elements}. But such a contribution will be absent if $S_T=\varnothing$ because the decoder already has knowledge about $Z$, and there is no need to encode $Y'$ even if it carries information about $S$.
\end{remark}

\subsection{Discussion}\label{sec:p2pdisc}

The \ac{cd} function can be understood as follows. It is known in general that a maximum rate of $I(U;Z) - I(U;S_T)$ can be achieved for reliable data transmission\cite{jafar2006capacity}. Due to the additional state estimation task, the encoder splits the rate of $I(V; S_T|U,Z)$ for encoding the side information. This rate corresponds to the Wyner-Ziv function\cite{wyner1976rate} with source $(S_T,U,Y')$ and \ac{sir} $(U,Z)$. 
As described in Appendix~\ref{app:p2p-causal}, this rate-splitting approach combined with the block Markov coding scheme is optimal for the strictly causal and causal cases, but only provides a lower bound for the noncausal case. 
% This is because, in the block Markov coding scheme, $S_T$ at the previous block is encoded, which is independent of the current channel due to its memoryless nature. 
% If $S_T$ is non-causally available at the current block, we can directly encode them instead of the previous ones. But since it also controls the current channel, an independent rate-splitting method becomes suboptimal. In other words, 
The reason is that the random variable $U$ already contains information about $S_T$ and thus about $S$ for the noncausal case (recall the random binning coding scheme) even if no state estimation task is required. Thus, one can expect that a rate lower than $I(V; S_T|U,Z)$ is needed for the required distortion. However, an optimal coding approach for such a case is still an open problem and several upper bounds for special cases can be found in\cite{choudhuri2012non, bross2017rate, kim2008state}. Additionally, it's noteworthy that the presence of feedback can improve the joint task performance by providing more degrees of freedom to design $V$, despite the fact that it doesn't increase the channel capacity of a \ac{sddmc}\cite{cover1999elements}. Nevertheless, such a contribution will be absent if $S_T=\varnothing$ because the decoder already knows $Z$, such that there is no need to encode $Y'$ even if it carries information about $S$.

\subsection{Special Cases}\label{sec:p2pspecial}

In this section, we present two special cases, the communication-only mode and the sensing-only mode, which correspond to two extreme points of the \ac{cd} function. With a particular focus on the \ac{isac} applications, we demonstrate the versatility and flexibility of the proposed model by showcasing two types of radar systems, namely monostatic and bistatic radar. This also provides critical components for the subsequent analysis of more complicated scenarios involving both communication and radar systems, as will be discussed in the broadcast channel.

\subsubsection{Communication-only Mode}
This scheme corresponds to the classical \ac{sddmc} without the state estimation task. We denote the distortion under this scheme as $D = \infty$ and the corresponding \ac{cd} functions as $\cdscf{\infty}$, $\cdcf{\infty}$, and $\cdncf{\infty}$. Without the constraint on the state estimation, the auxiliary random variable $V$ can be eliminated, resulting in
\begin{align}
    \cdscf{\infty} &= R(\pdscf{\infty}) = \max_{P_{X}} I(X; Z),\\
    \cdcf{\infty} &= R(\pdcf{\infty}) = \max_{P_U, f_e} I(U; Z),\\
    \cdncf{\infty} &= R(\pdncf{\infty}) = \max_{P_{U|S_T}, f_e} I(U; Z) - I(U; S_T).
\end{align}

\subsubsection{Sensing-only Mode}
In this mode, the encoder only encodes $(S_T, Y')$ to achieve the minimum distortion. Let the minimum achievable distortion be $D_{\min}^{\mathrm{SC}}$, $D_{\min}^{\mathrm{C}}$ and $D_{\min}^{\mathrm{NC}}$ for each individual case, then it can be shown that
\begin{align}
    D_{\min}^{\mathrm{SC}} &= \min_{P_X, P_{V|XS_TY'}}\expcs{}{d(S,h^*(X,V,Z))}, \quad \mathrm{s.\ t.}\  I(X;Z) - I(V; S_T|X,Z)\ge 0,\label{eq:dminsc}\\
    D_{\min}^{\mathrm{C}} &= \min_{P_U, f_e, P_{V|XS_TY'}}\expcs{}{d(S,h^*(U,V,Z))}, \quad \mathrm{s.\ t.}\  I(U;Z) - I(V; S_T|U,Z)\ge 0,\\
    D_{\min}^{\mathrm{NC}} &\le D_{\min}^{\mathrm{NC, UB}} = \min_{P_{U|S_T}, f_e, P_{V|XS_TY'}}\expcs{}{d(S,h^*(U, V,Z))}, \quad \mathrm{s.\ t.}\  I(U;Z) - I(U;S_T) - I(V; S_T|U,Z)\ge 0.
\end{align}
Combining two extreme points at both modes with Lemma~\ref{lemma:cdf} and Proposition~\ref{prop:cdcausalconcave}, a graphical example of the \ac{cd} functions for all cases is depicted in Fig.~\ref{fig:causalcd}. Furthermore, by connecting the two extreme points of both modes with a straight line, we obtain the scheme achieved by the time-sharing strategy. Due to the non-decreasing and concave properties of the \ac{cd} function, the time-sharing scheme cannot be optimal, highlighting the advantages of a joint signal design.

In most \ac{isac} applications, the systems also act as a radar to detect the targets with the help of pre-designed transmit signals. Depending on the positioning of transceivers, radar systems are broadly categorized as monostatic radar and bistatic radar\cite{liu2022survey, liu2018mu}, where the former estimates the target state via echo signals, while the transmitter and receiver in the latter are located separately. 
Existing works treat both radar types as different models\cite{ahmadipour2022information, chang2023rate}, but we recognize their inherent equivalence because the monostatic radar can be conceptually divided into a transmitter and a ``virtually" separated receiver, where the state estimation is performed. In turn, the echo signals take the role of both channel output and feedback.
% Moreover, the known transmit signals are treated as the \ac{sir} $S_R$ thanks to the generalized meaning of $Z$.
By further treating the known transmit signals as the \ac{sir} $S_R$,
% To incorporate radar systems into the channel model in Fig.~\ref{fig:p2pmodel}, we treat the known transmit signals as the \ac{sir} $S_R$ thanks to the generalized meaning of $Z$. 
we identify the following two radar operating modes, each with additional settings under the sensing-only mode:
\begin{itemize}
    \item \textbf{Monostatic Radar Mode}: $Z = (Y, X)$, with the echo signals being modeled by the feedback $Y'=Y$;
    \item \textbf{Bistatic Radar Mode}: $Z = (Y, X)$ without feedback (or echo signals) at the transmitter, i.e., $Y' = \varnothing$.
\end{itemize}

\begin{figure}[h]
  \centering
  \begin{tikzpicture}
    \def\Dmin{1.5}
    \def\DminSC{2}
    \def\DminNC{0.3}
    \def\DminNCUB{0.9}
    
    \def\Cinf{3}
    \def\CinfSC{2.6}
    \def\CinfNC{3.4}
    
    \begin{axis}[
      xlabel={Distortion},
      ylabel={Rate},
      xlabel style = {at={(axis description cs:1,0)},anchor=north east},
      ylabel style = {at={(axis description cs:0,1)},anchor=north east,rotate=-90},
      xmin=0, xmax=4,
      ymin=0, ymax=4,
      axis lines=left,
      xtick={\Dmin, \DminSC, \DminNC, \DminNCUB},
      ytick={\Cinf, \CinfSC, \CinfNC},
      xticklabels={$D^{\mathrm{C}}_{\min}$, $D^{\mathrm{SC}}_{\min}$, $D^{\mathrm{NC}}_{\min}$, $D^{\mathrm{NC,UB}}_{\min}$},
      yticklabels={$C^{\mathrm{C}}(\infty)$, $C^{\mathrm{SC}}(\infty)$, $C^{\mathrm{NC}}(\infty)$},
      legend pos=south east,
      legend style={draw=none},
    ]
    \draw[thick,green](\DminSC,0).. controls (\DminSC+0.3, \CinfSC-0.2) .. (4,\CinfSC-0.08);
    \addlegendimage{green, thick};
    \addlegendentry{$\cdscf{D}$};
    
    \draw[thick,blue](\Dmin,0).. controls (\Dmin+0.3, \Cinf-0.2) .. (4,\Cinf-0.08);
    \addlegendimage{blue, thick};
    \addlegendentry{$\cdcf{D}$};
    
    \draw[thick,red](\DminNCUB,0).. controls (\DminNCUB+0.3, \CinfNC-0.2) .. (4,\CinfNC-0.08);
    \addlegendimage{red, thick};
    \addlegendentry{$R(\pdncf{D})$};
    
    \draw[thick,black](\DminNC,0).. controls (\DminNC+0.3, \CinfNC-0.2) .. (4,\CinfNC-0.05);
    \addlegendimage{black, thick};
    \addlegendentry{$\cdncf{D}$};
    
    \draw[dashed, red] (0,\Cinf) -- (4,\Cinf);
    \draw[dashed, red] (0,\CinfSC) -- (4,\CinfSC);
    \draw[dashed, red] (0,\CinfNC) -- (4,\CinfNC);

    \end{axis}
  \end{tikzpicture}
  \caption{An example of \ac{cd} functions for all scenarios.}
  \label{fig:causalcd}
\end{figure}
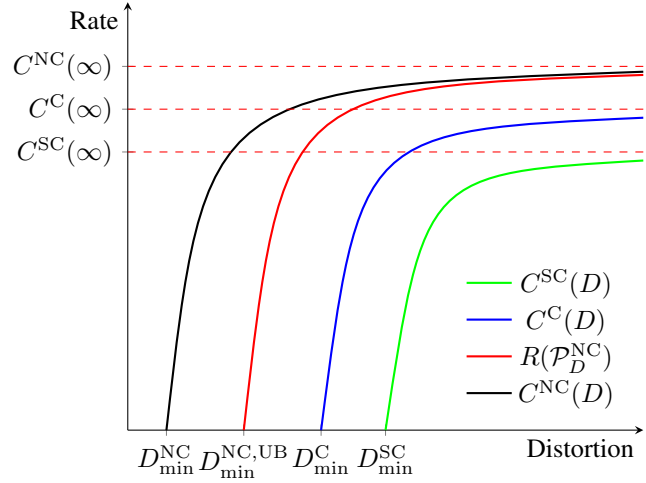

For a better understanding, we consider a simple radar system (mono- or bistatic) with $S_T = \varnothing$. We note that in this case \eqref{eq:dminsc} becomes an unconstrained optimization problem. 
% Moreover, the channel output $Z = (Y,X)$ makes $S - Z - V$ a Markov chain and thus the optimal estimator $h^*$ can be independent of $V$. 
From Proposition~\ref{prop:cdcausalconcave} we consequently obtain the minimum achievable distortion as
\begin{equation}
\begin{split}
    \min_{P_X}\expcs{}{d(S,h^*(X,Y))}&=\min_{P_X}\expc{}{\expc{}{d(S,h^*(X,Y))} | X}\\
    &= \min_{x\in \calX} \expc{}{d(S,h^*(x,Y))},
\end{split}
\end{equation}
implying that the minimum distortion is attained by a deterministic transmit signal. This result, although elementary, offers valuable insights from an information theoretic perspective for radar systems, thus underscoring the flexibility of merging and investigating communication and sensing systems within the same framework. Furthermore, both radar systems have the same minimum distortion, meaning that the echo signal, which is used to design the radar signals in real time, cannot improve the system's performance. This phenomenon arises due to the assumption of a memoryless system, such that the past channel state doesn't affect the current one. The analysis of channels with memory is left as future work, but we anticipate that a unified information-theoretic framework for both communication and radar systems can be found using a similar underlying principle.

\section{Broadcast Channel}\label{sec:braodcast}
\subsection{Channel Model}

A two-user \ac{sddmbc} model is shown in Fig.~\ref{fig:bcmodel}, where an encoder maps three messages $M_0\in \calM_0 = [2^{nR_0}]$, $M_1\in  \calM_1= [2^{nR_1}]$ and $M_2\in  \calM_2= [2^{nR_2}]$ into $X^n$, corresponding to one public message and two private messages dedicated to user 1 and 2, respectively. The state-dependent channel is characterized by $P_{Z_1Z_2|XS}$. The state $S$ and \ac{sit} $S_T$ are defined in the same way as the point-to-point case, and the channel outputs $(Z_1, Z_2)$ are assumed to be independent of $S_T$ conditioned on $(X,S)$. Decoder $k\in\{1,2\}$ receives the channel output $Z_k^n\in \calZ_k^n$, simultaneously decodes the messages $\hM_0(k)$, $\hM_k$ and estimates the channel state $\hS_{k}^n\in \hat{\calS}_k^n$. The \ac{sddmbc} feedback $Y'=(Y'_1, Y_2')\in\calY'$ is expressed as functions of $(Z_1, Z_2)$ with $Y_1' = \phi_1(Z_1)$ and $Y_2'=\phi_2(Z_2)$ to include the possible cases $\{(Y_1, Y_2), (Y_1,\varnothing), (\varnothing,Y_2), \varnothing\}$. The system consists of the following components:
\begin{enumerate}
    \item An encoder $g_e^n$ with $g_{e,i}: \calM_0 \times \calM_1 \times \calM_2 \times \calS_T'\times\calY'^{i-1}\to \calX $, where $ \calS_T'\in \{\calS_T^{i-1}, \calS_T^i, \calS_T^n\}$;
    \item Message decoders $g_{d,k}: \calZ_k^n \to \calM_0\times \calM_k$ at decoder $k$ for $k=\{1,2\}$;
    \item State estimators $h^n_k$ with $h_{k,i}: \calZ_k^n \to \hat{\calS}_k$ at decoder $k$ for $k=\{1,2\}$.
\end{enumerate}

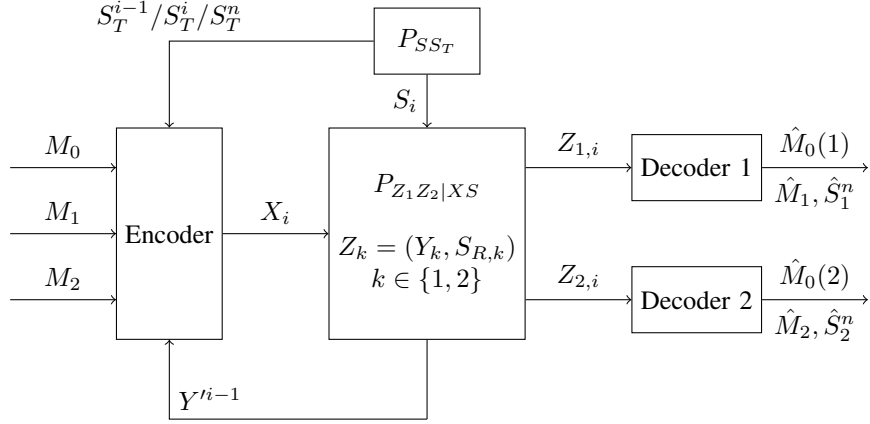
\begin{figure}[h]
    \centering
    \begin{tikzpicture}[node distance=4em and 4em]
      \def\msgsps{2.5em}
      \node[bigblock] (enc) {Encoder};
      \node[left=of enc, yshift=\msgsps] (msg0) {};
      \node[left=of enc] (msg1) {};
      \node[left=of enc, yshift=-\msgsps] (msg2) {};
      \node[txtbigblock, right=of enc] (channel) {$P_{Z_1Z_2|XS}$\\ \\ $Z_k=(Y_k,S_{R,k})$\\ $k\in \{1,2\}$};
      \node[block, above=2em of channel](state){$P_{SS_T}$};
      \node[block, right=of channel, yshift=\msgsps] (dec1) {Decoder 1};
      \node[block, right=of channel, yshift=-\msgsps] (dec2) {Decoder 2};
      \node[right=4em of dec1] (out1) {};
      \node[right=4em of dec2] (out2) {};
      \node[coordinate, below=3em of channel] (p){};
    
      \draw[->] (msg0) -- node[above] {$M_0$} ([yshift=\msgsps]enc.west);
      \draw[->] (msg1) -- node[above] {$M_1$} (enc);
      \draw[->] (msg2) -- node[above] {$M_2$} ([yshift=-\msgsps]enc.west);
      \draw[->] (enc) -- node[above] {$X_i$} (channel);
      \draw[->] (state) --node [left] {$S_i$} (channel);
      \draw[->] (state) -| node[anchor=south]{$S_{T}^{i-1}/S_{T}^i/S_T^n$} (enc);
      \draw[->] ([yshift=\msgsps]channel.east) -- node[above] {$Z_{1,i}$} (dec1);
      \draw[->] ([yshift=-\msgsps]channel.east) -- node[above] {$Z_{2,i}$} (dec2);
      \draw[-] (channel) -- (p);
      \draw[->] (p) -| node[anchor=south west] {$Y'^{i-1}$}(enc);
      \draw[->] (dec1) -- node[above] {$\hM_0(1)$} node[below] {$\hM_1, \hS_1^n$} (out1);
      \draw[->] (dec2) -- node[above] {$\hM_0(2)$} node[below] {$\hM_2, \hS_2^n$} (out2);
    \end{tikzpicture}
    \caption{The \ac{sddmbc} model for \ac{isac} system.}
    \label{fig:bcmodel}
\end{figure}

We define the message decoding error probability for the \ac{sddmbc} as
\begin{equation}
    P_e^{(n)} \triangleq \frac{1}{|\calM_0||\calM_1|\calM_2|}\sum_{\substack{m_0\in \calM_0,\\ m_1\in\calM_1,\\ m_2\in\calM_2}}\pr{g_{d,1}(Z_1^n)\neq (M_0, M_1), g_{d,2}(Z_2^n)\neq (M_0, M_2) | M_0=m_0, M_1=m_1, M_2=m_2}
\end{equation}
and the state estimation distortion at decoder $k$ as
\begin{equation}
    D^{(n)}_k \triangleq \expcs{}{d_k^n(S^n, \hS_k^n)} = \frac{1}{n}\sum_{i=1}^n \expcs{}{d_k(S_i, h_{k,i}(Z_k^n))}
\end{equation}
with $d_k: \calS \times \hat{\calS}_k \to [0, \infty)$ being the distortion function at decoder $k$. We emphasize that $d_k$ is not necessarily identical for different users. This is particularly the case when the two decoders are only interested in estimating different parts of $S$. For example, let $S=(S_1, S_2)$, decoder 1 only concerns about $S_1$ and decoder 2 only about $S_2$. This case can be simply handled by properly designing $d_k$ for both decoders such that it only evaluates the part of interest in $S$.

In a similar way, we can define the achievability of the \ac{cd} tuple and the corresponding \ac{cd} region for the two-user \ac{sddmbc}, which can be easily applied to settings with more than two users.
\begin{definition}
    A tuple of rates and distortions $(R_0, R_1, R_2, D_1, D_2)$ is said to be achievable if there exists a sequence of $(2^{nR_0}, 2^{nR_1}, 2^{nR_2}, n)$ code such that
    \begin{align}
            \lim_{n\rightarrow \infty} P_e^{(n)} &= 0,\\
            \limsup_{n\rightarrow \infty} D_1^{(n)} &\le D_1,\\
            \limsup_{n\rightarrow \infty} D_2^{(n)} &\le D_2.
    \end{align}
    The \ac{cd} region $\calC(D_1, D_2)$ is the closure of the set of $(R_0, R_1, R_2)$, such that $(R_0, R_1, R_2, D_1, D_2)$ is achievable.
\end{definition}

\begin{lemma}
    A \ac{cd} region $\calC(D_1,D_2)$ has the following properties:
    \begin{enumerate}
        \item If $D_1\le D_1'$ and $D_2 \le D_2'$, then $\calC(D_1,D_2) \subseteq \calC(D_1',D_2')$;
        \item Given two regions $(R_0', R_1', R_2') \in \calC(D_1',D_2')$ and $(R_0'', R_1'', R_2'') \in \calC(D_1'',D_2'')$, let $0\le \lambda\le 1$, $R_0 = \lambda R_0' + (1-\lambda) R_0''$, $R_1 = \lambda R_1'+(1-\lambda) R_1''$, $R_2=\lambda R_2'+ (1-\lambda) R_2''$, and $D_1=\lambda D_1' + (1-\lambda) D_1''$, $D_2 = \lambda D_2' + (1-\lambda) D_2''$, then $(R_0, R_1, R_2) \in \calC(D_1,D_2)$;
        \item $\calC(D_1,D_2)$ is a convex set.
    \end{enumerate}
\end{lemma}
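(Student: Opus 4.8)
The plan is to prove the three properties in order, observing that properties (1) and (3) are essentially immediate consequences of the definition and of a single time-sharing construction that proves property (2); hence the bulk of the work lies in (2).

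First I would dispatch the monotonicity property (1) directly from the definition of achievability. If $(R_0, R_1, R_2) \in \calC(D_1, D_2)$, there is a sequence of $(2^{nR_0}, 2^{nR_1}, 2^{nR_2}, n)$ codes with $\lim_{n\to\infty} P_e^{(n)} = 0$ and $\limsup_{n\to\infty} D_k^{(n)} \le D_k$ for $k\in\{1,2\}$. Since $D_k \le D_k'$ by hypothesis, the very same sequence of codes also satisfies $\limsup_{n\to\infty} D_k^{(n)} \le D_k'$, so $(R_0, R_1, R_2, D_1', D_2')$ is achievable and thus $(R_0, R_1, R_2) \in \calC(D_1', D_2')$. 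As taking closures preserves set inclusion, this yields $\calC(D_1, D_2) \subseteq \calC(D_1', D_2')$.

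The core of the argument is property (2), which I would establish by a time-sharing construction analogous to the one used in Lemma~\ref{lemma:cdf}. Given sequences of codes achieving $(R_0', R_1', R_2', D_1', D_2')$ and $(R_0'', R_1'', R_2'', D_1'', D_2'')$, I would build, for each block length $n$, a concatenated code that allots the first $n_1 = \lceil \lambda n\rceil$ channel uses to the first scheme and the remaining $n_2 = n - n_1$ to the second, splitting each message $M_j$ into a part of $n_1 R_j'$ nats and a part of $n_2 R_j''$ nats. Because the state $S_i$ is i.i.d.\ according to $P_S$ and the \ac{sit} is drawn symbol-wise through $P_{S_T|S}$, the states and side information in the two sub-blocks are themselves i.i.d.\ and mutually independent; each sub-code therefore operates on a fresh, statistically identical realization, and the causality and feedback constraints are respected within each sub-block. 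This gives an overall rate $R_j = \lambda R_j' + (1-\lambda) R_j''$ as $n\to\infty$. The averaged distortion at decoder $k$ decomposes additively as $D_k^{(n)} = \tfrac{n_1}{n} D_{k,1}^{(n_1)} + \tfrac{n_2}{n} D_{k,2}^{(n_2)}$, whose $\limsup$ is bounded by $\lambda D_k' + (1-\lambda) D_k'' = D_k$, while the union bound gives $P_e^{(n)} \le P_{e,1}^{(n_1)} + P_{e,2}^{(n_2)} \to 0$. Hence $(R_0, R_1, R_2) \in \calC(D_1, D_2)$. Since convex combination commutes with limits, I would establish this first for directly achievable tuples and then extend to the closure $\calC$ by a routine continuity argument.

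Property (3) then follows immediately from (2) by specializing to $(D_1', D_2') = (D_1'', D_2'') = (D_1, D_2)$: any two points of $\calC(D_1, D_2)$ combine convexly while the combined distortions remain $(D_1, D_2)$, so the combination stays in $\calC(D_1, D_2)$, proving convexity. The main obstacle I anticipate is making the time-sharing concatenation in (2) rigorous, specifically verifying that partitioning one block into two sub-blocks neither violates the i.i.d.\ state model nor the encoder's causality/feedback structure. This reduces to the observation that the i.i.d.\ nature of $(S_i, S_{T,i})$ across time renders the two sub-blocks independent instances of the respective channels, so the existing achievability guarantees transfer verbatim and the integer-rounding effects in $n_1 = \lceil \lambda n\rceil$ are asymptotically negligible; the remaining ingredients (the union bound, the convex decomposition of the averaged distortion, and the closure argument) are standard.
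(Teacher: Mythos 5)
Your proposal is correct and follows essentially the same route as the paper: property (1) by the monotonicity of the achievability definition, property (2) by time-sharing/concatenation of the two code sequences exploiting the i.i.d.\ state model, and property (3) by specializing (2) to $D_1'=D_1''$ and $D_2'=D_2''$. The paper states this only in outline (referring back to the contradiction and time-sharing arguments of Lemma~\ref{lemma:cdf}), and your more explicit treatment of the sub-block concatenation and the distortion decomposition is a faithful elaboration of that same argument.
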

\begin{proof}
    The first two properties can be proven similarly to Lemma~\ref{lemma:cdf} by contradiction and time-sharing techniques. Property 3) can be inferred from 2) by letting $D_1' = D_1''$ and $D_2' = D_2''$.
\end{proof}

We define the following three \ac{cd} regions to distinguish different causality levels of \ac{sit}:
\begin{equation*}
    \cddsc, \calC^{\mathrm{C}}(D_1, D_2), \calC^{\mathrm{NC}}(D_1, D_2).
\end{equation*}
Determining the \ac{cd} regions is challenging as even the capacity region for \acs{bc} remains an open problem\cite{el2011network}, and the impact of \ac{sit} and feedback is still not fully studied\cite{shayevitz2012capacity, steinberg2005coding}. Thus, in this work, we concentrate on the degraded channel model and derive an achievable \ac{cd} region by adopting superposition\cite{el2011network} and successive refinement coding schemes\cite{steinberg2004successive}. The resulting region is proven to be tight under specific conditions, including simple \ac{isac} systems with decoders operating at distinct modes.
% Nevertheless, researchers have derived a tight capacity region for the degraded \acs{bc} with \ac{sit}\cite{steinberg2002broadcast}, which is achieved by the superposition coding scheme, and the feedback is shown to not increase its capacity\cite{gamal1978feedback}. 
% Furthermore, the distortion constraints further complicate the analysis. Such a problem, as we will see, can be boiled down to the multiple description coding problems such that the successive refinement coding scheme\cite{steinberg2004successive} can be adopted. However, due to the presence of feedback and the requirement of a degraded message set for the superposition coding method, the resulting region is shown to be suboptimal even for the degraded channel. Despite the aforementioned challenges, we will show that our derived achievable region is tight under certain special scenarios, including simple \ac{isac} systems with decoders operating at distinct modes.

\subsection{Degraded State Dependent Broadcast Channels}

\begin{definition}\label{def:degbc}
    A \ac{sddmbc} is said to be physically degraded if $(X, S_T) - Z_1 - Z_2$ forms a Markov chain. A \ac{sddmbc} is said to be statistically degraded if there exists a $P_{Z_2|Z_1}$ such that
    \begin{equation}
        P_{Z_1 Z_2 | X S_T} (z_1, z_2 | x , s_T) = P_{Z_1|X S_T}(z_1 | x, s_T) P_{Z_2|Z_1} (z_2|z_1),\label{defeq:degbc}
    \end{equation}
    where
    \begin{align}
        &P_{Z_1 Z_2 | X S_T} (z_1, z_2 | x , s_T) = \sum_{s\in\calS}P_{Z_1 Z_2 | X S} (z_1, z_2 | x , s) P_{S|S_TX}(s|s_T,x),\\
        &P_{Z_1 | X S_T} (z_1 | x , s_T) = \sum_{s\in\calS}P_{Z_1 | X S} (z_1| x , s) P_{S|S_TX}(s|s_T,x),
    \end{align}
    with 
    \begin{equation}
        P_{S|S_TX}(s|s_T,x) = P_{S|S_T}(s|s_T) = \frac{P_{S_T|S}(s_T|s) P_{S}(s)}{\sum_{s\in \calS}P_{S_T|S}(s_T|s) P_{S}(s)}.
    \end{equation}
    due to the Markov chain $S-S_T-X$.
\end{definition}

In the following, we do not distinguish the physically and statistically degraded \ac{sddmbc} and refer to both of them as degraded \ac{sddmbc}.

\begin{remark}
    If $(X,S)-Z_1-Z_2$ forms a Markov chain, the \ac{sddmbc} is degraded.
\end{remark}
\begin{proof}
This is obvious by noting the Markov chain $(X, S_T) - (X, S) - Z_1 - Z_2$.
\end{proof}
We highlight that $(X,S_T)-Z_1-Z_2$ does not imply $(X,S)-Z_1-Z_2$, which is commonly assumed in existing works\cite{gel1980coding, bross2020message}. The essential difference between both definitions is what the encoder knows about the channel. 
Under our setting, even if the channel is not degraded in terms of $(X, S, Z_1, Z_2)$, we can still treat it as degraded as long as the Markov chain $(X,S_T)-Z_1-Z_2$ holds, i.e. the available \ac{sit} at hand renders the channel degraded. 
% For example, suppose $S=(S_1, S_2)$ with $S_1$ independent of $S_2$ and two channels are characterized by $P_{Z_1|XS_1}$ and $P_{Z_2|XS_2}$. Such a channel obviously cannot be made as $(X,S)-Z_1-Z_2$. However, in the case that the encoder only knows $S_T=S_1$ and if we can find a $P_{Z_2|Z_1}$ such that $P_{Z_1Z_2|XS_1} = P_{Z_1|XS_1}P_{Z_2|Z_1}$, the channel is shown to yield a tight capacity region as a degraded one.

Let $\pdd$ be the set of all random variables $(U_1, U_2, V_1, V_2, X)$ with additional constraints on the joint distribution as well as the distortion:
\begin{equation}
\begin{split}
    \pdd &\triangleq \left\{ (U_1, U_2, V_1, V_2, X, \hS_1,\hS_2)\right| P_{U_1U_2V_1V_2XSS_TZ_1Z_2Y'\hS_1\hS_2}(u_1,u_2,v_1,v_2,x,s,s_T,z_1,z_2,y',\hat{s}_1,\hat{s}_2)=P_{S}(s)\\
    & \cdot P_{S_T|S}(s_T|s) P_{U_1U_2|S_T}(u_1,u_2|s_T) P_{X|U_1S_T}(x|u_1,s_T) P_{Z_1 Z_2 | X S} (z_1, z_2 | x , s)\mathbbm{1}\{y'=(\phi_1(z_1), \phi_2(z_2))\}  \\
    &\cdot P_{V_2|U_2S_TY'}(v_2|u_2,s_T,y') P_{V_1|U_1U_2V_2S_TY'}(v_1|u_1,u_2,v_2,s_T,y') P_{\hS_2|U_2V_2Z_2}(\hat{s}_2|u_2,v_2,z_2) \\
    &\cdot \left.  P_{\hS_1|U_1U_2V_1V_2Z_1}(\hat{s}_1|u_1,u_2,v_1,v_2,z_1); \expcs{}{d_1(S, \hS_1)} \le D_1, \expcs{}{d_2(S, \hS_2)} \le D_2 \right \},
\end{split}
\end{equation}
% Combined with the Markovity of the degraded \ac{sddmbc}, we have the Markov chain from $\calP$:
% \begin{equation}
%     (U_1,U_2) - (X, S_T) - Z_1 - Z_2.
% \end{equation}
in which we have the Markov chains
\begin{equation*}
\begin{split}
    &(U_1,U_2) - (X, S_T) - Z_1 - Z_2\\
    &V_2 - (U_2, S_T, Y') - (Z_1, Z_2)\\
    &V_1 - (U_1, U_2, V_2, S_T, Y') - (Z_1, Z_2).
\end{split}
\end{equation*}
Note that this does not imply $(V_1, V_2)- Z_1 - Z_2$ in general due to the presence of the feedback $Y'$.

We then define $\pddsc, \pddc, \pddnc \subseteq \pdd$ with additional conditions for strictly causal, causal, and non-causal cases, respectively:
\begin{align}
    \pddsc \triangleq& \left\{ (U_1, U_2, V_1, V_2, X, \hS_1,\hS_2)\in \pdd\right| P_{U_1U_2|S_T}(u_1, u_2| s_T) = \notag\\
    &\hspace{60mm}\left.  P_{U_1U_2}(u_1,u_2), P_{X|U_1S_T}(x|u_1,s_T) = P_{X|U_1}(x|u_1)\right\},\\
    \pddc \triangleq& \left\{ (U_1, U_2, V_1, V_2, X, \hS_1,\hS_2)\in \pdd\middle| P_{U_1U_2|S_T}(u_1, u_2|s_T) = P_{U_1U_2}(u_1,u_2)\right\},\\
    \pddnc \triangleq& \pdd.
\end{align}
Similar to the point-to-point case, the only difference between $\pddsc$ and $\pddc$ is whether $X$ depends on $S_T$ or not. 
% $\pddnc$ differs with the other two in the dependency of $(U_1, U_2)$ on $S_T$ 
% (\note{Do we also need to add constraints on Markovity of $V_1$ and $V_2$?}). Therefore, from $\pddsc$ and $\pddc$ we also have the Markov chain
% \begin{equation}\label{eq:markov_u1u2}
%     U_2 - (U_1, X) - Z_1 - Z_2,
% \end{equation}
% In addition, $(U_1, U_2)$ can be removed from the optimal estimators $h^*_1$ and $h^*_2$ in $\pddsc$ and $\pddc$ since they are independent of $S_T$ as well as $S$. 
We define the rate region $\calR(\pdd)$ of $(R_0, R_1, R_2)$ over the set $\pdd$:
\begin{equation}\label{eq:rdregion}
\begin{split}
    &\ridd \triangleq\\
    &\quad \bigcup_{(U_1, U_2, V_1, V_2, X, \hS_1, \hS_2) \in \pdd}\brcur{
        \begin{array}{ll}
                           & R_0 \ge 0, R_1\ge 0, R_2 \ge 0,\\
                           & R_0 + R_2 \le  I(U_2; Z_2) - I(U_2; S_T) - R_{s2},\\
        (R_0, R_1, R_2):   & R_1 \le I(U_1; Z_1 | U_2) - I(U_1; S_T | U_2) - R_{s1},\\
                           & R_{s1} > I(V_1 ; S_T, Y'|U_1, U_2, V_2,  Z_1),\\
                           & R_{s2} > \max\br{I(V_2; S_T, Y'| U_2, Z_2), I(V_2; S_T, Y'| U_2, Z_1)}
        \end{array}
        },
\end{split}
\end{equation}
which is equipped with the properties summarized in Proposition~\ref{prop:deg-bc-scc}.
\begin{prop}\label{prop:deg-bc-scc}\ 
    \begin{enumerate}
        \item If $D_1'\ge D_1$ and $D_2'\ge D_2$, then $\calR(\pdd) \subseteq \calR(\pddf{D_1'}{D_2'})$;
        \item Given two regions $(R_0', R_1', R_2') \in \calR(\pddf{D_1'}{D_2'})$ and $(R_0'', R_1'', R_2'') \in \calR(\pddf{D_1''}{D_2''})$, let $0\le\lambda \le 1$, $R_0 = \lambda R_0' + (1-\lambda) R_0''$, $R_1 = \lambda R_1'+(1-\lambda) R_1''$, $R_2=\lambda R_2'+ (1-\lambda) R_2''$, and $D_1=\lambda D_1' + (1-\lambda) D_1''$, $D_2 = \lambda D_2' + (1-\lambda) D_2''$, then $(R_0, R_1, R_2) \in \calR(\pdd)$;
        \item $\calR(\pdd)$ is a convex set;
        \item In order to exhaust the region $\calR(\pdd)$, it is sufficient to make $X$ a deterministic function of $(U_1, S_T)$, i.e.,  $P_{X|U_1 S_T}$. This encoding function is denoted by $g_e: \calU_1\times \calS_T \to \calX$ in the following;
        \item In order to exhaust the region $\calR(\pdd)$, we should have
        \begin{align}
            &P_{\hS_1|U_1U_2V_1V_2Z_1}(\hat{s}_1|u_1,u_2,v_1,v_2,z_1) = \mathbbm{1}\brcur{\hat{s}_1 = h_1^*(u_1,u_2,v_1,v_2,z_1)},\\
            &P_{\hS_2|U_2V_2Z_2}(\hat{s}_2|u_2,v_2,z_2) = \mathbbm{1}\brcur{\hat{s}_2 = h_2^*(u_2,v_2,z_2)},
        \end{align}
        where
        \begin{align}
            &h_1^*(u_1,u_2,v_1,v_2,z_1) = \argmin_{\hat{s}_1\in \hat{\calS}_1} \sum_{s\in \calS} P_{S|U_1U_2V_1V_2Z_1}(s|u_1,u_2,v_1,v_2,z_1)d_1(s,\hat{s}_1),\\
            &h_2^*(u_2,v_2,z_2) = \argmin_{\hat{s}_2\in \hat{\calS}_2} \sum_{s\in \calS} P_{S|U_2V_2Z_2}(s|u_2,v_2,z_2)d_2(s,\hat{s}_2)
        \end{align}
        are the optimal estimators for $\hS_1$ and $\hS_2$, respectively.
        \item In order to exhaust the region $\calR(\pdd)$, the cardinalities of $\calU_1$, $\calU_2$, $\calV_1$ and $\calV_2$ may be restricted to 
    \begin{equation*}
    \begin{split}
        &|\calU_1| \le |\calX||\calS_T||\calU_2|+2,\\
        &|\calV_1| \le |\calS_T| +2, \\
        &|\calU_2| \le \min(|\calX||\calS_T|, |\calZ_2|) +3,\\
        &|\calV_2| \le |\calS_T| +2.
    \end{split}
    \end{equation*}
    \item If $S_T=Y_2'=\varnothing$, $\ridd$ is exhausted by setting $U_1=X$ and $V_1=V_2=\varnothing$, and is given by
    \begin{equation}
        \calR(\pdd |_{S_T= Y_2'=\varnothing})=
        \bigcup_{\substack{P_{U_2}, P_{X|U_2}:\expcs{}{d_2(S,h_2^*(U_2,Z_2)}\le D_2\\ \expcs{}{d_1(S,h_1^*(U_2,X,Z_1)}\le D_1}}\brcur{
        \begin{array}{l}
        0\le R_0+R_2\le I(U_2; Z_2),\\ 
        0\le R_1\le I(X;Z_1|U_2)
        \end{array}
        }.
    \end{equation}
    \end{enumerate}
    These also hold true for $\riddsc$ and $\riddc$.
\end{prop}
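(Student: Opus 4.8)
The plan is to establish the seven properties one at a time, following closely the argument for the point-to-point Proposition~\ref{prop:cdcausalconcave} given in Appendix~\ref{app:p2p-prop}, but now tracking the nested auxiliaries $(U_1,U_2,V_1,V_2)$ of the degraded broadcast region. Throughout I would work with the general (non-causal) set $\pdd$ and check at each step that the defining constraints of $\pddsc$ and $\pddc$ — namely $(U_1,U_2)\perp S_T$, and additionally $X\perp S_T\mid U_1$ in the strictly causal case — are preserved by every operation, so that the statements specialize to $\riddsc$ and $\riddc$ without extra work.

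For the geometric properties 1)--3) I would reuse the logic of Lemma~\ref{lemma:cdf}. Monotonicity 1) is immediate: if $D_1'\ge D_1$ and $D_2'\ge D_2$, then any tuple with $\expcs{}{d_1(S,\hS_1)}\le D_1$ and $\expcs{}{d_2(S,\hS_2)}\le D_2$ also lies in $\pddf{D_1'}{D_2'}$, so the union defining $\ridd$ runs over a smaller index set and is contained in $\riddf{D_1'}{D_2'}$. For the time-sharing property 2) I would introduce a variable $Q\in\{1,2\}$ with $\pr{Q=1}=\lambda$, independent of $S$, and absorb it into the auxiliaries (e.g.\ replacing $U_2$ by $(U_2,Q)$ and $U_1$ by $(U_1,Q)$). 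Because $Q\perp S_T$, conditioning on $Q$ splits each mutual-information bound in~\eqref{eq:rdregion} into the desired convex combination of the two component bounds while respecting the Markov chains, and the distortion constraints combine linearly. Convexity 3) then follows from 2) by setting $D_1'=D_1''$ and $D_2'=D_2''$.

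Properties 4) and 5) prune the optimization domain. For 5) I note that none of the rate bounds in~\eqref{eq:rdregion} depends on the estimator kernels $P_{\hS_1|\cdots}$, $P_{\hS_2|\cdots}$; they enter only the two distortion constraints. Hence replacing them by the distortion-optimal (Bayes) estimators $h_1^*,h_2^*$ of Lemma~\ref{lemma:optest} can only relax those constraints and therefore exhausts the region. For 4) I would write a generic random encoder as $X=g(U_1,S_T,W)$ with $W\perp(U_1,U_2,S_T)$ and absorb $W$ into the satellite auxiliary, i.e.\ replace $U_1$ by $U_1'=(U_1,W)$, so that $X$ becomes a deterministic function of $(U_1',S_T)$. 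Since $W\perp S_T$ one has $I(U_1';S_T\mid U_2)=I(U_1;S_T\mid U_2)$ while $I(U_1';Z_1\mid U_2)\ge I(U_1;Z_1\mid U_2)$, so the rate region does not shrink; moreover the physical marginal of $(S,S_T,X,Z_1,Z_2)$ — hence both distortions — is unchanged, and $W\perp S_T$ keeps the case constraint $U_1'\perp S_T$ intact.

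The delicate part is the cardinality bound 6), which I would obtain by repeated application of the Support Lemma~\ref{lemma:support}, trimming the auxiliaries sequentially from the outside in: first $U_2$, then $V_2$, then $U_1$, then $V_1$. At each step the auxiliary under consideration plays the role of the conditioning variable $U$ in Lemma~\ref{lemma:support}, and the continuous functionals $\{g_k\}$ to be preserved are exactly the entropy and distortion quantities that enter~\eqref{eq:rdregion} together with the marginals needed to keep the already-fixed cardinalities valid; for $U_2$, for instance, one preserves either $P_{XS_T}$ or $P_{Z_2}$ — accounting for the $\min(|\calX||\calS_T|,|\calZ_2|)$ term — plus the few mutual-information and distortion values that produce the additive constant. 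The main obstacle is precisely this bookkeeping: identifying the minimal set of functionals whose preservation leaves every bound in~\eqref{eq:rdregion} and each subsequent cardinality invariant, and performing the reductions in an order that does not disturb the auxiliaries already trimmed. Finally, for 7) I would substitute $S_T=Y_2'=\varnothing$, set $U_1=X$ and $V_1=V_2=\varnothing$, note that the $R_{s1},R_{s2}$ terms and $I(U_2;S_T)=I(U_1;S_T\mid U_2)=0$ all vanish, and invoke Lemma~\ref{lemma:markovest} to confirm that the estimators may drop $V_1,V_2$ — the decoders already observe $Z_k\supseteq Y_k'$, paralleling the point-to-point discussion in Section~\ref{sec:p2pdisc} — which collapses $\ridd$ to the stated form.
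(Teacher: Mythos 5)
Your treatment of properties 1), 2), 3), 5), 6) and 7) follows the paper's route essentially verbatim: set inclusion for monotonicity, a time-sharing variable $Q$ absorbed into $\tilde U_2=(U_{2,Q},Q)$ for 2) (note only that the $\max(\cdot,\cdot)$ terms in $R_{s2}$ do not split into an exact convex combination; the paper needs the elementary inequality $\max(x_1,x_2)+\max(y_1,y_2)\ge\max(x_1+y_1,x_2+y_2)$ at this point, which your ``splits into the desired convex combination'' glosses over), Lemma~\ref{lemma:optest} for 5), the Support Lemma bookkeeping for 6) at the same level of detail as the paper, and the vanishing of $R_{s1},R_{s2}$ plus Lemma~\ref{lemma:markovest} for 7). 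The one place you genuinely diverge is property 4). The paper argues via convexity: since $\ridd$ is convex, its boundary is traced by maximizing the weighted sum rate $J(\alpha)$ of~\eqref{eq:weighted_sum_rate}, each conditional entropy appearing there is concave in a pair of marginals that are linear in $P_{X|U_1S_T}$, so $J(\alpha)$ is convex in $P_{X|U_1S_T}$ and is maximized at an extreme point of the simplex, i.e., at a deterministic encoder. You instead use a functional-representation argument, writing $X=g(U_1,S_T,W)$ and absorbing the private randomness $W$ into $U_1'=(U_1,W)$. Both are standard and valid; your route avoids the supporting-hyperplane detour and keeps the distortions manifestly unchanged, while the paper's route avoids having to re-verify every information term after enlarging $U_1$.

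That re-verification is exactly where your argument has a gap: you check $I(U_1';Z_1\mid U_2)\ge I(U_1;Z_1\mid U_2)$ and $I(U_1';S_T\mid U_2)=I(U_1;S_T\mid U_2)$, but the constraint $R_{s1}>I(V_1;S_T,Y'\mid U_1,U_2,V_2,Z_1)$ in~\eqref{eq:rdregion} also carries $U_1$ in its conditioning, and enlarging a conditioning variable can in general move a conditional mutual information in either direction. The step can be repaired: since $V_1$ is drawn from $P_{V_1|U_1U_2V_2S_TY'}$, it is conditionally independent of $(W,Z_1)$ given $(U_1,U_2,V_2,S_T,Y')$, so $H(V_1\mid U_1,W,U_2,V_2,Z_1,S_T,Y')=H(V_1\mid U_1,U_2,V_2,Z_1,S_T,Y')$ while $H(V_1\mid U_1,W,U_2,V_2,Z_1)\le H(V_1\mid U_1,U_2,V_2,Z_1)$, whence $I(V_1;S_T,Y'\mid U_1',U_2,V_2,Z_1)\le I(V_1;S_T,Y'\mid U_1,U_2,V_2,Z_1)$ and the region does not shrink; one should also observe that the $R_{s2}$ bounds are untouched because the joint law of $(U_2,V_2,S_T,Y',Z_1,Z_2)$ is preserved. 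Without this check the claim ``the rate region does not shrink'' is unsupported.
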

\begin{proof}
    See Appendix~\ref{app:bc-prop}.
\end{proof}

\begin{theorem}\label{theorem:deg-bc-scc}
    The \ac{cd} regions for the degraded \ac{sddmbc} with strictly causal, causal and non-causal \ac{sit} satisfy
    \begin{equation}
        \riddsc \subseteq \cddsc,
    \end{equation}
    \begin{equation}
        \riddc \subseteq \cddc,
    \end{equation}
    \begin{equation}
        \riddnc \subseteq \cddnc.
    \end{equation}
    Note that the terms $I(U_2;S_T)$ and $I(U_1; S_T|U_2)$ vanish in $\riddsc$ and $\riddc$.
\end{theorem}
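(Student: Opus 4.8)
The plan is to establish achievability of the three \ac{cd} regions by constructing a coding scheme that combines superposition coding for the degraded broadcast structure, block Markov coding to exploit the \ac{sit} across successive blocks, and Wyner-Ziv-style binning to convey the refinement information needed for state estimation at each decoder. Concretely, I would organize the transmission into $B$ blocks and, in each block, generate the cloud center codewords $U_2^n$ carrying $(M_0, M_2)$ together with a bin index for the sensing variable $V_2$, then superimpose the satellite codewords $U_1^n$ carrying $M_1$ together with a bin index for $V_1$. The encoder uses the \ac{sit} (at the appropriate causality level) and feedback $Y'$ from the \emph{previous} block to select the sensing descriptions $V_2^n$ and $V_1^n$; because the channel is degraded, decoder 2 recovers only $(U_2, V_2)$ while decoder 1 recovers all of $(U_1, U_2, V_1, V_2)$, matching the conditioning in the estimators $h_2^*$ and $h_1^*$ from Proposition~\ref{prop:deg-bc-scc}.

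The key steps, in order, are as follows. First I would specify the random codebook generation consistent with the factorization defining $\pdd$, drawing $U_2^n$ i.i.d.\ from $P_{U_2}$ (or $P_{U_2|S_T}$ in the non-causal case), $U_1^n$ conditionally from $P_{U_1|U_2}$, and the binned sensing codebooks $V_2^n$, $V_1^n$ at binning rates to be fixed. Second, I would describe encoding: at the start of each block the encoder, knowing $S_T$ and the past feedback $Y'$, jointly typically encodes the source $(S_T, Y')$ into $V_2^n$ and then $V_1^n$ via the covering argument of Lemma~\ref{lemma:covering}, which forces the binning rates to exceed $I(V_2; S_T, Y'|U_2, \cdot)$ and $I(V_1; S_T, Y'|U_1, U_2, V_2, Z_1)$ as in \eqref{eq:rdregion}. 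Third, decoding proceeds by having each decoder recover its message and bin indices using joint typicality; the packing argument of Lemma~\ref{lemma:packing} yields the rate constraints $R_0 + R_2 \le I(U_2; Z_2) - I(U_2; S_T) - R_{s2}$ and $R_1 \le I(U_1; Z_1|U_2) - I(U_1; S_T|U_2) - R_{s1}$, where the subtracted $I(U;S_T)$ terms arise from the Gelfand-Pinsker binning against the \ac{sit} and vanish in the strictly causal and causal cases as noted in the theorem. Fourth, once the bin indices are decoded, each decoder reconstructs $V_k^n$ and applies the optimal estimator $h_k^*$; a standard conditional-typicality and continuity-of-distortion argument, together with the one-block delay of block Markov coding and Definition~\ref{def:degbc}, shows $\limsup_n D_k^{(n)} \le D_k$. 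Finally I would verify that the error probability vanishes as $n\to\infty$ and $B\to\infty$ by a union bound over the covering and packing error events, and invoke the cardinality bounds and the optimal-estimator reductions of Proposition~\ref{prop:deg-bc-scc} to close the argument.

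The max in the $R_{s2}$ constraint, $\max\br{I(V_2; S_T, Y'| U_2, Z_2), I(V_2; S_T, Y'| U_2, Z_1)}$, deserves care: decoder 1 must also be able to reconstruct $V_2^n$ (since it estimates $S$ through $(U_1,U_2,V_1,V_2,Z_1)$), so the binning rate for $V_2$ must be decodable from \emph{both} $Z_2$ at decoder 2 and $Z_1$ at decoder 1, which is exactly why the larger of the two mutual informations appears. I would handle this by making the $V_2$ bin index part of the common description recovered at both decoders, using the degradedness $(X,S_T)-Z_1-Z_2$ to argue that whatever decoder 2 can decode from $Z_2$ decoder 1 can also decode from $Z_1$, so the binding constraint is the one with the larger mutual information.

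The main obstacle I expect is the block Markov and binning interaction for the two nested sensing variables $V_1$ and $V_2$ in the presence of feedback. Specifically, $V_1$ is conditioned on $(U_1, U_2, V_2, S_T, Y')$ and must be decoded only at decoder 1, while $V_2$ must be decoded at both decoders; coordinating the covering step (which fixes how well $V_1, V_2$ describe $(S_T, Y')$ for the distortion constraint) with the packing step (which fixes decodability of the bin indices) without the rates colliding is the delicate part. I anticipate that the analysis must carefully separate the error events for message decoding, sensing-bin decoding at each decoder, and the covering failure at the encoder, and then show these can be made simultaneously small by an appropriate ordering of the rate inequalities in \eqref{eq:rdregion}; the feedback term $Y'$ entering both $V_1$ and $V_2$ through the previous block is what makes a naive single-block argument fail and necessitates the block Markov structure together with a vanishing rate-loss as $B\to\infty$.
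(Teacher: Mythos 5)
Your proposal is correct and follows essentially the same route as the paper's proof: block Markov coding across blocks, superposition coding with $U_2$ as cloud center and $U_1$ as satellite, covering (Lemma~\ref{lemma:covering}) for the Wyner--Ziv descriptions $V_2, V_1$ of $(S_T, Y')$, packing (Lemma~\ref{lemma:packing}) for message and bin-index decoding at both receivers, Gelfand--Pinsker binning only in the non-causal case to produce the $I(U;S_T)$ terms, and the $\max$ in $R_{s2}$ arising because decoder 1 must also recover $V_2$ while the feedback $Y'$ prevents the side information from being degraded. The only minor imprecision is your appeal to degradedness to argue decoder 1 can decode whatever decoder 2 can --- that holds for the $U_2$ codeword ($I(U_2;Z_2)\le I(U_2;Z_1)$) but not for the $V_2$ bin index, which is exactly why the $\max$ of the two conditional mutual informations is needed, as you in fact conclude.
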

\begin{proof}
    See Appendix~\ref{app:deg-bc-scc}.
\end{proof}

\begin{remark}
    Because $X$ can be a deterministic function of $U_1$ without impacting the region for the strictly causal case, we can replace the auxiliary random variable $U_1$ by $X$, and $\riddsc$ can also be written as
    \begin{equation}
    \begin{split}
        &\riddsc =\\
        &\quad \bigcup_{(X, U_2, V_1, V_2, X, \hS_1,\hS_2) \in \pddsc}\brcur{
        \begin{array}{ll}
                           & R_0 \ge 0, R_1\ge 0, R_2 \ge 0,\\
                            & R_0 + R_2 \le  I(U_2; Z_2) - R_{s2},\\
        (R_0, R_1, R_2):    & R_1 \le I(X; Z_1 | U_2) - R_{s1},\\
                             & R_{s1} > I(V_1 ; S_T, Y'|X, U_2, V_2,  Z_1),\\
                           & R_{s2} > \max\br{I(V_2; S_T, Y'| U_2, Z_2), I(V_2; S_T, Y'|U_2, Z_1)}
        \end{array}
        }.
    \end{split}
    \end{equation}
\end{remark}

\subsection{Discussion}

The three rate-distortion regions $\riddsc$, $\riddc$ and $\riddnc$ can be understood in a similar way as the point-to-point case. First of all, we know that an inner bound (tight for strictly causal and causal cases) of the rate region for the degraded \ac{sddmbc} with \ac{sit} is given by $\{R_0+R_2 < I(U_2; Z_2) - I(U_2; S_T), R_1<I(U_1; Z_1 | U_2) - I(U_1; S_T|U_2)\}$\cite{steinberg2005coding}, from which we can split a rate pair $(R_{s1}, R_{s2})$ to convey the information that is needed to recover $S$. The rates $(R_{s1}, R_{s2})$ take a similar form of the solution to the Wyner-Ziv problem for multiple description coding\cite{el2011network, steinberg2004successive} with source $(S_T, U_1, U_2, Y')$, \ac{sir} at decoder 1 $(Z_1, U_1, U_2)$ and \ac{sir} at decoder 2 $(Z_2, U_2)$, but are suboptimal since the optimal description rate region in\cite{steinberg2004successive} requires degraded side information, which does not hold due to the presence of $Y'$.
% The rate pair $(R_{s1}, R_{s2})$ is suboptimal due to the following reasons: 1) the optimal description rate region in\cite{steinberg2004successive} requires degraded side information, which doesn't hold due to the presence of $Y'$; 2) 
In addition,
the superposition coding scheme used for the degraded \ac{sddmbc} encodes common messages into the auxiliary random variable $U_2$ and private messages for the stronger decoder into $U_1$. Further, if the common description $V_2$ also contains information of $U_1$, the superposition coding conditions are violated. Thus, we restrict that $V_2$ only depends on $(S_T,Y',U_2)$ to meet this requirement, which leads to a further loss in optimality.
% Moreover, the block Markov coding scheme leads to a further loss in optimality for the noncausal case, as described in Section~\ref{sec:p2pdisc}. 
Nonetheless, we will show several important special cases in the next subsection, in which the region is tight.

\subsection{Special Cases}

\begin{prop}\label{prop:deg-bc-scc-commonly}
    Given that the \ac{sddmbc} is degraded, such that $(X, S_T) - Z_1 - Z_2$ forms a Markov chain and $Y_2'=\varnothing$. If $S_T$ is strictly causal or causal available at encoder and the decoder 2 is in communication-only mode, the inner bounds stated in Theorem~\ref{theorem:deg-bc-scc} are tight, i.e.,
    \begin{align}
        &\calR(\pddscf{D_1}{\infty}|_{Y_2'=\varnothing}) = \cddscf{D_1}{\infty}|_{Y_2'=\varnothing},\\
        &\calR(\pddcf{D_1}{\infty}|_{Y_2'=\varnothing}) = \cddcf{D_1}{\infty}|_{Y_2'=\varnothing}
    \end{align}
    with
    \begin{equation}
    \begin{split}
        &\calR(\pddscf{D_1}{\infty}|_{Y_2'=\varnothing}) =\\
        &\quad \bigcup_{(U_1, U_2, V_1, \varnothing, X, \hS_1, \hS_2) \in \pddcf{D_1}{\infty}}\brcur{
        \begin{array}{ll}
                           & R_0 \ge 0, R_1\ge 0, R_2 \ge 0,\\
        (R_0, R_1, R_2):   & R_0 + R_2 \le  I(U_2; Z_2),\\
                           & R_1 \le I(U_1; Z_1 | U_2) - I(V_1 ; S_T|U_1, U_2, Z_1),
        \end{array}
        },
    \end{split}
    \end{equation}
    and
    \begin{equation}
    \begin{split}
        &\calR(\pddcf{D_1}{\infty}|_{Y_2'=\varnothing}) =\\
        &\quad \bigcup_{(X, U_2, V_1, \varnothing, X, \hS_1, \hS_2) \in \pddscf{D_1}{\infty}}\brcur{
        \begin{array}{ll}
                           & R_0 \ge 0, R_1\ge 0, R_2 \ge 0,\\
        (R_0, R_1, R_2):   & R_0 + R_2 \le  I(U_2; Z_2),\\
                           & R_1 \le I(X; Z_1 | U_2) - I(V_1 ; S_T|X, U_2, Z_1),
        \end{array}
        }.
    \end{split}
    \end{equation}
\end{prop}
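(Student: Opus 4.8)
The plan is to prove each set equality by establishing both inclusions separately. The inclusion $\calR(\pddscf{D_1}{\infty}|_{Y_2'=\varnothing}) \subseteq \cddscf{D_1}{\infty}|_{Y_2'=\varnothing}$ (and its causal counterpart) is the achievability direction and follows immediately from Theorem~\ref{theorem:deg-bc-scc}, once we verify that the stated region coincides with $\riddsc$ specialized to $D_2=\infty$, $V_2=\varnothing$ and $Y_2'=\varnothing$. Putting decoder~2 in communication-only mode removes its estimation constraint, so $V_2$ is dropped and $R_{s2}=0$; the strictly causal (resp.\ causal) assumption forces $I(U_2;S_T)=0$ (and $I(U_1;S_T|U_2)=0$) as noted after Theorem~\ref{theorem:deg-bc-scc}; and since the only surviving feedback component $Y_1'=\phi_1(Z_1)$ is a deterministic function of $Z_1$, it is redundant once we condition on $Z_1$, so that $I(V_1;S_T,Y'|U_1,U_2,V_2,Z_1)=I(V_1;S_T|U_1,U_2,Z_1)$. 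This reduces $\riddsc$ to exactly the claimed form.

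The substance is therefore the converse. First I would apply Fano's inequality to the vanishing error probability to obtain $H(M_0,M_2|Z_2^n)\le n\epsilon_n$ and $H(M_0,M_1|Z_1^n)\le n\epsilon_n$ with $\epsilon_n\to 0$. For the sum rate I bound $n(R_0+R_2)\le I(M_0,M_2;Z_2^n)+n\epsilon_n$ and single-letterize with the standard identification $U_{2,i}=(M_0,M_2,Z_2^{i-1})$. Using the \ac{iid} state together with the strictly causal availability of $S_T$ (so that $S_{T,i}$ is fresh relative to the encoder's time-$i$ inputs), one checks the Markov chain $U_{2,i}-(X_i,S_{T,i})-Z_{1,i}-Z_{2,i}$ required by $\pddsc$ and that $I(U_2;S_T)=0$; introducing a uniform time-sharing variable $Q$ and absorbing it into the auxiliary then yields $R_0+R_2\le I(U_2;Z_2)$.

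For the bound on $R_1$ I would exploit the degradedness $(X,S_T)-Z_1-Z_2$, which implies the data-processing inequality $I(M_0,M_2;Z_2^n)\le I(M_0,M_2;Z_1^n)$ and hence, via Fano at decoder~2, that the stronger decoder~1 also reliably recovers $(M_0,M_2)$, i.e.\ $H(M_0,M_2|Z_1^n)\le n\epsilon_n'$. Conditioning on $(M_0,M_2)$—equivalently on $U_2$—the residual problem seen by decoder~1 is precisely a point-to-point ISAC channel with input $X$, state $S$, strictly causal \ac{sit} $S_T$, output $Z_1$ and feedback $Y_1'=\phi_1(Z_1)$, in which decoder~1 decodes $M_1$ and estimates $S$ subject to $\expcs{}{d_1(S,\hS_1)}\le D_1$. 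Mirroring the converse of Theorem~\ref{thm:cdcasual} (strictly causal case, in the form of the Remark following it with $U$ replaced by $X$), now carried out conditionally on $U_2$, gives $R_1\le I(X;Z_1|U_2)-I(V_1;S_T|X,U_2,Z_1)$, the feedback term again vanishing because $Y_1'$ is a function of $Z_1$; by Lemmas~\ref{lemma:optest} and~\ref{lemma:markovest} the distortion is attained by the optimal estimator $h_1^*(X,U_2,V_1,Z_1)$. A final single-letterization, and verification that the constructed $(U_2,X,V_1,\varnothing)$ lie in $\pddscf{D_1}{\infty}$, completes the converse.

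The main obstacle is the $R_1$ bound, and within it the correct construction of the estimation auxiliary $V_{1,i}$ together with the proof that the distortion constraint necessarily costs the subtraction of $I(V_1;S_T|X,U_2,Z_1)$ from the message rate. This is the Wyner--Ziv component of the argument: one must identify $V_{1,i}$ from the decoded message, the past state side information, and the estimator output, so that the single-letter Markov chain $V_1-(X,U_2,S_T,Y')-(Z_1,Z_2)$ holds while the rate penalty emerges, and only then can Lemma~\ref{lemma:markovest} be used to restrict the estimator's dependence. The causal case is entirely analogous; the only change is that $X$ may depend on $S_{T,i}$, so $U_1$ cannot be replaced by $X$ and the bound retains the form $R_1\le I(U_1;Z_1|U_2)-I(V_1;S_T|U_1,U_2,Z_1)$.
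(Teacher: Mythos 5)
Your overall architecture matches the paper's: achievability is inherited from Theorem~\ref{theorem:deg-bc-scc} after checking that $V_2=\varnothing$, $R_{s2}=0$ and that $Y_1'=\phi_1(Z_1)$ is absorbed by conditioning on $Z_1$, and the converse proceeds by Fano's inequality, auxiliary identification, single-letterization with a time-sharing variable, and the concavity of the resulting rate--distortion function. The $R_0+R_2$ bound and the use of degradedness are handled essentially as in the paper (the paper takes $U_{2,i}=(M_2,Z_1^{i-1},Z_2^{i-1})$ rather than your $(M_0,M_2,Z_2^{i-1})$; including $Z_1^{i-1}$ is what lets the same $U_{2,i}$ serve in both the $R_2$ and $R_1$ bounds, with degradedness justifying the extra conditioning).

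The genuine gap is in the $R_1$ converse, which you correctly flag as the crux but do not execute. Two things are missing. First, the reduction ``conditionally on $U_2$ the residual problem is a point-to-point ISAC channel'' is not a rigorous step: $U_{2,i}$ contains past channel outputs, so decoder~1's problem given $U_{2,i}$ is not an instance of the memoryless point-to-point model of Theorem~\ref{thm:cdcasual}, and one cannot simply invoke that converse conditionally. The paper instead carries out the single-letterization directly. Second, and more importantly, the identification of the estimation auxiliary that makes the penalty term appear is $V_{1,i}=Z_{1,i+1}^n$ (the \emph{future} channel outputs), combined with the key identity
\begin{equation*}
I(Z_1^n;S_{T,i}\mid W_i)=I(\hS_{1,i};S_{T,i}\mid W_i)+I(Z_1^n;S_{T,i}\mid W_i,\hS_{1,i}),
\end{equation*}
which exploits that $\hS_{1,i}$ is a function of $Z_1^n$; adding $\sum_i I(\hS_{1,i};S_{T,i}\mid W_i)$ to $nR_1$ and cancelling it at the end is what produces $-I(V_1;S_T\mid U_1,U_2,Z_1)$ while keeping the distortion constraint tied to $h_{1,i}^*(U_{1,i},U_{2,i},V_{1,i},Z_{1,i})$ via the Markov chain $S_i-(U_{1,i},U_{2,i},V_{1,i},Z_{1,i})-Z_1^{i-1}$ and Lemmas~\ref{lemma:optest} and~\ref{lemma:markovest}. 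Your proposed construction of $V_{1,i}$ ``from the decoded message, the past state side information, and the estimator output'' is not the identification that works here, and without the identity above the subtraction of the Wyner--Ziv term from the message rate does not emerge. As written, the proof of the converse's central inequality is therefore incomplete.
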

\begin{proof}
    See Appendix~\ref{app:deg-bc-scc-commonly}.
\end{proof}

% Proposition~\ref{prop:deg-bc-scc-commonly} is actually a generalized version of the results in\cite{bross2020message} by extending the meanings of \ac{sit}, feedback, and the channel output. 
% % Recall that one of the reasons for the suboptimality is that we can not encode the private message $U_1$ into $V_2$ using the successive refinement coding scheme since it disagrees with the principle of the superposition coding scheme. 
% Under this settings, because the weaker decoder has no sensing task, there is no need to design $V_2$, so all codewords used for state estimation can be put into $V_1$, which is in turn encoded into the private message $U_1$ dedicated to the stronger user, resulting in no loss in optimality. Furthermore, by restricting $Y'$ to be a deterministic function of $Z_1$, $(S_T, U_1, U_2, Z_1, Y')$ becomes jointly \ac{iid}, which satisfies the Wyner-Ziv\cite{wyner1976rate} settings and thus makes the coding scheme optimal.

Next, we dive into a simple classical \ac{isac} setting, i.e., one receiver operating at communication-only mode and the other operating at radar mode, as described in Section~\ref{sec:p2pspecial}. We show that with specific conditions on the communication link,
% if the encoder is not aware of the state of the communication link, 
the \ac{cd} region inner bound for strictly causal and causal \ac{sit} becomes tight, even for general \ac{sddmbc}.

\begin{prop}\label{prop:bc-scc-isac}
    Given the \ac{sddmbc} in Fig.~\ref{fig:bcmodel} with $S_T$ strictly causal or causal available at the encoder and $Y_2'=\varnothing$. If decoder 1 operates at (mono- or bistatic) radar mode and decoder 2 is in communication-only mode, i.e., $S_{R,1} = X$ and $D_2 = \infty$, when $Z_2$ is independent of $S_T$ conditioned on $X$, let $C^{\mathrm{SC}}_{\mathrm{ISAC}}(D_1)$ and $C^{\mathrm{C}}_{\mathrm{ISAC}}(D_1)$ denote the function of capacity at decoder 2 with respect to the distortion at decoder 1 for the strictly causal and causal scenarios, respectively, define
    \begin{align}
    \calP^{\mathrm{ISAC-SC}}_{D_1} &\triangleq \brcur{(U_2,V_1, X, \hS_1) | \exists (X, U_2, V_1, \varnothing, X, \hS_1, \varnothing)\in \pddscf{D_1}{\infty}: I(X; Z_1 | U_2) - I(V_1 ; S_T|X, U_2, Z_1)\ge 0},\\
    \calP^{\mathrm{ISAC-C}}_{D_1} &\triangleq \brcur{(U_1,U_2,V_1,X,\hS_1)|\exists(U_1, U_2, V_1, \varnothing, X, \hS_1, \varnothing)\in \pddcf{D_1}{\infty}: I(U_1; Z_1 | U_2) - I(V_1 ; S_T|U_1, U_2, Z_1)\ge 0},
    \end{align}
    we have
    \begin{equation}\label{eq:cscisac}
        C^{\mathrm{SC}}_{\mathrm{ISAC}}(D_1) = \max_{(U_2, V_1 X, \hS_1)\in \calP^{\mathrm{ISAC-SC}}_{D_1}} I(U_2; Z_2),
    \end{equation}
    and
    \begin{equation}
        C^{\mathrm{C}}_{\mathrm{ISAC}}(D_1) = \max_{(U_1,U_2,V_1,X,\hS_1)\in \calP^{\mathrm{ISAC-C}}_{D_1}} I(U_2; Z_2).
    \end{equation}
\end{prop}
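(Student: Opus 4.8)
The plan is to obtain Proposition~\ref{prop:bc-scc-isac} as a specialization of the tight region of Proposition~\ref{prop:deg-bc-scc-commonly}, the only genuine gap being that the latter presumes degradedness $(X,S_T)-Z_1-Z_2$ while here the \ac{sddmbc} is arbitrary. First I would show that the two \ac{isac} hypotheses jointly supply exactly the consequence of degradedness that the proof actually uses, namely that the radar receiver~1 can decode whatever the communication receiver~2 decodes. Since $S_{R,1}=X$ gives $Z_1=(Y_1,X)$, the input $X$ is a deterministic function of $Z_1$, so $I(U_2;X)\le I(U_2;X,Y_1)=I(U_2;Z_1)$ for every tuple in $\pdd$. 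Combining the hypothesis that $Z_2$ is conditionally independent of $S_T$ given $X$ with the Markov chain $U_2-(X,S_T)-Z_2$ that is built into $\pdd$ yields $U_2-X-Z_2$, whence $I(U_2;Z_2)\le I(U_2;X)\le I(U_2;Z_1)$. This inequality is the only property of degradedness invoked to justify superposition decoding of the cloud centre $U_2$ at decoder~1 and to drive the converse, so both directions of Proposition~\ref{prop:deg-bc-scc-commonly} carry over without $(X,S_T)-Z_1-Z_2$.

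Next I would record the structural simplifications forced by the hypotheses. With decoder~2 in communication-only mode ($D_2=\infty$) and $Y_2'=\varnothing$, the description $V_2$ is dropped so that $R_{s2}=0$ and the sum-rate bound collapses to $R_0+R_2\le I(U_2;Z_2)$. Because $Y_1'=\phi_1(Z_1)$ is a function of $Z_1$, conditioning on $Z_1$ makes the feedback redundant in the sensing rate, i.e.\ $I(V_1;S_T,Y_1'\mid U_1,U_2,Z_1)=I(V_1;S_T\mid U_1,U_2,Z_1)$; this is what renders monostatic and bistatic radar equivalent and reduces $R_{s1}$ to the form appearing in $\calP^{\mathrm{ISAC-SC}}_{D_1}$ and $\calP^{\mathrm{ISAC-C}}_{D_1}$, with $U_1$ replaced by $X$ in the strictly causal case as in the Remark following Theorem~\ref{theorem:deg-bc-scc}.

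For achievability I would fix any tuple in $\calP^{\mathrm{ISAC-SC}}_{D_1}$ (respectively $\calP^{\mathrm{ISAC-C}}_{D_1}$), set the private rate $R_1=0$, and invoke the specialized region. Membership already forces $\expcs{}{d_1(S,\hS_1)}\le D_1$ via $\pddscf{D_1}{\infty}$, while the defining inequality $I(X;Z_1\mid U_2)-I(V_1;S_T\mid X,U_2,Z_1)\ge 0$ (respectively its causal analogue) makes $R_1=0$ feasible; hence every point $(R_0,0,R_2)$ with $R_0+R_2=I(U_2;Z_2)$ is achievable, and taking the supremum over admissible tuples gives $C^{\mathrm{SC}}_{\mathrm{ISAC}}(D_1)\ge \max_{\calP^{\mathrm{ISAC-SC}}_{D_1}} I(U_2;Z_2)$.

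For the converse I would take any code family delivering rate $R$ to decoder~2 with $\limsup_n D_1^{(n)}\le D_1$ and run the single-letterization underlying Theorem~\ref{theorem:deg-bc-scc}, which produces a tuple in $\pddscf{D_1}{\infty}$ obeying $R\le I(U_2;Z_2)$ together with $0\le R_1\le I(X;Z_1\mid U_2)-I(V_1;S_T\mid X,U_2,Z_1)$. Non-negativity of $R_1$ forces the sensing-feasibility inequality, so the identified tuple lies in $\calP^{\mathrm{ISAC-SC}}_{D_1}$ and $R\le \max_{\calP^{\mathrm{ISAC-SC}}_{D_1}} I(U_2;Z_2)$; together with achievability this gives equality, and the causal case is identical with $U_1$ in place of $X$. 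I expect the main obstacle to be verifying that this converse survives the removal of degradedness: I must confirm that every appeal to $(X,S_T)-Z_1-Z_2$ in the proof of Theorem~\ref{theorem:deg-bc-scc} can be replaced by the induced inequality $I(U_2;Z_2)\le I(U_2;Z_1)$ established above, and that the identification of $U_2$ and the distortion single-letterization remain valid under only $Z_2\perp S_T\mid X$ and $S_{R,1}=X$.
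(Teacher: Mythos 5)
Your proposal goes astray at its very first step: you treat the absence of degradedness as ``the only genuine gap'' and then set out to patch the proofs of Theorem~\ref{theorem:deg-bc-scc} and Proposition~\ref{prop:deg-bc-scc-commonly} by replacing every appeal to $(X,S_T)-Z_1-Z_2$ with the derived inequality $I(U_2;Z_2)\le I(U_2;Z_1)$. But there is no such gap: the two facts you yourself isolate --- $Z_1=(Y_1,S_{R,1})=(Y_1,X)$, so that $X$ is a deterministic function of $Z_1$, and $P_{Z_2|XS_T}=P_{Z_2|X}$ --- already imply that $(X,S_T)-Z_1-Z_2$ is a Markov chain in the sense of Definition~\ref{def:degbc}: take $P_{Z_2|Z_1}(z_2|(y_1,x))=P_{Z_2|X}(z_2|x)$ and check \eqref{defeq:degbc} directly. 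This one-line verification is exactly the paper's proof; Proposition~\ref{prop:deg-bc-scc-commonly} then applies verbatim, with the feasibility constraints $I(X;Z_1|U_2)-I(V_1;S_T|X,U_2,Z_1)\ge 0$ (resp.\ the causal analogue) merely encoding that $R_1=0$ is admissible so that all of $I(U_2;Z_2)$ can be spent on decoder~2.

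Because you miss this, your argument is left genuinely incomplete. The claim that $I(U_2;Z_2)\le I(U_2;Z_1)$ is ``the only property of degradedness invoked'' is asserted but never checked, and it is doubtful for the converse: in Appendix~\ref{app:deg-bc-scc-commonly} degradedness enters the single-letterization structurally, e.g.\ in the identification $U_{2,i}=(M_2,Z_1^{i-1},Z_2^{i-1})$ and in the step $I(M_1,S_T^n;Z_1^n|M_2)=\sum_i I(M_1,S_T^n;Z_{1,i}|M_2,Z_1^{i-1},Z_2^{i-1})$, which relies on the per-letter Markov structure and not merely on a single-letter mutual-information inequality. You explicitly flag this verification as ``the main obstacle'' and leave it open, so the converse direction of your proof is not established. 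Your ancillary observations (dropping $V_2$ since $D_2=\infty$ and $Y_2'=\varnothing$, absorbing $Y_1'$ into $Z_1$, replacing $U_1$ by $X$ in the strictly causal case) all match the paper's specialization and are fine; the fix is simply to prove degradedness rather than to work around it.
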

\begin{proof}
    Since $Z_2$ is independent of $S_T$ given $X$, it is easy to verify that $(X,S_T) - Z_1 - Z_2$ forms a Markov chain as $Z_1 = (Y_1, S_{R,1}) = (Y_1, X)$. Using Proposition~\ref{prop:deg-bc-scc-commonly}, the results are obvious.
\end{proof}

We highlight that Proposition~\ref{prop:deg-bc-scc-commonly} and \ref{prop:bc-scc-isac} provide the fundamental limits for most \ac{isac} systems, as in realistic scenarios both functionalities are usually separated at different devices\cite{liu2018mu,liu2022survey}. To show the versatility of our model, we revisit the results studied in\cite{ahmadipour2022information}, where the authors characterize the \ac{cd} trade-offs of an \ac{isac} system involving an encoder conveying communication data to a decoder, and estimate the channel state from the echo signal reflected by the decoder. The echo signal is represented as a generalized feedback channel in their analysis. According to Section~\ref{sec:p2pspecial}, we can treat the monostatic radar system as another link such that the whole system forms a \ac{sddmbc}, in which one decoder operates at communication-only mode and the other one at monostatic radar mode, as depicted in Fig.~\ref{fig:marimodel}. The associated system parameters are set as follows: $Z_1 = (Y_1, X)$, $Z_2 = (Y_2, S)$, $Y'=Y_1$, $S_{R,1} = X$, $S_{R,2}=S$ and $S_T = \varnothing$. Hence, we can simply adopt the results from Proposition~\ref{prop:bc-scc-isac} and note that $ I(X; Z_1 | U_2) - I(V_1 ; S_T|X, U_2, Z_1) = I(X; X, Y_1 | U_2) \ge 0$ is always satisfied. Moreover, 
% due to the Markov chain in \eqref{eq:markov_u1u2}, and the choice of $U_2$ doesn't affect the optimal estimator $h^*_1$ in $\calP^{\mathrm{ISAC-SC}}_{D_1}$ as it is independent of $S$, 
since $S$ and $S_T$ are statistically independent of $U_2$, we have the Markov chain $U_2 - X - Z_2$ and $I(U_2; Z_2) \le I(X; Z_2)$. The maximum in \eqref{eq:cscisac} can then be achieved by setting $U_2 = X$.
% without impacting the choice of the optimal estimator $h_1^*$.
% we can set $U_2 = X$ because the Markov chain $U_2 - X - Y_2$ such that $I(U_2;Z_2) \le I(X;Z_2)$. The choice of $(V_1, V_2)$ is also arbitrary since it doesn't impact $h_1^*$ given the Markov chain $S-(X,Y_1) -(V_1, V_2)$.
Consequently, we end up with the same results as in \cite{ahmadipour2022information}, namely, the trade-off between the capacity at decoder 2 and estimation distortion at decoder 1 is given by
\begin{equation}
    C_{\mathrm{Ahmadipour}}(D) = \max_{X\in \calP_D^{\mathrm{Ahmadipour}}} I(X; Y_2, S) = \max_{X\in \calP_D^{\mathrm{Ahmadipour}}} I(X; Y_2|S)
\end{equation}
with
\begin{equation}
    \calP_D^{\mathrm{Ahmadipour}} = \{X | \expcs{}{d_1(S, h_1^*(X, Y_1))} \le D\}.
\end{equation}

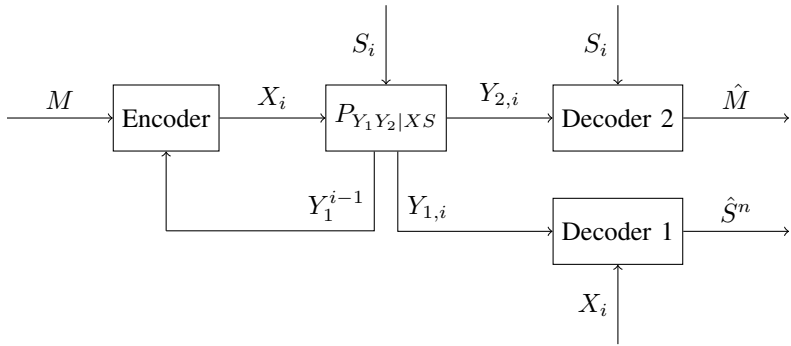
\begin{figure}[h]
    \centering
    \begin{tikzpicture}[node distance=3em and 4em]
      \node[] (in) {};
      \node[block, right=of in] (enc) {Encoder};
      \node[coordinate, below=of enc] (p) {};
      \node[block, right=of enc] (channel) {$P_{Y_1Y_2|XS}$};
      \node[above=of channel](state){};
      \node[block, right=of channel] (dec) {Decoder 2};
      \node[block, below=1.75em of dec] (est) {Decoder 1};
      \node[below=of est](estsir) {};
      \node[right=of dec] (out) {};
      \node[right=of est] (estout) {};
      % \node[above=1.5em of enc] (st) {$Y_E^{i-1}$};
      \node[above=of dec] (sr) {};
    
      \draw[->] (in) -- node[above] {$M$} (enc);
      \draw[->] (enc) -- node[above] {$X_i$} (channel);
      \draw[->] (state) -- node[left] {$S_i$} (channel);
      % \draw[->] (st) -- (enc);
      \draw[->] (sr) -- node[left] {$S_i$} (dec);
      \draw[->] (estsir) --node[left] {$X_i$} (est);
      \draw[->] (channel) -- node[above] {$Y_{2,i}$} (dec);
      \draw[->] ([xshift=1.5em]channel) |- node[anchor=south west] {$Y_{1,i}$} (est);
      \draw[-] ([xshift=-1.5em]channel) |- node[anchor=south east]  {$Y_1^{i-1}$} (p);
      \draw[->] (p) -- (enc);
      \draw[->] (dec) -- node[above] {$\hM$} (out);
      \draw[->] (est) -- node[above] {$\hS^n$} (estout);
    \end{tikzpicture}
    \caption{The equivalent model for the \ac{isac} system in\cite{ahmadipour2022information}.}
    \label{fig:marimodel}
\end{figure}

Additionally, the extension of Proposition~\ref{prop:bc-scc-isac} to scenarios involving multiple degraded communication users is relatively straightforward, so long as their channel outputs are all independent of $S_T$ conditioned on $X$ and do not generate feedback. Given that the radar receiver always represents the strongest user in these systems, the multi-user superposition coding strategy is able to achieve the optimal \ac{cd} region. Within this context, the degraded broadcast channel studied in\cite{ahmadipour2022information} can also be incorporated and analyzed under our proposed framework, yielding a consistent result.

\section{Optimization}\label{sec:opt}
From the previous sections, determining the \ac{cd} relationships always involves solving optimization problems over probability sets. In general, these problems can be abstracted in the following form
\begin{equation}\label{eq:opt-prob}
    \min_{\bx_1 \in \calX_1,\bx_2\in \calX_2,...,\bx_K\in \calX_K} f(\bx_1,\bx_2,...,\bx_K),
\end{equation}
where $\calX_k \subseteq \mathbb{R}^{N_k}$ for all $k\in [K]$ are convex sets and $f$ is the objective function to be optimized. If $\bx_k$ represents a probability distribution, its space can be expressed as a unit simplex
\begin{equation}\label{eq:unit_simplex}
    \calX_k = \brcur{\bx_k\in \mathbb{R}^{N_k} \bigg| \bx_k \ge \bzero, \sum_{j=1}^{N_k} \bx_{k,j} = 1}
\end{equation}
with $\bx_{k,j}$ the $j$-th component of $\bx_k$.
A simple approach to solving the problem \eqref{eq:opt-prob} is to optimize $f$ alternatively along each variable (or coordinate) $\bx_k$ while fixing the others, called the \ac{bcd} method. This approach has been shown to converge under certain conditions, e.g., the uniqueness of the optimizer at each iteration\cite{bertsekas1997nonlinear}, coordinate-wise strictly convexity of $f$\cite{bertsekas2015parallel}, and so on. When $f$ represents the channel capacity or rate-distortion function, the \ac{bcd} method is also referred to as the \ac{ba} algorithm and known to converge to the optimal solution\cite{blahut1972computation,arimoto1972algorithm,el2011network}.

However, when applied to the optimization problems studied in this manuscript, the convergence behavior of the classical \ac{bcd} method is difficult to analyze since the problems generally do not fulfill any convergence condition in existing research. To address this challenge, we add a proximal term to the classical updating rules. It is proven that such slight modification ensures convergence to a stationary point of $f$ in more general forms\cite{grippo2000convergence,xu2013block}.
In addition, a stopping criterion is derived based on the necessary condition of stationary points on the unit simplex.
This analysis also applies to the basic \ac{ba} algorithm and can thus be viewed as a general framework for problems of this type.

\subsection{Proximal Block Coordinate Descent}

The proximal \ac{bcd} method\cite{grippo2000convergence,xu2013block} updates each of the variables alternatively according to
\begin{equation}\label{eq:bcd-iter}
    \bx_k^{(i)} = \argmin_{\bx_k \in \calX_k} f(\bx_1^{(i)},...,\bx_{k-1}^{(i)},\bx_k, \bx_{k+1}^{(i-1)}, ..., \bx_{K}^{(i-1)}) + \frac{T_k^{(i)}}{2}\|\bx_k - \bx_k^{(i-1)}\|^2
\end{equation}
at the $i$-th iteration, where $T_k^{(i)}$ is given by
\begin{equation}
    T_k^{(i)}\begin{cases}
        =0, & f \ \text{is strictly convex in $\bx_k$},\\
        >0, & \text{otherwise}.
    \end{cases}
\end{equation}
The following lemma presents the convergence results of the updating rules~\eqref{eq:bcd-iter}. 
\begin{lemma}\cite[Proposition 2.7.1]{grippo2000convergence}\label{lemma:bcdconverge}
    Let $f$ be continuously differentiable over the convex set $\calX_1\times \calX_2\times \cdots \times \calX_K$. Suppose that the sequence $\{(\bx_1^{(i)},...,\bx_{K}^{(i)})\}_{i=0}^\infty$ generated by the proximal \ac{bcd} method~\eqref{eq:bcd-iter} has limit points, then every limit point $\bx^*=(\bx_1^*,...,\bx_K^*)$ is a stationary point of problem~\eqref{eq:opt-prob}, satisfying
    \begin{equation}
        \nabla f(\bx^*)(\bx-\bx^*)\ge 0, \forall \bx \in \calX_1\times \cdots \times \calX_K.
    \end{equation}
\end{lemma}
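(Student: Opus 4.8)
The plan is to follow the standard three-part argument for block coordinate methods under convex constraints: first a sufficient-decrease (monotonicity) estimate, then asymptotic regularity of the iterates, and finally passage to the limit in the per-block optimality conditions. Since the statement is quoted directly from \cite{grippo2000convergence}, I would reproduce its essential mechanism rather than invent a new one.

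First I would establish the monotone descent. Because $\bx_k^{(i)}$ minimizes the regularized subproblem in \eqref{eq:bcd-iter} over the convex set $\calX_k$ with the other blocks held fixed, comparing its objective value against the feasible competitor $\bx_k^{(i-1)}$ gives the per-block bound
$$f(\bx_1^{(i)},\dots,\bx_{k-1}^{(i)},\bx_k^{(i-1)},\bx_{k+1}^{(i-1)},\dots,\bx_K^{(i-1)}) - f(\bx_1^{(i)},\dots,\bx_k^{(i)},\bx_{k+1}^{(i-1)},\dots,\bx_K^{(i-1)}) \ge \frac{T_k^{(i)}}{2}\|\bx_k^{(i)}-\bx_k^{(i-1)}\|^2.$$
Summing over the full sweep $k=1,\dots,K$ telescopes to $f(\bx^{(i-1)})-f(\bx^{(i)}) \ge \sum_{k=1}^{K}\tfrac{T_k^{(i)}}{2}\|\bx_k^{(i)}-\bx_k^{(i-1)}\|^2 \ge 0$, so $\{f(\bx^{(i)})\}$ is non-increasing.

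Next, I would argue \emph{asymptotic regularity}. A limit point $\bx^*$ supplies a subsequence along which $f$ converges by continuity; combined with monotonicity, this forces the whole sequence $\{f(\bx^{(i)})\}$ to converge to $f(\bx^*)$. Hence the telescoped right-hand side is summable, and (with $T_k^{(i)}$ bounded away from $0$ in the regularized blocks, while the strictly convex blocks where $T_k^{(i)}=0$ are handled by uniqueness of the block minimizer) I would conclude $\|\bx_k^{(i)}-\bx_k^{(i-1)}\|\to 0$ for every $k$. This is the property that lets the partially updated intermediate vectors $(\bx_{<k}^{(i)},\bx_k^{(i)},\bx_{>k}^{(i-1)})$ all share the single limit $\bx^*$ along the subsequence.

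Finally, I would pass to the limit in the first-order optimality condition for the $k$-th subproblem, which is the variational inequality
$$\big(\nabla_{\bx_k} f(\bx_{<k}^{(i)},\bx_k^{(i)},\bx_{>k}^{(i-1)}) + T_k^{(i)}(\bx_k^{(i)}-\bx_k^{(i-1)})\big)^{\top}(\bx_k-\bx_k^{(i)}) \ge 0,\quad \forall \bx_k\in\calX_k.$$
Along the subsequence the proximal correction vanishes by asymptotic regularity, and by continuity of $\nabla f$ the gradient block tends to $\nabla_{\bx_k} f(\bx^*)$, yielding $\nabla_{\bx_k} f(\bx^*)^{\top}(\bx_k-\bx_k^*)\ge 0$ for all $\bx_k\in\calX_k$ and every $k$; since the feasible set is the product $\calX_1\times\cdots\times\calX_K$, these block conditions recombine into $\nabla f(\bx^*)^{\top}(\bx-\bx^*)\ge 0$ for all $\bx$ in the product, the claimed stationarity. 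The main obstacle is the asymptotic-regularity step: the strictly convex blocks carry $T_k^{(i)}=0$ and therefore receive no quadratic decrease term, so there one cannot simply read off $\|\bx_k^{(i)}-\bx_k^{(i-1)}\|\to 0$ from summability and must instead invoke the uniqueness-and-continuity argument of the Gauss–Seidel analysis to show the limit of the updated block still solves the limiting subproblem—precisely the delicate point that the proximal regularization is introduced to bypass in the non-strictly-convex case.
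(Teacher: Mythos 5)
The paper does not prove this lemma---it is imported verbatim from the cited reference---so there is no in-paper argument to compare against; your sketch reproduces the standard Grippo--Sciandrone/Bertsekas proof (sufficient decrease from comparing the block minimizer with the feasible competitor $\bx_k^{(i-1)}$, asymptotic regularity, then passage to the limit in the per-block variational inequalities and recombination over the product set) and is sound. The one point worth making explicit is that the summability route to $\|\bx_k^{(i)}-\bx_k^{(i-1)}\|\to 0$ needs $T_k^{(i)}$ bounded away from zero, not merely positive as the paper's update rule literally allows; you correctly flag this, and you also correctly identify that the blocks with $T_k^{(i)}=0$ get no quadratic decrease and must instead be handled by the uniqueness-and-continuity argument for strictly convex subproblems, so no gap remains beyond what is deferred to the cited source.
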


If $\calX_k$ is a unit simplex defined in \eqref{eq:unit_simplex}, the following lemma provides the necessary condition for $\bx_k$ to be a stationary point of $f$.
\begin{lemma}\cite[Example 2.1.2]{bertsekas1997nonlinear}\label{lemma:unit_simplex}
    Let $\bx^* = (\bx_1^*,...,\bx_K^*)$ be a stationary point of problem~\eqref{eq:opt-prob}, if $\calX_k$ is a unit simplex, we have
    \begin{equation}
        \bx_{k,i}^* > 0 \implies \frac{\partial f(\bx^*)}{\partial \bx_{k,i}} \le \frac{\partial f(\bx^*)}{\partial \bx_{k,j}}, \quad \forall j,
    \end{equation}
    or equivalently,
    \begin{equation}\label{eq:lemma_unit_simplex}
        \frac{\partial f(\bx^*)}{\partial \bx_{k,i}} \begin{cases}
            = c_k, & \bx^*_{k,i} >0 \\
            \ge c_k, & \bx^*_{k,i} =0
        \end{cases}.
    \end{equation}
    for some constant $c_k$.
\end{lemma}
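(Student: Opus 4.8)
The plan is to reduce the claim to a single-block variational inequality and then read off the coordinate conditions by perturbing $\bx_k^*$ along the edges of the simplex. First I would invoke Lemma~\ref{lemma:bcdconverge}, which guarantees that a stationary point $\bx^* = (\bx_1^*, \dots, \bx_K^*)$ of \eqref{eq:opt-prob} obeys $\nabla f(\bx^*)(\bx - \bx^*) \ge 0$ for every $\bx \in \calX_1 \times \cdots \times \calX_K$. Since the feasible set is a Cartesian product, I can hold all blocks $\ell \neq k$ fixed at $\bx_\ell^*$ and let only the $k$-th block vary; the inequality then collapses to the block-wise form $\sum_{i} \frac{\partial f(\bx^*)}{\partial \bx_{k,i}}(\bx_{k,i} - \bx_{k,i}^*) \ge 0$ for all $\bx_k$ in the unit simplex $\calX_k$ of \eqref{eq:unit_simplex}. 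For brevity write $g_i \triangleq \frac{\partial f(\bx^*)}{\partial \bx_{k,i}}$.

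The key step is an edge perturbation. Fix an index $i$ with $\bx_{k,i}^* > 0$ and any index $j$. For $0 < \epsilon \le \bx_{k,i}^*$, the point $\bx_k = \bx_k^* + \epsilon(\be_j - \be_i)$, obtained by shifting mass $\epsilon$ from coordinate $i$ to coordinate $j$ (with $\be_i, \be_j$ the standard basis vectors), is still nonnegative and still sums to one, hence lies in $\calX_k$. Substituting this feasible direction into the block-wise inequality yields $\epsilon(g_j - g_i) \ge 0$, so $g_i \le g_j$. As $j$ was arbitrary, this is exactly the implication $\bx_{k,i}^* > 0 \Rightarrow \frac{\partial f(\bx^*)}{\partial \bx_{k,i}} \le \frac{\partial f(\bx^*)}{\partial \bx_{k,j}}$ for all $j$.

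To recover the equivalent form \eqref{eq:lemma_unit_simplex}, I would set $c_k$ to be the common value of $g_i$ over the support $\{ i : \bx_{k,i}^* > 0 \}$, which is nonempty because $\bx_k^*$ lies in the simplex. Any two support indices $i, i'$ satisfy $g_i \le g_{i'}$ and $g_{i'} \le g_i$ by the previous step, so all support gradients coincide and $c_k$ is well-defined; and for any $j$ with $\bx_{k,j}^* = 0$, comparing against any support index gives $g_j \ge c_k$. This is precisely the stated case distinction.

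The argument is elementary, so the only obstacle is bookkeeping: checking that the exchange perturbation stays feasible (nonnegativity for $\epsilon \le \bx_{k,i}^*$ and exact preservation of the normalization constraint) and noting that the support cannot be empty. No convexity or strict convexity of $f$ is needed here, only differentiability together with the first-order stationarity inequality supplied by Lemma~\ref{lemma:bcdconverge}.
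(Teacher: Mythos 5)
Your proof is correct. The paper itself offers no argument for this lemma---it is cited directly from \cite[Example 2.1.2]{bertsekas1997nonlinear}---and the edge-perturbation argument you give is precisely the standard one behind that reference: restrict the variational inequality to the $k$-th block of the product feasible set, shift mass $\epsilon$ from a positive coordinate $i$ to an arbitrary coordinate $j$ along the feasible direction $\be_j - \be_i$, and read off $g_i \le g_j$; the case distinction in \eqref{eq:lemma_unit_simplex} then follows by symmetry on the support. One small bookkeeping remark: Lemma~\ref{lemma:bcdconverge} is a statement about limit points of the proximal BCD iteration, not a definition of stationarity, so strictly speaking you should take the variational inequality $\nabla f(\bx^*)(\bx-\bx^*)\ge 0$ as the \emph{definition} of a stationary point of \eqref{eq:opt-prob} (which is how the paper implicitly uses it) rather than deriving it from that lemma. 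With that reading, your argument is complete and self-contained, and it has the minor virtue of making the lemma independent of the external reference.
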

Lemma~\ref{lemma:unit_simplex}, as we will see in the following, provides the termination condition for the \ac{cd} optimization problems.

\subsection{Point-to-point Channel}
% Let
% \begin{equation}
%     P_{SS_TUXZ} = P_{S}(s)P_{S_T|S}(s_T|s)P_{U|S_T}(u|s_T)P_{X|US_T}(x
% \end{equation}

The achievable \ac{cd} function~\eqref{eq:p2prdfunc} can be reformulated as
\begin{equation}\label{eq:p2pfunc_optimization_problem}
\begin{split}
    R(\calP_D) &= \max_{(U,V,X, \hS)\in \calP_D} -H(U|Z) + H(U|S_T) - H(V|U,Z) + H(V|U,S_T,Y')\\
    & = \max_{(U,V,X,\hS)\in \calP_D} \sum_{s,s_T,z,y',u,v,x}P_S(s)P_{S_T|S}(s_T|s)P_{U|S_T}(u|s_T)P_{X|US_T}(x|u,s_T)P_{Z|XS}(z|x,s)\\
    &\quad\cdot \mathbbm{1}\{y'=\phi(z)\}P_{V|S_TUY'}(v|s_T,u,y')\br{\log\frac{P_{U|Z}(u|z)P_{V|UZ}(v|u,z)}{P_{U|S_T}(u|s_T)P_{V|S_TUY'}(v|s_T,u,y')}}
\end{split}
\end{equation}
% and the distortion constraint in $\calP_D$ is alternatively expressed as
% \begin{equation}
% \begin{split}
%     \exists P_{\hS|UVZ}: &\sum_{s,s_T,z,y',u,v,x,\hat{s}}P_S(s)P_{S_T|S}(s_T|s)P_{U|S_T}(u|s_T)P_{X|US_T}(x|u,s_T)P_{Z|XS}(z|x,s)\\
%     &\quad\cdot\mathbbm{1}\{y'=\phi(z)\}P_{V|S_TUY'}(v|s_T,u,y')P_{\hS|UVZ}(\hat{s}|u,v,z) d(s,\hat{s}) \le D.
% \end{split}
% \end{equation}
To remove the dependency on $P_{X|US_T}$ using the fact that $X$ is a deterministic function of $(U,S_T)$, we leverage the Shannon strategy\cite{dupuis2004blahut}, which treats $U$ as the possible mapping from $S_T$ to $X$, such that $X=x_U(S_T)$. This approach is shown to converge faster than the traditional method, involving additional alternative updating of $P_{X|US_T}$\cite{heegard1983capacity}, but suffers from high computation and memory consumption because the cardinality $|\calU|$ is extended to $|\calX|^{|\calS_T|}$ in this case. 
By leveraging the Shannon strategy, the channel transition distribution becomes $P_{Z|XS}(z|x=x_u(s_T),s) = P_{Z|US_TS}(z|u,s_T,s)$ and the others remain unchanged. 

Similarly to the \ac{ba} algorithm, we treat $P_{U|Z}$ and $P_{V|UZ}$ as two additional distributions to be optimized. Then, the overall optimization problem is stated as
\begin{equation}
\begin{split}
    \max_{\substack{P_{U|S_T}, P_{U|Z},\\ P_{V|S_TUY'}, P_{V|UZ},\\ P_{\hS|UVZ}}}&\quad \sum_{s,s_T,z,y',u,v}P_S(s)P_{S_T|S}(s_T|s)P_{U|S_T}(u|s_T)P_{Z|US_TS}(z|u,s_T,s)\\[-8mm]
    &\qquad\qquad\cdot\mathbbm{1}\{y'=\phi(z)\}P_{V|S_TUY'}(v|s_T,u,y')\br{\log\frac{P_{U|Z}(u|z)P_{V|UZ}(v|u,z)}{P_{U|S_T}(u|s_T)P_{V|S_TUY'}(v|s_T,u,y')}}\\
    &\mathrm{s.t.}\quad \sum_{s,s_T,z,y',u,v,\hat{s}}P_S(s)P_{S_T|S}(s_T|s)P_{U|S_T}(u|s_T)P_{Z|US_TS}(z|u,s_T,s)\\
    &\qquad\qquad\cdot\mathbbm{1}\{y'=\phi(z)\}P_{V|S_TUY'}(v|s_T,u,y')P_{\hS|UVZ}(\hat{s}|u,v,z) d(s,\hat{s}) \le D.
\end{split}\label{eq:p2poptprob}
\end{equation}
We define its Lagrangian function with the Lagrangian multiplier $\rho \ge 0$ as
\begin{equation}
\begin{split}
    &L_{\rho}(P_{U|S_T}, P_{U|Z},P_{V|S_TUY'}, P_{V|UZ},  P_{\hS|UVZ})\\
    &= \sum_{s,s_T,z,y',u,v}P_S(s)P_{S_T|S}(s_T|s)P_{U|S_T}(u|s_T)P_{Z|US_TS}(z|u,s_T,s)\mathbbm{1}\{y'=\phi(z)\}P_{V|S_TUY'}(v|s_T,u,y')\\
    &\qquad \cdot \br{\log\frac{P_{U|Z}(u|z)P_{V|UZ}(v|u,z)}{P_{U|S_T}(u|s_T)P_{V|S_TUY'}(v|s_T,u,y')} - \rho \sum_{\hat{s}}P_{\hS|UVZ}(\hat{s}|u,v,z) d(s,\hat{s})}.
\end{split}
\end{equation}
Since $L_{\rho}$ is coordinate-wise strictly convex in $P_{U|S_T}, P_{U|Z},P_{V|S_TUY'}, P_{V|UZ}$, the updating rules at the $i$-th iteration can be derived from the KKT conditions. For $P_{\hS|UVZ}$, we note that $L_{\rho}$ is linear in it, so we set the associated proximal constant in~\eqref{eq:bcd-iter} to be positive. This leads to the updating rules for each distribution as follows
\begin{align}
    &P_{U|Z}^{(i)}(u|z)= \frac{\sum_{s,s_T,y',v}p^{(i-1)}(s,s_T,u,z,y',v)}{\sum_{s,s_T,u,y',v}p^{(i-1)}(s,s_T,u,z,y',v)}\label{eq:optpu_z}\\
    &P_{V|UZ}^{(i)}(v|u,z)=\frac{\sum_{s,s_T,y'}p^{(i-1)}(s,s_T,u,z,y',v)}{\sum_{s,s_T,y',v}p^{(i-1)}(s,s_T,u,z,y',v)}\label{eq:optpv_uz}\\
    &P_{U|S_T}^{(i)}(u|s_T)\propto\notag\\
    &\quad\prod_{s,z,y',v,\hat{s}}\br{\frac{P_{U|Z}^{(i)}(u|z)P_{V|UZ}^{(i)}(v|u,z)}{P^{(i-1)}_{V|S_TUY'}(v|s_T,u,y')}}^{\kappa^{(i-1)}(s,s_T,u,z,y',v)}\exp\br{-\rho \kappa^{(i-1)}(s,s_T,u,z,y',v) P^{(i-1)}_{\hS|UVZ}(\hat{s}|u,v,z) d(s,\hat{s})}\label{eq:optpu_st}\\
    &P_{V|S_TUY'}^{(i)}(v|s_T,u,y',\hat{s})\propto \prod_{s,z}P^{(i)}_{V|UZ}(v|u,z)^{\nu^{(i)}(s,s_T,u,z,y')}\exp\br{-\rho \nu^{(i)}(s,s_T,u,z,y') P^{(i-1)}_{\hS|UVZ}(\hat{s}|u,v,z)d(s,\hat{s})}\label{eq:optpv_stufb}\\
    &P_{\hS|UVZ}^{(i)}(\cdot|u,v,z)=\notag\\
    &\quad \argmin_{P_{\hS|UVZ}(\cdot|u,v,z)}\sum_{s,s_T,y',\hat{s}}p^{(i)}(s,s_T,u,z,y',v)d(s,\hat{s})P_{\hS|UVZ}(\hat{s}|u,v,z)+ \frac{T^{(i)}_{u,v,z}}{2}\left\|P_{\hS|UVZ}(\hat{s}|u,v,z) - P^{(i-1)}_{\hS|UVZ}(\hat{s}|u,v,z)\right\|^2.\label{eq:optphs_uvz}
\end{align}
where $T^{(i)}_{u,v,z}$ is pre-defined positive proximal constants, $P_{\hS|UVZ}^{(i)}(\cdot|u,v,z)$ is viewed as a vector of length $|\hat{\calS}|$ on the unit simplex,
\begin{align}
    \kappa^{(i)}(s,s_T,u,z,y',v) &= \frac{P_S(s)P_{S_T|S}(s_T|s)P_{Z|US_TS}(z|u,s_T,s)\mathbbm{1}\{y'=\phi(z)\}P^{(i)}_{V|S_TUY'}(v|s_T,u,y')}{\sum_{s,z}P_S(s)P_{S_T|S}(s_T|s)P_{Z|US_TS}(z|u,s_T,s)}\label{eq:kappa}\\
    \nu^{(i)}(s,s_T,u,z,y') &= \frac{P_S(s)P_{S_T|S}(s_T|s)P^{(i)}_{U|S_T}(u|s_T)P_{Z|US_TS}(z|u,s_T,s)\mathbbm{1}\{y'=\phi(z)\}}{\sum_{s,z}P_S(s)P_{S_T|S}(s_T|s)P_{U|S_T}(u|s_T)P_{Z|US_TS}(z|u,s_T,s)\mathbbm{1}\{y'=\phi(z)\}}\\
    p^{(i)}(s,s_T,u,z,y',v) &= P_S(s)P_{S_T|S}(s_T|s)P^{(i)}_{U|S_T}(u|s_T)P_{Z|US_TS}(z|u,s_T,s)\mathbbm{1}\{y'=\phi(z)\}P^{(i)}_{V|S_TUY'}(v|s_T,u,y')
\end{align}
and $\propto$ indicates that the distribution should be normalized after updating. After convergence, we replace the superscript $(i)$ by $*$ to denote the resulting stationary points. Note that $P_{\hS|UVZ}^{(i)}(\hat{s}|u,v,z)$ converges to the optimal estimator defined in Proposition~\ref{prop:cdcausalconcave}. To see this, we first observe that the proximal term vanishes as $i\to \infty$ when converges, and the remaining problem is a linear program in $P_{\hS|UVZ}^{(i)}(\cdot|u,v,z)$. Accordingly, the optimal solution is an extreme point of the unit simplex, which is given as 
\begin{equation}
    P_{\hS|UVZ}^*(\hat{s}|u,v,z)= \mathbbm{1}\brcur{\hat{s} = \argmin_{\hat{s}'\in \hat{\calS}}\sum_{s,s_T,y'} p^*(s,s_T,u,z,y',v)d(s,\hat{s}')},
\end{equation}
and the term $\sum_{s_T,y'} p^*(s,s_T,u,z,y',v)$ is in fact the joint distribution of $(S,U,V,Z)$ after convergence.
Before converging, the problem~\eqref{eq:optphs_uvz} is a quadratic program over linear constraints, such that common numerical solvers\cite{aps2022mosek,osqp} can be utilized. To avoid unnecessary solver steps with high computational costs, one can first examine whether $P_{\hS|UVZ}^{(i-1)}(\cdot|u,v,z)$ is already the optimal solution, i.e., whether $P_{\hS|UVZ}^{(i-1)}(\cdot|u,v,z)$ minimizes the first term in \eqref{eq:optphs_uvz}. 
If so, then the solver steps can be skipped. Consider a linear program $\min_{\bx} \bm{a}^\top\bx$ with $\bx$ on the unit simplex. $\bx^*$ is the optimizer if and only if $\bm{a}^\top\bx^*$ is equal to the minimum component of the vector $\bm{a}$.
This can significantly reduce the running time compared to the numerical solvers, especially for $i\to \infty$ when the proximal term converges to zero.

To determine the stopping criterion for convergence, we define the function
\begin{equation}
\begin{split}
    &B_{\rho}(P_{U|S_T},P_{V|S_TUY'},P_{U|Z},P_{V|UZ},P_{\hS|UVZ})\\
    &=  \sum_{s_T}\max_{u\in \calU} \sum_{y'}\max_{v\in \calV} \sum_{s,z}P_S(s)P_{S_T|S}(s_T|s)P_{Z|US_TS}(z|u,s_T,s)\mathbbm{1}\{y'=\phi(z)\}\\
    &\qquad \cdot \br{\log\frac{P_{U|Z}(u|z)P_{V|UZ}(v|u,z)}{P_{U|S_T}(u|s_T)P_{V|S_TUY'}(v|s_T,u,y')} - \rho \sum_{\hat{s}}P_{\hS|UVZ}(\hat{s}|u,v,z) d(s,\hat{s}) - 1} + 1,\label{eq:p2pub}
\end{split}
\end{equation}
and let
\begin{equation}\label{eq:bi_li}
\begin{split}
    B_{\rho}^{(i)} = B_{\rho}(P^{(i)}_{U|S_T},P^{(i)}_{V|S_TUY'},P^{(i)}_{U|Z},P^{(i)}_{V|UZ},P^{(i)}_{\hS|UVZ})\\
    L_{\rho}^{(i)} = L_{\rho}(P^{(i)}_{U|S_T},P^{(i)}_{V|S_TUY'},P^{(i)}_{U|Z},P^{(i)}_{V|UZ},P^{(i)}_{\hS|UVZ}).
\end{split}
\end{equation}

\begin{prop}\label{prop:stopping_criterion}
    For any $P^{(i)}_{U|S_T}$, $P^{(i)}_{V|S_TUY'}$, $P^{(i)}_{U|Z}$, $P^{(i)}_{V|UZ}$, $P^{(i)}_{\hS|UVZ}$ updated at the $i$-th iteration from \eqref{eq:optpu_z} -- \eqref{eq:optphs_uvz}, we have $B_{\rho}^{(i)} \ge L_{\rho}^{(i)}$,
    and the equality holds true if $P^{(i)}_{U|S_T},P^{(i)}_{V|S_TUY'},P^{(i)}_{U|Z},P^{(i)}_{V|UZ},P^{(i)}_{\hS|UVZ}$ converges to a stationary point.
\end{prop}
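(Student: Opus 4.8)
The plan is to read $B_\rho$ as the ``nested-max surrogate'' of $L_\rho$: every expectation in $L_\rho$ taken against the input-side conditionals $P_{U|S_T}$ and $P_{V|S_TUY'}$ is replaced in $B_\rho$ by a maximization over $u$ and $v$, while the posterior factors $P_{U|Z}$, $P_{V|UZ}$, $P_{\hS|UVZ}$ are left untouched. First I would dispose of the constants $-1$ and $+1$. Because the channel normalizes, the inner weight satisfies $\sum_{s,z}P_S(s)P_{S_T|S}(s_T|s)P_{Z|US_TS}(z|u,s_T,s)\mathbbm{1}\{y'=\phi(z)\}=P_{S_T}(s_T)$ independently of $u$ (after summing over $y'$), so the $-1$ contributes $-\sum_{s_T}P_{S_T}(s_T)=-1$, which the trailing $+1$ exactly cancels. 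Consequently $B_\rho = \sum_{s_T}\max_{u}\sum_{y'}\max_{v}\brsq{\sum_{s,z}P_S(s)P_{S_T|S}(s_T|s)P_{Z|US_TS}(z|u,s_T,s)\mathbbm{1}\{y'=\phi(z)\}\,F}$, where $F$ is the log-ratio-minus-distortion integrand of $L_\rho$; establishing this telescoping identity is the first concrete step.

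For the inequality $B_\rho^{(i)}\ge L_\rho^{(i)}$ I would write $L_\rho=\sum_{s_T}\sum_u P_{U|S_T}(u|s_T)\,\alpha(s_T,u)$ with the per-$u$ payoff $\alpha(s_T,u)=\sum_{y'}\sum_v P_{V|S_TUY'}(v|s_T,u,y')\brsq{\sum_{s,z}(\cdots)F}$, and compare it against $\gamma(s_T,u)=\sum_{y'}\max_v\brsq{\sum_{s,z}(\cdots)F}$. Since $P_{V|S_TUY'}(\cdot|s_T,u,y')$ is a probability vector, a weighted average is at most a maximum, giving $\alpha\le\gamma$ pointwise; since $P_{U|S_T}(\cdot|s_T)$ is a probability vector, $\sum_u P_{U|S_T}(u|s_T)\gamma(s_T,u)\le \max_u\gamma(s_T,u)$. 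Chaining the two yields $L_\rho=\sum_{s_T}\sum_u P_{U|S_T}\alpha \le \sum_{s_T}\max_u\gamma = B_\rho$. The only care needed is that the explicit $-\log P_{U|S_T}$ and $-\log P_{V|S_TUY'}$ terms inside $F$ are treated simply as part of the functions $\alpha,\gamma$: the average-$\le$-max inequality holds for arbitrary real-valued functions, so their self-referential nature causes no trouble here.

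For the equality at a stationary point I would invoke the simplex optimality condition of Lemma~\ref{lemma:unit_simplex} on each coordinate block. Differentiating $L_\rho$ in $P_{U|S_T}(u|s_T)$ (holding the posteriors fixed, as in the double-variable formulation) gives $\partial L_\rho/\partial P_{U|S_T}(u|s_T)=\alpha(s_T,u)-P_{S_T}(s_T)$, where the constant $-P_{S_T}(s_T)$ is exactly the derivative of the entropy term $-P_{S_T}(s_T)P_{U|S_T}\log P_{U|S_T}$, i.e.\ the same bookkeeping that produced the $-1$ in $B_\rho$. Lemma~\ref{lemma:unit_simplex} then forces $\alpha(s_T,u)$ to be constant on the support of $P_{U|S_T}(\cdot|s_T)$ and to attain its maximum there, so the outer average equals the outer maximum; the analogous computation for the block $P_{V|S_TUY'}(\cdot|s_T,u,y')$ forces $\alpha=\gamma$ on that support. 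Both inequalities of the previous paragraph then become equalities and $B_\rho=L_\rho$.

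The hard part, and where I expect the real obstacle, is the off-support behaviour in the equality claim. The outer bound uses $\max_u\gamma(s_T,u)$, but stationarity of $L_\rho$ only controls $\alpha$ (not $\gamma$) off the support of $P_{U|S_T}$, and for a coordinate with $P_{U|S_T}(u|s_T)=0$ the corresponding block $P_{V|S_TUY'}(\cdot|s_T,u,y')$ does not enter $L_\rho$ at all, so its gradient vanishes and Lemma~\ref{lemma:unit_simplex} yields no flatness there; a priori $\gamma$ could exceed its support value at such a $u$ and make $B_\rho$ strictly larger. Closing this gap will require the explicit Gibbs form of the fixed-point updates~\eqref{eq:optpu_st} and~\eqref{eq:optpv_stufb}: at a genuine fixed point $P_{U|S_T}(u|s_T)\propto\exp(\mathrm{payoff})$, so a vanishing coordinate drives its payoff into the boundary regime, which I would convert into the needed inequality $\gamma(s_T,u)\le\max_{\mathrm{supp}}\gamma$. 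Reconciling this boundary argument with the finiteness of the logarithmic terms is the delicate step; the rest is the routine average-$\le$-max and simplex-KKT machinery.
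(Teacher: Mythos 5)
Your proof follows essentially the same route as the paper's: the inequality $B_{\rho}^{(i)}\ge L_{\rho}^{(i)}$ is the two-level average-versus-maximum bound (which the paper dismisses with ``by definition''), and the equality is obtained by applying Lemma~\ref{lemma:unit_simplex} block by block, converting $\sum_{u,s_T}P^{(i)}_{U|S_T}(u|s_T)\,\partial L_{\rho}^{(i)}/\partial P^{(i)}_{U|S_T}(u|s_T)=L_{\rho}^{(i)}-1$ into $\sum_{s_T}\max_u(\cdot)$ and repeating for the $V$-block. Your bookkeeping of the $-1$/$+1$ constants and of the entropy-derivative term $-P_{S_T}(s_T)$ matches the paper's computation exactly.

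The obstacle you flag in your last paragraph is genuine, and you should know the paper does not address it: its proof simply asserts that the two support conditions ``should be satisfied simultaneously'' and concludes $B_{\rho}^{(i)}=L_{\rho}^{(i)}$, with no discussion of coordinates $u$ for which $P^{(i)}_{U|S_T}(u|s_T)=0$. Your proposed repair via the explicit form of the updates is the correct one, and it is cleaner than you anticipate. In \eqref{eq:optpv_stufb} the weight $\nu^{(i)}(s,s_T,u,z,y')$ has $P^{(i)}_{U|S_T}(u|s_T)$ appearing in both numerator and denominator, so it cancels: the $V$-update is a per-$(s_T,u,y')$ \emph{conditional} maximization that is nondegenerate for every $u$, not only those in the support of $P_{U|S_T}(\cdot|s_T)$. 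At its fixed point the Gibbs form makes the bracketed integrand (log-ratio minus weighted distortion, including the $-\log P_{V|S_TUY'}$ term) constant in $v$ on its support, so your $\gamma(s_T,u)=\alpha(s_T,u)$ holds for \emph{all} $u$, and the outer maxima coincide because Lemma~\ref{lemma:unit_simplex} already places $\max_u\alpha$ on the support of $P_{U|S_T}(\cdot|s_T)$. This is precisely why the proposition's hypothesis is that the distributions are produced by \eqref{eq:optpu_z}--\eqref{eq:optphs_uvz} rather than merely forming a stationary point of $L_{\rho}$; with that reading, no separate boundary argument about vanishing coordinates is needed, though the paper leaves all of this implicit.
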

\begin{proof}
   It is easy to show that $B_{\rho}^{(i)} \ge L_{\rho}^{(i)}$ by definition. Then, we apply Lemma~\ref{lemma:unit_simplex} to show that if the partial derivatives of $L_{\rho}^{(i)}$ with respect to each individual distribution satisfy~\eqref{eq:lemma_unit_simplex}, we have $B_{\rho}^{(i)} = L_{\rho}^{(i)}$. First note
   \begin{equation}
       \frac{\partial L_{\rho}^{(i)}}{\partial P^{(i)}_{U|Z}(u|z)} = \frac{\sum_{s,s_T,y',v} p^{(i)}(s,s_T,u,z,y',v)}{P^{(i)}_{U|Z}(u|z)} \overset{i\to \infty}{=\!=\!=} \sum_{s,s_T,u,y',v}p^{(i)}(s,s_T,u,z,y',v),
   \end{equation}
   by plugging in \eqref{eq:optpv_uz}, which is independent of $u$ and thus Lemma~\ref{lemma:unit_simplex} automatically fulfills for $P^{(i)}_{U|Z}(u|z)$ as $i\to \infty$.
   % by setting $c_k=\sum_{s,s_T,u,y',v}p^{(i)}(s,s_T,u,z,y',v)$. 
   This also applies to $P^{(i)}_{V|UZ}(v|u,z)$ and $P^{(i)}_{\hS|UVZ}(\hat{s}|u,v,z)$. Then,
   \begin{equation}
   \begin{split}
       \frac{\partial L_{\rho}^{(i)}}{\partial P^{(i)}_{U|S_T}(u|s_T)} = \sum_{s,z,y',v} & P_S(s)P_{S_T|S}(s_T|s)P_{Z|US_TS}(z|u,s_T,s)\mathbbm{1}\{y'=\phi(z)\}P^{(i)}_{V|S_TUY'}(v|s_T,u,y')\\
       &\cdot\br{\log\frac{P^{(i)}_{U|Z}(u|z)P^{(i)}_{V|UZ}(v|u,z)}{P^{(i)}_{U|S_T}(u|s_T)P^{(i)}_{V|S_TUY'}(v|s_T,u,y')} - \rho \sum_{\hat{s}}P^{(i)}_{\hS|UVZ}(\hat{s}|u,v,z) d(s,\hat{s}) -1}.
   \end{split}
   \end{equation}
   If $P^{(i)}_{U|S_T}(u|s_T)$ is a stationary point, we should have for all $s_T\in \calS_T$
   \begin{equation*}
       \forall u' \in \calU: P^{(i)}_{U|S_T}(u'|s_T) > 0 \implies \frac{\partial L_{\rho}^{(i)}}{\partial P^{(i)}_{U|S_T}(u'|s_T)} = \max_{u\in \calU} \frac{\partial L_{\rho}^{(i)}}{\partial P^{(i)}_{U|S_T}(u|s_T)},
   \end{equation*}
   since $P^{(i)}_{U|S_T}(u|s_T)$ is on the unit simplex, which leads to
   \begin{equation}
   \begin{split}
       L_{\rho}^{(i)} - 1
       &= \sum_{u,s_T}P^{(i)}_{U|S_T}(u|s_T)\frac{\partial L_{\rho}^{(i)}}{\partial P^{(i)}_{U|S_T}(u|s_T)} = \sum_{s_T} \max_{u\in \calU} \frac{\partial L_{\rho}^{(i)}}{\partial P^{(i)}_{U|S_T}(u|s_T)}\\
       &= \sum_{s_T}\max_{u\in \calU} \sum_{s,z,y',v} P_S(s)P_{S_T|S}(s_T|s)P_{Z|US_TS}(z|u,s_T,s)\mathbbm{1}\{y'=\phi(z)\}P^{(i)}_{V|S_TUY'}(v|s_T,u,y')\\
       &\quad\br{\log\frac{P^{(i)}_{U|Z}(u|z)P^{(i)}_{V|UZ}(v|u,z)}{P^{(i)}_{U|S_T}(u|s_T)P^{(i)}_{V|S_TUY'}(v|s_T,u,y')} - \rho \sum_{\hat{s}}P^{(i)}_{\hS|UVZ}(\hat{s}|u,v,z) d(s,\hat{s}) - 1}.
   \end{split}
   \end{equation}
   In a similar way, we can derive the condition for $P^{(i)}_{V|S_TUY'}(v|s_T,u,y')$. At the stationary point, both conditions should be satisfied simultaneously, implying that $B_{\rho}^{(i)}$ is attained by $L_{\rho}^{(i)}$. This concludes the proof.
\end{proof}
From Proposition~\ref{prop:stopping_criterion}, it is reasonable to set the termination condition as $B_{\rho}^{(i)}- L_{\rho}^{(i)}\le \delta$ for any small $\delta > 0$. Consequently, The overall algorithm for optimizing the problem~\eqref{eq:p2pfunc_optimization_problem} is summarized in Algorithm~\ref{alg:proximalbcd}.

\begin{algorithm}
    \caption{Proximal block coordinate updating method}
    \begin{algorithmic}
    \State \textbf{Input:} 
    \INDSTATE Lagrangian multiplier $\rho > 0$, stopping criterion $\delta > 0$, proximal constant $\{T^{(i)}_{u,v,z}\}_{i=0}^\infty,\ \forall u,v,z$,
    \INDSTATE $P_S$, $P_{S_T|S}$, $P_{Z|XS}$, $\phi(\cdot)$, $d(\cdot,\cdot)$
    \State \textbf{Initialize:}
    \INDSTATE Randomly initialize $P^{(0)}_{U|S_T}$, $P^{(0)}_{V|S_TUY'}$, $P^{(0)}_{\hS|UVZ}$ on the unit simplex
    \INDSTATE $i \gets 0$
    \Repeat
    \State $i \gets i+1$
    \State Update $P^{(i)}_{U|S_T}$, $P^{(i)}_{V|S_TUY'}$, $P^{(i)}_{U|Z}$, $P^{(i)}_{V|UZ}$ according to \eqref{eq:optpu_z} -- \eqref{eq:optpv_stufb}
    \If{$P^{(i-1)}_{\hS|UVZ}$ minimize \eqref{eq:optphs_uvz}} \Comment{This is much simpler than using numerical solvers.}
    \State $P^{(i)}_{\hS|UVZ} \gets P^{(i-1)}_{\hS|UVZ}$
    \Else 
    \State Update $P^{(i)}_{\hS|UVZ}$ by solving \eqref{eq:optphs_uvz}
    \EndIf
    \State Compute $B_{\rho}^{(i)}$ and $L_{\rho}^{(i)}$ according to \eqref{eq:bi_li}
    \Until{$B_{\rho}^{(i)}- L_{\rho}^{(i)}\le \delta$}
    \end{algorithmic}
    \label{alg:proximalbcd}
\end{algorithm}

\begin{remark}
    In the case of strictly causal and causal \ac{sit}, in which $U$ does not depend on $S_T$, the product in \eqref{eq:optpu_st} and the summation in the denominator of \eqref{eq:kappa} should be additionally performed over $s_T$, and the summation over $s_T$ in \eqref{eq:p2pub} is operated before $\max_{u\in \calU}$.
\end{remark}

Although the proposed algorithm for computing $R(\calP_D)$ guarantees the convergence to a stationary point, the final result is still not necessarily the global optimal solution due to the non-convexity of the original problem. Nevertheless, it is easy to show that the global optimum is guaranteed if $S_T=\varnothing$, in which case $V=\varnothing$ according to Proposition~\ref{prop:cdcausalconcave}. Thus, the resulting problem becomes a convex problem whose stationary point is equivalent to the global optimizer.

\subsection{Broadcast Channel}

As described in Appendix~\ref{app:bc-prop},
the region $\ridd$ is obtained by maximizing the weighted sum rate $J(\alpha)$ in~\eqref{eq:weighted_sum_rate} for $0\le \alpha \le 1$.
% Since the \ac{cd} inner region $\ridd$ in \eqref{eq:rdregion} is a convex set, it is sufficient to determine its boundary to exhaust the set. By leveraging the supporting hyperplane theorem\cite{boyd2004convex}, the boundary can be obtained by maximizing the weighted sum rate 
% \begin{equation}\label{eq:bc_orig_problem}
% \begin{split}
%     \max_{R_1, R_2, (U_1, U_2, V_1, V_2, X, \hS_1, \hS_2) \in \pdd}\ &\alpha R_1 + (1-\alpha) R_2\\
%     \mathrm{s.\ t.\ } &0\le R_2 \le I(U_2; Z_2) - I(U_2; S_T) - \max\br{I(V_2; S_T, Y'| U_2, Z_2), I(V_2; S_T, Y'| U_2, Z_1)}\\
%     & 0\le R_1 \le I(U_1;Z_1|U_2) - I(U_1; S_T |U_2) - I(V_1; S_T, Y' | U_1, U_2, V_2, Z_1)
% \end{split}
% \end{equation}
% with $0\le \alpha \le 1$. 
The random variable $X$ can be eliminated again using the Shannon strategy such that the channel is characterized by $P_{Z_1Z_2|U_1S_TS}(z_1,z_2|u_1,s_T,s) = P_{Z_1Z_2|XS}(z_1,z_2|x=x_{u_1}(s_T),s)$, where $x_{u_1}$ is the mapping from $\calS_T \to \calX$ indexed by $u_1\in\calU_1$. Thus, the cardinality of $U_1$ is extended to $|\calX|^{|\calS_T|}$ in this case. However, due to the existence of the $\min(\cdot)$ function in $\ridd$, the formulated problem is not differentiable everywhere, which makes the convergence analysis difficult if the updating rules lead to non-differentiable locations. We thus focus on the special case in Proposition~\ref{prop:deg-bc-scc-commonly} and leave the general analysis as the future work. 
Treating $P_{U_2|Z_2}$, $P_{U_1|U_2Z_1}$, and $P_{V_1|U_1U_2Z_1}$ as additional distributions to be optimized, the weighted sum rate optimization problem for $\calR(\pddcf{D_1}{\infty}|_{Y'_2=\varnothing})$ can be reformulated as

\begin{equation}
\begin{split}
    \max_{\substack{P_{U_2|S_T}, P_{U_2|Z_2},\\ P_{U_1|U_2S_T}, P_{U_1|U_2Z_1},\\ P_{V_1|U_1U_2S_TY'}, P_{V_1|U_1U_2Z_1} \\ P_{\hS_1|U_1U_2V_1Z_1}}} 
    & \sum Q(s,s_T,u_1,u_2,v_1,z_1,z_2,y')\left((1-\alpha)\log \frac{P_{U_2|Z_2}(u_2|z_2)}{P_{U_2|S_T}(u_2|s_T)} \right. \\[-10mm]
    &\quad \left. + \alpha \log \frac{P_{U_1|U_2Z_1}(u_1|u_2,z_1)P_{V_1|U_1U_2Z_1}(v_1|u_1,u_2,z_1)}{P_{U_1|U_2S_T}(u_1|u_2,s_T)P_{V_1|U_1U_2S_TY'}(v_1|u_1,u_2,s_T,y')}\right)\\[3mm]
    % &+ (1-\alpha)\min\left(\sum Q(s,s_T,u_1,u_2,v_1,v_2,z_1,z_2,y')\log P_{V_2|U_2Z_1}(v_2|u_2,z_1),\right.\\
    % &\hspace{20mm} \left.\sum Q(s,s_T,u_1,u_2,v_1,v_2,z_1,z_2,y')\log P_{V_2|U_2Z_2}(v_2|u_2,z_2)\right)\\
    &\hspace{-20mm}\mathrm{s.\ t.\ } \sum Q(s,s_T,u_1,u_2,v_1,z_1,z_2,y')P_{\hS_1|U_1U_2V_1Z_1}(\hat{s}_1|u_1,u_2,v_1,z_1)d_2(s,\hat{s}_1)\le D_1,\\
    % &\hspace{-10mm}\sum Q(s,s_T,u_1,u_2,v_1,v_2,z_1,z_2,y')P_{\hS_2|U_2V_2Z_2}(\hat{s}_2|u_2,v_2,z_2)d_2(s,\hat{s}_2)\le D_2.
\end{split}\label{eq:bcoptprob}
\end{equation}
where 
\begin{equation}
\begin{split}
    Q(s,s_T,u_1,u_2,v_1,z_1,z_2,y') &=  P_S(s)P_{S_T|S}(s_T|s)P_{U_2|S_T}(u_2|s_T)P_{U_1|U_2S_T}(u_1|u_2,s_T)\\
    & \cdot P_{Z_1Z_2|U_1S_TS}(z_1,z_2|u_1,s_T,s)\mathbbm{1}\{y'=\phi_1(z_1)\} P_{V_1|U_1U_2S_TY'}(v_1|u_1,u_2,s_T,y'),
\end{split}
\end{equation}
and the summations are performed over $(s,s_T,u_1,u_2,v_1,z_1,z_2,y',\hat{s}_1)$.
% Note that the function $\min(\cdot)$ is concave if taken over concave functions, so the overall problem in \eqref{eq:bcoptprob} is coordinate-wise convex in each distribution to be optimized. 
Denoting $\rho_1\ge0$ as the Lagrangian multipliers for the distortion constraints at receiver 1, we can define its Lagrangian function in the same way as for the point-to-point channel, and perform the proximal \ac{bcd} steps on it. The updating rules and convergence behavior can be derived similarly to before, which are thus omitted here for simplicity. Note that the Lagrangian function becomes linear in $P_{U_1|U_2S_T}$, $P_{V_1|U_1U_2S_TY'}$ when $\alpha=0$ and in $P_{U_2|S_T}$ when $\alpha=1$. The proximal constants of the corresponding updating rules should be set to nonzero to guarantee convergence.

\section{Examples}\label{sec:example}
\subsection{Point-to-point Channel}

We first consider a simple \ac{simo} system, where the receiver is equipped with $N_R$ \ac{ula} to simultaneously decode the transmit signal and estimate the \ac{aoa}. The signal model is given by
\begin{equation}
    \by=\bm{h}(\theta)x + \bm{n}
\end{equation}
where $x\in \calX$ is the transmit signal, $90^\circ \le \theta \le 90^\circ$ is the \ac{aoa}, $\bm{n}\in \CC^{N_R}$ is the additive noise, following a complex Gaussian distribution with zero mean and $\sigma_n^2\bm{I}$ covariance matrix with identity matrix $\bm{I}$, denoted as $\bm{n}\sim \mathcal{CN}(\bm{0}, \sigma_n^2 \bm{I})$. The channel $\bm{h}(\theta)$ is assumed to be fully characterized by the \ac{aoa}, taking the form of the steering vector
\begin{equation}\label{eq:steervec}
    \bm{h}(\theta) = \begin{bmatrix}
        1 & e^{j\pi \sin{\theta}} & \cdots & e^{j(N_R-1)\pi \sin{\theta}}
    \end{bmatrix}^\top.
\end{equation}
We assume that $\theta$ follows a Gaussian distribution $\mathcal{N}(0, \sigma_s^2)$. The distortion is measured in the square error, i.e. $d(\theta,\hat{\theta}) = (\theta - \hat{\theta})^2$, and the feedback function is assumed to be $\by$.

Let $\mathrm{Quant}(a,b,N)$ and $\mathrm{Samp}(a,b,N)$ be two operators sampling $N$ elements from the interval $[a, b]$ in different approaches. $\mathrm{Quant}(a,b,N)$ quantizes the interval $[a, b]$ in even spacing, while $\mathrm{Samp}(a,b,N)$ samples according to the uniform distribution over $[a,b]$.
% uniformly quantize or random sampling $N$ elements from the interval $[a, b]$, respectively. 
If $a$ and $b$ are complex values or vectors, the quantization/sampling is performed independently for each component's real and imaginary parts. We thus discretize the continuous random variables such that $\theta \in \mathrm{Quant}(-90^\circ,90^\circ,4)$, $\by \in \mathrm{Samp}(-(1.5+1.5j)\bm{1},(1.5+1.5j)\bm{1},20)$ with all-one vector $\bm{1}$, and the corresponding probabilities are computed and normalized. In the following experiment, the relevant parameters are set as $N_R=8$, $\sigma_n = 1.414$, $\sigma_s= 0.7$.

We first consider the case of present but noisy \ac{sit}. The transmitter is assumed to have partial knowledge $\tilde{\theta}$ of $\theta$ with $\tilde{\theta} | \theta \sim \mathcal{N}(\theta, \sigma_{s_T}^2)$. The set of $\tilde{\theta}$ is also quantized as $\mathrm{Quant}(-90^\circ,90^\circ,4)$. This scenario has become increasingly common in today's multi-sensor systems, such as autonomous cars or \acp{uav}, in which the channel information can be acquired by other sensors like cameras to enhance communication and sensing performance. The set of transmit signals $\calX$ is set to QPSK and $\sigma_{s_T} = 0.3$. We run the Algorithm~\ref{alg:proximalbcd} for strictly causal, causal and noncausal \ac{sit} with perfect feedback for different $\rho\ge0$ and set $\delta = 0.001$. The proximal constants $T^{(i)}_{u,v,z}$ are chosen, such that both terms in the optimization problem remain in the same order during simulation. The resulting curves of $R(\pdscf{D})$, $R(\pdcf{D})$, $R(\pdncf{D})$ are plotted in Fig.~\ref{fig:p2p_sit}, and present the similar forms as Fig.~\ref{fig:causalcd}, thus coincide with our analysis.
\begin{figure}[h]
    \centering
    \includegraphics{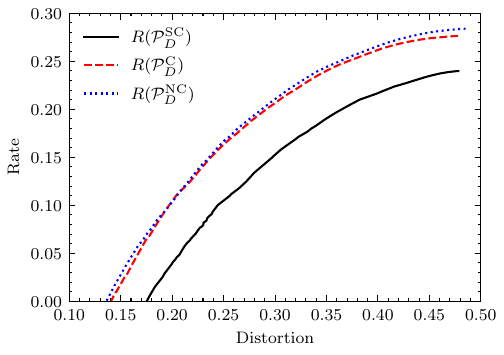}
    \caption{Optimization results of $R(\pdscf{D})$, $R(\pdcf{D})$ and $R(\pdncf{D})$ for the \ac{simo} system.}
    \label{fig:p2p_sit}
\end{figure}

We then set $S_T=\varnothing$, such that the optimization results are globally optimal, as described previously. Moreover, we extend $\calX$ to $16$-PSK to study the random-deterministic trade-off. In Fig.~\ref{fig:p2p_wo_sit}, we plot the resulting \ac{cd} curve and annotate the entropy of the transmit signal at different points. It shows that as $\rho$ increases, corresponding to decreasing distortion values, the transmit signal becomes more and more deterministic, reflecting the random-deterministic trade-off in such a system.
\begin{figure}[h]
    \centering
    \includegraphics{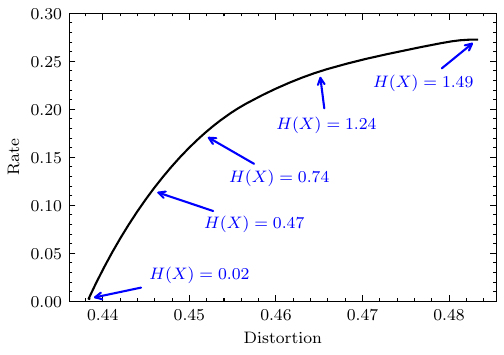}
    \caption{\ac{cd} function for the \ac{simo} system with $S_T=\varnothing$, and the changing of $H(X)$ with $\rho$.}
    \label{fig:p2p_wo_sit}
\end{figure}

\subsection{Broadcast Channel}
The first example for the broadcast channel has the signal model
\begin{align}
    Y_1 &= X + S_1 + N_1\\
    Y_2 &= X + S_1 + S_2 + N_2,
\end{align}
with $N_1$, $N_2$, $S_1$, $S_2$ all following complex Gaussian distributions of zero mean and variance $\sigma_{n_1}^2$, $\sigma_{n_2}^2$, $\sigma_{s_1}^2$, $\sigma_{s_2}^2$, respectively.
$X$ takes valuesfrom QPSK. We assume that \ac{sit} $S_T$ is generated only from $S_1$ with $S_T|S_1 \sim \mathcal{CN}(S_1, \sigma_{s_T}^2)$. In this setting, the channel becomes degraded so that $(S_T, X) - Y_1 - Y_2$ as long as $\sigma_{s_2}^2 + \sigma_{n_2}^2 \ge \sigma_{n_1}^2$. Receiver 1 is only interested in $S_1$, such that we define the distortion function as the square error between $S_1$ and the estimation.

For the simulation, we set $\sigma_{n_1} = \sigma_{n_2} = 0.2$, $\sigma_{s_1} = 0.5$, $\sigma_{s_2} =\sigma_{s_T} = 0.3$, and discretize the random variables $S_1\in \mathrm{Samp}(-1.5-1.5j,1.5+1.5j,5)$, $S_2\in \mathrm{Samp}(-1.5-1.5j,1.5+1.5j,5)$, $S_T\in \mathrm{Samp}(-1.5-1.5j,1.5+1.5j,4)$, $Y_1\in \mathrm{Samp}(-1.5-1.5j,1.5+1.5j,5)$, $Y_2\in \mathrm{Samp}(-1.5-1.5j,1.5+1.5j,5)$. The resulting region of $(R_1, R_2, D_1)$ for varying $\alpha$ and $\rho_1$ is shown to be a convex set, as in Fig.~\ref{fig:bc_awgn}, thus verifies our analysis.
\begin{figure}[h]
    \centering
    \includegraphics{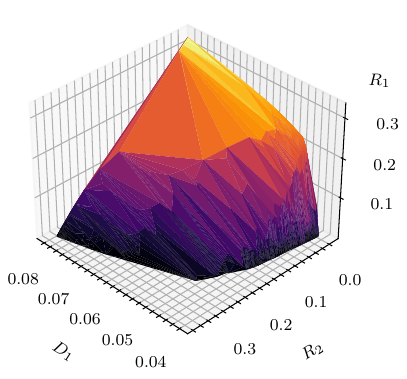}
    \caption{Boundary of the achievable $(R_1, R_2, D_1)$ for the degraded AWGN channel.}
    \label{fig:bc_awgn}
\end{figure}

In the last example, we consider a simple but representative \ac{isac} system in which a transmitter communicates with one user while detecting the angle $\theta$ of another target via an echo signal. The sensing and communication link are characterized by
\begin{align}
    \by_1 &= \bm{h}_1(\theta)x+\bm{n}_1,\\
    \by_2 &= \bm{h}_2x+\bm{n}_2,
\end{align}
respectively. The communication channel $\bm{h}_2$ is assumed to be deterministic, and $\bm{h}_1(\theta)$ is the steering vector defined in \eqref{eq:steervec}. $\bm{n}_1$ and $\bm{n}_2$ are the complex additive complex Gaussian noise. As described in Section~\ref{sec:p2pspecial}, the monostatic radar link can be regarded as a point-to-point system, where the sensing task is performed at the virtually separated receiver, which also has the full knowledge of the transmit signal $X$, i.e., \ac{sir} at receiver 1 is $S_{R,1} = X$. Suppose $S_T=\varnothing$, and by Proposition~\ref{prop:bc-scc-isac}, the relationship between the capacity at receiver 2 and the sensing distortion at receiver 1 is given by
\begin{equation}
    C_{\mathrm{ISAC}}(D_1) = \max_{P_X, P_{\hat{\theta}|X\by_1}: \expcs{}{(\theta - \hat{\theta})^2}\le D_1} I(X; \by_2)
\end{equation}
by setting $U_1 = V_1 = \varnothing$, $U_2=X$. Hence, the proximal \ac{bcd} can guarantee the convergence to the global optimum. Setting $\theta \in \mathrm{Quant}(-90^\circ,90^\circ,4)$, $\by_1 \in \mathrm{Samp}(-(1.5+1.5j)\bm{1},(1.5+1.5j)\bm{1},20)$, $\by_2 \in \mathrm{Samp}(-(1.5+1.5j)\bm{1},(1.5+1.5j)\bm{1},20)$, $\sigma_{n_1}=0.8$, $\sigma_{n_2}=1$, and letting $X$ from $8$-PSK, we end up with the curve of $C_{\mathrm{ISAC}}(D_1)$ in Fig.~\ref{fig:bc}.

\begin{figure}[h]
    \centering
    \includegraphics{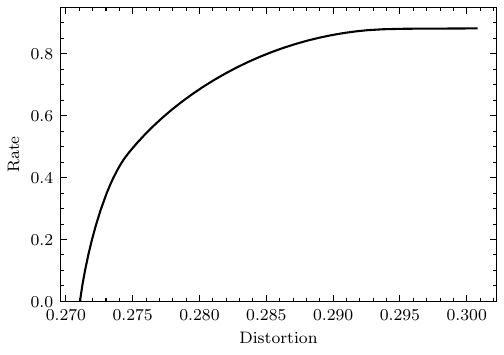}
    \caption{Trade-off between capacity and distortion in the simple \ac{isac} system $C_{\mathrm{ISAC}}(D_1)$.}
    \label{fig:bc}
\end{figure}

\section{Conclusion}\label{sec:conclusion}
In this manuscript, we have proposed an information-theoretic framework for analyzing the fundamental limits of memoryless \ac{isac} systems. We first studied the \ac{cd} trade-offs in a \ac{sddmc} with generalized \ac{sit} and \ac{sir}, as well as present and absent channel feedback. Several properties of the \ac{cd} function have been derived, highlighting the advantages of joint signal design. This model demonstrated flexibility and applicability in various \ac{isac} scenarios, such as device-based \ac{isac} and radar systems. Moreover, the inherent equivalence between monostatic and bistatic radar motivates unifying their theoretical analysis within a single framework. For the two-user degraded \ac{sddmbc}, we have derived an achievable \ac{cd} region and provided specific conditions under which the region proves to be tight. To solve the involved optimization problems, we employed the proximal \ac{bcd} algorithm ensuring convergence. The illustrative examples, including a multi-sensor platform and a simple \ac{isac} system, validate such \ac{cd} trade-offs, while providing additional valuable insights, particularly into the random-deterministic trade-offs. Future works may focus on the \ac{cd} trade-offs for channels with memory, in which feedback is expected to improve the system performance in the absence of \ac{sit}. This scenario is anticipated to model more realistic systems, such as radar systems, which perform sequential detection and tracking.

% if have a single appendix:
%\appendix[Proof of the Zonklar Equations]
% or
%\appendix  % for no appendix heading
% do not use \section anymore after \appendix, only \section*
% is possibly needed

% use appendices with more than one appendix
% then use \section to start each appendix
% you must declare a \section before using any
% \subsection or using \label (\appendices by itself
% starts a section numbered zero.)
%

\appendices

\section{}\label{app:p2p-prop}
The non-decreasing property is obvious because for $D_1<D_2$, we have $\calP_{D_1}\subseteq \calP_{D_2}$. The concavity can be verified by the time-sharing strategy. Suppose $(U_1,V_1,X_1,\hS_1)$ and $(U_2,V_2,X_2,\hS_2)$ are two random variable sets achieving $R(\calP_{D_1})$ and $R(\calP_{D_2})$ respectively. Let $Q\in\{1,2\}\sim P_Q$ be the time-sharing random variable independent of others with $P_Q(1)=\lambda$, $P_Q(2)=1-\lambda$. The joint distribution involving $Q$ is given by
\begin{equation*}
\begin{split}
    &P_Q(q)P_S(s)P_{S_T|S}(s_T|s)P_{U_Q|S_TQ}(u_q|s_T,q)P_{X_Q|U_QS_TQ}(x_q|u_q,s_T,q)P_{Z|XS}(z|x_q,s)\\
    &\qquad\cdot\mathbbm{1}\{y'=\phi(z)\}P_{V_Q|U_QS_TY'Q}(v|u_q,s_T,y',q)P_{\hS_Q|V_QU_QY'Q}(\hat{s}_q| v_q,u_q,y',q), \quad \forall q\in \{1,2\}.
\end{split}
\end{equation*}
Because the expected distortion is linear in the joint distribution, the distortion attained by time-sharing is $D = \lambda D_1 + (1-\lambda) D_2$. We then have
\begin{equation}
\begin{split}
    R(\calP_D) &= \max_{(U,V,X,\hS)\in\calP_{D}} I(U;Z) - I(U;S_T) - I(V;S_T|U,Z)\\
    &\ge I(U_Q,Q; Z) - I(U_Q,Q;S_T) - I(V_Q;S_T|U_Q,Q,Z)\\
    &= H(Z)-H(Z|U_Q,Q) - H(S_T)+H(S_T|U_Q,Q) - H(V_Q|U_Q,Q,Z) + H(V_Q|S_T,U_Q,Q,Z)\\
    &= \lambda(I(U_1; Z) - I(U_1;S_T) - I(V_1;S_T|U_1,Z)) + (1-\lambda)(I(U_2; Z) - I(U_2;S_T) - I(V_2;S_T|U_2,Z))\\
    &=\lambda R(D_1) + (1-\lambda) R(D_2).
\end{split}
\end{equation}
Hence, we obtain $R(\calP_D)\ge \lambda R(\calP_{D_1}) + (1-\lambda)R(\calP_{D_2})$ for $D = \lambda D_1 + (1-\lambda) D_2$. 

From Lemma~\ref{lemma:optest} and the non-decreasing property of $R(\calP_D)$, the second part can be simply inferred. To prove the third part, we need to show that $R(\calP_D)$ is convex in $P_{X|US_T}$ when fixing the other distributions such that it is maximized when $P_{X|US_T}$ takes the extreme point in the simplex space. Observe that 
\begin{equation}
    I(U;Z) - I(U; S_T) - I(V; S_T|U,Z) = H(U|S_T) - H(U|Z) - H(V|U,Z) + H(V|U,S_T,Z)
\end{equation}
in which only the terms $H(U|Z)$ and $H(V|U,Z)$ are functions of $P_{X|US_T}$. We first notice that $P_{VUZ}$, $P_{UZ}$, $P_{Z}$ are all linear in $P_{X|US_T}$ since they are marginals of the joint distribution.
% \begin{equation}\label{eq:Pvuzlinear}
% \begin{split}
%     &P_{VUZ}(v,u,z) = \sum_{s_T\in \calS_T} P_{V|US_TY'}(v|u,s_T,y'=\phi(z))P_{US_TZ}(u,s_T,z)\\ 
%     &\quad= \sum_{s_T\in \calS_T} P_{V|US_TY'}(v|u,s_T,y'=\phi(z)) \sum_{s\in\calS, x\in\calX} P_{Z|XS}(z|x,s) P_{X|US_T}(x|u,s_T) P_{U|S_T}(u|s_T) P_{S_T|S}(s_T|s)P_S(s),
% \end{split}
% \end{equation}
% and thus $P_{UZ}$, $P_{Z}$ are also linear in $P_{X|US_T}$. 
$H(U|Z) = -\sum_{u,z}P_{UZ}(u,z)\log(P_{UZ}(u,z) / P_{Z}(z))$ is concave in $(P_{UZ},P_{Z})$, and $H(V|U,Z)$ is also concave in $(P_{VUZ},P_{UZ})$.
The constraint on distortion $\expcs{}{d(S, \hS)}\le D$ is also linear in $P_{X|US_T}$, 
so the optimization problem in $R(\calP_{D}')$ is convex in $P_{X|US_T}$.

The cardinalities of $U$ and $V$ can be determined according to Lemma~\ref{lemma:support}. To see this, for a fixed $\calV$, we define the following functions of $P_{VZSXS_T\hS|U}(\cdot | u)$,
\begin{subequations}
\begin{align}
    g_i(P_{VZSXS_T\hS|U}(\cdot | u)) &= \sum_{v,z,s,\hat{s}}P_{VZSXS_T\hS|U}(v,z,s,x,s_T,\hat{s}|u) = P_{XS_T|U}(x,s_T|u), \quad i=1,2,...,m-1,\\
    g_m(P_{VZSXS_T\hS|U}(\cdot | u)) &=H(S_T|U=u) - H(VZ|U=u) + H(VS_TZ|U=u) - H(S_TZ|U=u) \\
    &= -\sum_{v, z, s, x, s_T,\hat{s}} P_{VSZXS_T\hS|U}(v,z,s,x,s_T,\hat{s}|u)\log\br{\sum_{v,z,s,x,\hat{s}} P_{VZSXS_T\hS|U}(v,z,s, x,s_T,\hat{s}|u)}\notag\\
    &\quad +\sum_{v, z, s, x, s_T,\hat{s}} P_{VZSXS_T\hS|U}(v,z,s, x,s_T,\hat{s}|u)\log\br{\sum_{s,x,s_T,\hat{s}} P_{VZSXS_T\hS|U}(v,z,s, x,s_T,\hat{s}|u)}\notag\\
    &\quad -\sum_{v, z, s, x, s_T,\hat{s}} P_{VZSXS_T\hS|U}(v,z,s, x,s_T,\hat{s}|u)\log\br{\sum_{s,x,\hat{s}} P_{VZSXS_T\hS|U}(v,z,s, x,s_T,\hat{s}|u)}\notag\\
    &\quad +\sum_{v, z, s, x, s_T,\hat{s}} P_{VZSXS_T\hS|U}(v,z,s, x,s_T,\hat{s}|u)\log\br{\sum_{v,s,x,\hat{s}} P_{VZSXS_T\hS|U}(v,z,s, x,s_T,\hat{s}|u)}\notag\\
    g_{m+1}(P_{VZSXS_T\hS|U}(\cdot | u)) &= \expcs{}{d(S, \hS)|U=u}= \sum_{v, z,s, x, s_T} P_{VZSXS_T\hS|U}(v,z,s, x,s_T|u)d(s, \hat{s}),
\end{align}
\end{subequations}
where $m = |\calX||\calS_T|$. Further, note that 
\begin{equation}
    I(U;Z) - I(U;S_T) - I(V; S_T|U,Z) = H(Z) - H(S_T) + H(S_T | U) - H(VZ|U) + H(VS_TZ|U) - H(S_TZ|U),
\end{equation}
so there exist a $\calU$ with $|\calU|\le |\calX||\calS_T| +1$, $U\sim P_U$ and a set of $P_{VZSXS_T\hS|U}(\cdot | u)$ such that $\sum_{u\in \calU} g_k(P_{VZSXS_T\hS|U}(\cdot | u)) P_{U}(u)$ for $k\in [m+1]$ preserves, and thus the values of $I(U;Z) - I(U;S_T) - I(V; S_T|U,Z)$, $\expcs{}{d(S, \hS)}$ preserve. Alternatively, $H(Z)$ can be preserved by replacing the first $m-1$ functions by $|\calZ|-1$ functions on $g_i(P_{VZSXS_T\hS|U}(\cdot | u)) = \sum_{v,s,x,s_T,\hat{s}}P_{VZSXS_T\hS|U}(v,z,s, x,s_T,\hat{s}|u) = P_{Z|U}(z|u)$. Hence, we have $|\calU|\le \min\br{|\calX||\calS_T|, |\calZ|} +1$.

% Then we note that the choice of $|\calV|$ doesn't affect the value of $I(U;Z) - I(U;S_T)$. Fixing $\calU$, 
As to $V$, since changing $|\calV|$ doesn't impact the value of $I(U;Z) - I(U;S_T)$,
we fix $\calU$, and for each $U=u$ and $Z=z$, consider the following $|\calS_T|+1$ functions of $P_{SS_T\hS|VUZ}(\cdot|v,u,z)$: $P_{S_T|VUZ}(s_T|v,u,z)$ for all but one $s_T\in \calS_T$, $H(S_T|V=v,U=u,Z=z)$, and $\expcs{}{d(S,\hS)|U=u, V=v, Z=z}$ such that $I(V;S_T|U=u, Z=z)$ and $\expcs{}{d(S,\hS)|U=u, Z=z}$ preserve. Hence, we can restrict the cardinality of the support of $P_{V|UZ}$ for each $U=u$ and $Z=z$ to $|\calS_T|+1$ and thus $|\calV|\le |\calS_T|+1$. 

When $S_T=\varnothing$, $X$ is a function of $U$, so it's sufficient to set $U=X$. Furthermore, the term $I(V;S_T|U,Z)$ vanishes and the optimal estimator can be independent of $V$ due to the Markov chain $S-(X,Z)-V$ and Lemma~\ref{lemma:markovest}, so the choice of $V$ doesn't affect the value of $R(\calP_D)$, which can be set to $\varnothing$.
This completes the proof for $R(\calP_D)$, and it's straightforward to validate them for $R(\pdscf{D})$ and $R(\pdcf{D})$.

\section{}\label{app:p2p-causal}
We prove here the achievability and converse for $\cdcf{D}$ and will show that the extension to $\cdcf{D}$ and $\cdncf{D}$ is straightforward.

\subsection{Proof of Achievability}
This follows by applying the block Markov encoding strategy, as presented in\cite{choudhuri2013causal}. Let $\epsilon_2 > \epsilon_1 >0$ and fix the set of random variables $\pdcf{D/(1+\epsilon_2)}$ achieving $\cdcf{D/(1+\epsilon_2)}$. The transmission is assumed to happen in $b$ blocks.

\subsubsection{Codebook generation}
Randomly and independently generate $2^{n(R+R_s)}$ sequences $u^n(m_j, l_{j-1})$ according to $P_U$, each of which is assigned by the indices $m_j\in [2^{nR}]$ and $l_{j-1}\in [2^{nR_s}]$ for block $j\in [b]$. For each $u^n(m_j, l_{j-1})$, randomly generate $2^{n(R_s + R_s')}$ sequences $v^n(l_j, k_j | m_j, l_{j-1})$ according to $P_{V|U}(v|u)$, each of which is assigned by the indices $l_j \in [2^{nR_s}]$ and $k_j\in [2^{nR_s'}]$. The codebook
\begin{equation}
\calC_j=\brcur{(u^n(m_j, l_{j-1}), v^n(l_j, k_j | m_j, l_{j-1}))}, j\in [b]
\end{equation}
is revealed to both the encoder and decoder.
    
\subsubsection{Encoding}
Let $l_0=1$. At the end of block $j$, upon knowing all $(s_T^n(j), y'^n(j))$ and the message $m_j$, find an index $l_j$ and $k_j$ such that
\begin{equation}
    \br{s_T^n(j), y'^n(j), v^n(l_j, k_j | m_j, l_{j-1})}\in \typset{1}{P_{S_TY'V}}.
\end{equation}
% If there is more than one such $l_j$ or $k_j$, select the smallest one. If there are no such indices, select $l_j=1$ and $k_j=1$. 
In block $j+1$, having the message $m_{j+1}$ to be encoded, transmit $x^n$ with $x_i = f_e(u_i(m_{j+1}, l_j), s_{T,i}(j))$ for $i\in[n]$.

\subsubsection{Decoding}
At the end of block $j+1$, upon receiving $z^n(j+1)$, find the unique indices $\hat{m}_{j+1}$ and $\hat{l}_j$ such that 
\begin{equation}
    \br{u^n(\hat{m}_{j+1}, \hat{l}_j), z^n(j+1)}\in \typset{2}{P_{UZ}}.
\end{equation}
 The message is thus decoded as $\hat{m}_{j+1}$. Then, find the unique index $\hat{k}_j$ such that
 \begin{equation}
     \br{v^n(\hat{l}_j, \hat{k}_j| \hat{m}_j, \hat{l}_{j-1}), z^n(j)}\in \typset{2}{P_{VZ}}\label{eq:esttypicalset}
 \end{equation}
with $\hat{m}_j$, $\hat{l}_{j-1}$ obtained at block $j$. The state is estimated as
\begin{equation}
\hat{s}_i(j)=h^*(u_i(\hat{m}_j, \hat{l}_{j-1}), v_i(\hat{l}_j, \hat{k}_j| \hat{m}_j, \hat{l}_{j-1}), z_i(j))
\end{equation}
for $i\in[n]$.

\subsubsection{Analysis}
We first analyze the message decoding error. An error can happen at the encoder if there is no $l_j$ and $k_j$ such that $(s_T^n(j), y'^n(j), v^n(l_j, k_j | m_j, l_{j-1}))$ is jointly typical. This error can be bounded to zero as $n\rightarrow \infty$ if 
\begin{equation}
    R_s+R_s' > I(V; S_T, Y'|U) + \delta(\epsilon_1)
\end{equation}
according to the \ac{llm} and Lemma~\ref{lemma:covering}. Given no encoding error, the decoder makes an error if there is no $\hat{m}_{j+1}$ and $\hat{l}_{j}$ or there exist some indices $\hat{m}_{j+1}\neq m_{j+1}$ or $\hat{l}_{j}\neq l_{j}$ such that $(u^n(\hat{m}_{j+1}, \hat{l}_{j}), z^n(j+1))\in \typset{2}{P_{UZ}}$. The probability of the first error tends to zero due to Lemma~\ref{lemma:condtyp}, and the second error probability tends to zero for $n\rightarrow \infty$ if
\begin{equation}
    R+R_s < I(U; Z) - \delta(\epsilon_2)
\end{equation}
due to the \ac{llm} and Lemma~\ref{lemma:packing}. With successful message decoding, we say that an estimation ``error" happens if there is no $\hat{k}_j$ or there exist some $\hat{k}_j\neq k_j$ such that \eqref{eq:esttypicalset} satisfies. Again, the probability of this event is bounded to zero for large $n$ if
\begin{equation}
    R_s' < I(V; Z | U) - \delta(\epsilon_2).
\end{equation}
We define the union of the error events mentioned above as $\calE(j)$. Consequently, we have $\pr{\calE(j)}$ tends to zero as $n\rightarrow \infty$ if
\begin{equation}
\begin{split}
    R &< I(U; Z) - I(V; S_T, Y'|U) + I(V; Z | U) + \delta(\epsilon_1) \\
    &= I(U; Z) + H(V|U,Y',S_T) - H(V|U,Z)+ \delta(\epsilon_1)\\
    & = I(U; Z) - I(V; S_T | U,Z)+ \delta(\epsilon_1),
\end{split}
\end{equation}
where the last step follows because $Z - (U,Y', S_T) - V$ forms a Markov chain and $Y'$ is a deterministic function of $Z$ such that $H(V|U,Y',S_T) = H(V|U,Z,S_T)$. Finally, the expected distortion for $n\rightarrow \infty$ is evaluated using the law of total expectation
\begin{equation}
\begin{split}
    \limsup_{n\rightarrow \infty}D^{(n)} &= \limsup_{n\rightarrow \infty}\frac{1}{n}\sum_{i=1}^n\expcs{}{d(S_i, h^*(U_i, V_i, Z_i))}\\
    &\le \limsup_{n\rightarrow \infty}\br{d_{\mathrm{max}}\pr{\calE(j)} + (1+\epsilon_2)\expcs{}{d(S,h^*(U, V, Z))}\pr{\calE^c(j)}}\\
    & \le D
\end{split}
\end{equation}
with $d_{\mathrm{max}} = \max_{(s, \hat{s}) \in \calS \times \hat{\calS}}d(s,\hat{s})<\infty$. Hence, as $n\rightarrow\infty$ and $\epsilon_2 \rightarrow 0$, we can convey the message reliably at the rate $R \le I(U; Z) - I(V; S_T | U,Z)$ and achieve the state estimation distortion $D$ at the same time.

The achievability for strictly causal \ac{sit} is proved by either letting the function $f_e$ only dependent on $U$ or directly replacing $U$ by $X$. For the non-causal \ac{sit}, we can perform the random binning technique\cite{gel1980coding,el2011network} such that $R+R_s < I(U;Z) - I(U;S_T)$ and the other parts remain unchanged. 

\subsection{Proof of Converse}\label{subsec:converse_proof}
We need to show that for all achievable distortion $D$ such that $D^{(n)}< D$, we should have $R<\cdcf{D}$ to ensure $\lim_{n\rightarrow \infty}P_e^{(n)} = 0$. We follow similar steps as\cite{bross2017rate} and define
\begin{equation}
\begin{split}
    W_i &= (M, S_T^{i-1}),\\
    V_i & = Z_{i+1}^n = (Z_{i+1},...,Z_n),\\
    U_i &= (M, S_T^{i-1}, Y'^{i-1}) = (W_i, Y'^{i-1}).
\end{split}
\end{equation}
Note that the joint distribution is factorized as
\begin{equation*}
\begin{split}
    &P_{S_iS_{T,i}V_iU_iX_iZ_iY_i'}(s_i, s_{T,i}, v_i, u_i, x_i, z_i,y_i')= P_S(s_i)P_{S_T|S}(s_{T,i}|s_i)P_{U_i}(u_i) \\ 
    &\quad  \cdot \mathbbm{1}\{x_i=f_{e,i}(u_i,s_{T,i})\}P_{Z|XS}(z_i|x_i, s_i) \mathbbm{1}\{y'_i = \phi(z_i)\} P_{V_i|U_iS_{T,i}Y_i'}(v_i|u_i,s_{T,i}, y_i').
\end{split}
\end{equation*}
Since $U_i$ have disjoint supports $\calU_i =\calM \times \calS_T^{i-1}\times \calY'^{i-1}$ for different $i$ due to different lengths of $S_T^{i-1}$ and $Y'^{i-1}$, we thus introduce a function $f_e$ that is independent of $i$ and defined on the joint supports of $\bigcup_i\calU_i$ with $f_e(u_i,s_{T,i}) = f_{e,i}(u_i,s_{T,i}) = x_i$. The estimator $h_i(z^n)$ can be replaced by $h_i^*(u_i, v_i, z_i)$ since
\begin{equation}
    \expcs{}{d(S_i, h_i(Z^n))} \ge \expcs{}{d(S_i, h_i^*(U_i, V_i, Z_i))}
\end{equation}
due to the Markov chain $S_i - (U_i, V_i, Z_i) - Z^{i-1}$ and Lemma~\ref{lemma:optest}~and~\ref{lemma:markovest}. 
% Note that although $S_i$ is independent of $U_i$, we still incorporate it into the optimal estimator $h^*_i$ so that it has a consistent form with the non-causal case. 
We then introduce a time-sharing random variable $Q$ that is uniformly distributed over $[n]$ and independent of $(S_i,S_{T,i},V_i,U_i,X_i,Z_i,Y_i',\hS_i)$ for all $i$. We identify the random variables $U = (Q, U_Q)$, $V =V_Q$, $S=S_Q$, $S_T = S_{T,Q}$, $X=X_Q$, $Z=Z_Q$, $Y'=Y'_Q$, $\hS=\hS_Q$ and the function $h^*(U,V,Z) = h^*(U_Q,Q,V_Q,Z_Q) = h^*_Q(U_Q,V_Q,Z_Q)$, resulting in the factorization of the joint distribution
\begin{equation}
\begin{split}
    &P_{SS_{T}VUXZY'\hS}(s, s_{T}, v, u, x, z,y',\hat{s}) = P_S(s)P_{S_T|S}(s_T|s)P_U(u)\\
    &\quad \cdot \mathbbm{1}\{x=f_e(u,s_T)\}P_{Z|XS}(z|x,s)\mathbbm{1}\{y'=\phi(z)\}P_{V|US_TY'}(v|u,s_T,y')\mathbbm{1}\{\hat{s} = h^*(u,v,z)\}
\end{split}
\end{equation}
and 
\begin{equation}
\begin{split}
    D^{(n)} &\ge \frac{1}{n}\sum_{i=1}^n \expcs{}{d(S_i, h_i^*(U_i,V_i,Z_i))}\\
    &=\expc{Q}{\expcs{}{d(S_Q, h_Q^*(U_Q,V_Q,Z_Q)}|Q}\\
    &=\expcs{}{d(S,h^*(U,V,Z))}.
\end{split}
\end{equation}
In other words, $(U,V,X, \hS)$ is a member of $\pdcf{D}$. Furthermore, because $\hat{S}_i$ is a function of $Z^n$, we have
\begin{equation}
     I(Z^n; S_{T,i} | W_i)=I(Z^n,\hat{S}_i; S_{T,i} | W_i)= I(\hat{S}_i; S_{T,i} | W_i) + I(Z^n; S_{T,i} | W_i,\hat{S}_i).
\end{equation}
Applying Fano's inequality, it can be shown that
\begin{subequations}
\begin{align}
    nR  &+ \sum_{i=1}^n I(\hat{S}_i; S_{T,i} | W_i) - n\epsilon_n\\
    &\le I(M; Z^n) + \sum_{i=1}^n \brsq{I(Z^n; S_{T,i} | W_i) - I(Z^n; S_{T,i} | W_i,\hat{S}_i)}\\
    &= I(M; Z^n) + I(Z^n; S_T^n | M) - \sum_{i=1}^n I(Z^n; S_{T,i} | W_i,\hat{S}_i)\\
    &= I(M, S_T^n ; Z^n) - \sum_{i=1}^n I(Z^n; S_{T,i} | W_i,\hat{S}_i)\\
    &= \sum_{i=1}^n \brsq{I(M, S_T^n; Z_i| Z^{i-1}) - I(Z^n; S_{T,i} | W_i,\hat{S}_i)}\\
    % &= \sum_{i=1}^n\brsq{H(Z_i|Z^{i-1}) - H(Z_i| U_i, S_{T,i}^n) -  I(Z^n; S_{T,i} | W_i,\hat{S}_i)}\\
    &\le  \sum_{i=1}^n\brsq{H(Z_i) - H(Z_i| U_i, S_{T,i}) - I(Z^n; S_{T,i} | W_i,\hat{S}_i)} \label{subeq:converse1}\\
    % &= \sum_{i=1}^n\brsq{I(U_i, S_{T_i}; Z_i) - I(Z^n; S_{T,i} | W_i,\hat{S}_i)}\\
    &=  \sum_{i=1}^n\brsq{I(U_i; Z_i) + I(S_{T,i}; Z_i|U_i) - I(Z^n; S_{T,i} | W_i,\hat{S}_i)}\\
    &\le \sum_{i=1}^n\left [I(U_i; Z_i) + H(S_{T,i}) - H(S_{T,i}|Z_i, U_i)  - H(S_{T,i} | W_i, \hat{S}_i) + H(S_{T,i}|Z_i, V_i, U_i) \right] \label{subeq:converse2}\\
    &= \sum_{i=1}^n \brsq{I(U_i; Z_i) - I(S_{T,i}; V_i | Z_i,U_i) + I(S_{T,i}; W_i, \hat{S}_i)}\\
    &= \sum_{i=1}^n \brsq{I(U_i; Z_i) - I(S_{T,i}; V_i | Z_i,U_i) + I(S_{T,i}; \hat{S}_i | W_i)}\label{subeq:converse3},
\end{align}
\end{subequations}
where \eqref{subeq:converse1} holds because $Y'^{i-1}$ is a function of $Z^{i-1}$ and $Z_i - (U_i, S_{T,i}) - (S_{T,i+1}^n, Z^{i-1})$ forms a Markov chain, \eqref{subeq:converse2} follows since conditions decrease entropy, and \eqref{subeq:converse3} is because $S_{T,i}$ and $W_i$ are independent. By subtracting $I(S_{T,i}; \hat{S}_i | W_i)$ from both sides we obtain $nR \le \sum_{i=1}^n [I(U_i; Z_i) - I(S_{T,i}; V_i | Z_i,U_i)] + n\epsilon_n$. This leads to
\begin{subequations}
\begin{align}
    R& \le \frac{1}{n}\sum_{i=1}^n \brsq{I(U_i; Z_i) - I(S_{T,i}; V_i | Z_i,U_i)} + \epsilon_n\\
    &\le \frac{1}{n}\sum_{i=1}^n C^{\mathrm{C}}\br{\expcs{}{d(S_i, h_i^*(U_i, V_i, Z_i)}} + \epsilon_n \label{subeq:converse4}\\
    &\le C^{\mathrm{C}}\br{ \frac{1}{n}\sum_{i=1}^n\expcs{}{d(S_i, h_i^*(U_i, V_i, Z_i)}} + \epsilon_n \label{subeq:converse5}\\
    &= C^{\mathrm{C}}\br{\expc{}{d(S, h^*(U, V, Z)} }+ \epsilon_n\\
    &\le \cdcf{D} + \epsilon_n \label{subeq:converse6},
\end{align}
\end{subequations}
where \eqref{subeq:converse4} follows the \ac{cd} function in \eqref{eq:cdfunc}, \eqref{subeq:converse5} and \eqref{subeq:converse6} follow because $\cdcf{D}$ is a non-decreasing concave function according to Proposition~\ref{prop:cdcausalconcave}. This completes the proof of converse. The proof for the strictly causal case follows by either removing the dependency of $X_i$ on $S_{T,i}$ or directly replacing $U_i$ by $X_i$.

\section{}\label{app:bc-prop}
The first property follows due to $\pdd \subseteq \pddf{D_1'}{D_2'}$. Denoting the random variables achieving $\calR_i(\pddf{D_1'}{D_2'})$ and $\calR_i(\pddf{D_1''}{D_2''})$ as $(U_{1,1}, U_{2,1}, V_{1,1}, V_{2,1}, X_1, \hS_{1,1}, \hS_{2,1})$ and $(U_{1,2}, U_{2,2}, V_{1,2}, V_{2,2}, X_2, \hS_{1,2}, \hS_{2,2})$, respectively, similar to Proposition~\ref{prop:cdcausalconcave}, we introduce a time-sharing random variable $Q\in \{1,2\}$ with $P_Q(1)=\lambda$ and $P_Q(2)=1-\lambda$, and the achieved distortions are $D_1=\lambda D_1' + (1-\lambda) D_1''$,  $D_2=\lambda D_2' + (1-\lambda) D_2''$. Moreover, let $\tilde{U}_2=(U_{2,Q},Q)$, $\tilde{V}_2 = V_{2,Q}$, we have
\begin{equation}
\begin{split}
    R_0+R_2 &= \lambda \br{I(U_{2,1}; Z_2) - I(U_{2,1}; S_T) - \max\br{I(V_{2,1}; S_T, Y'| U_{2,1}, Z_2), I(V_{2,1}; S_T, Y'| U_{2,1}, Z_1)}}\\
    &\quad +(1-\lambda) \br{I(U_{2,2}; Z_2) - I(U_{2,2}; S_T) - \max\br{I(V_{2,2}; S_T, Y'| U_{2,2}, Z_2), I(V_{2,2}; S_T, Y'| U_{2,2}, Z_1)}}\\
    &\le \lambda \br{I(U_{2,1}; Z_2) - I(U_{2,1}; S_T)} + (1-\lambda) \br{I(U_{2,2}; Z_2) - I(U_{2,2}; S_T)}\\
    &\quad - \max (\lambda I(V_{2,1}; S_T, Y'| U_{2,1}, Z_2) + (1-\lambda) I(V_{2,2}; S_T, Y'| U_{2,2}, Z_2),\\
    &\hspace{15mm} \lambda I(V_{2,1}; S_T, Y'| U_{2,1}, Z_1) + (1-\lambda) I(V_{2,2}; S_T, Y'| U_{2,2}, Z_1))\\
    &= I(U_{2,Q};Z_2 | Q) - I(U_{2,Q}; S_T|Q) - \max(I(V_{2,Q}; S_T, Y'|U_{2,Q},Q,Z_2), I(V_{2,Q}; S_T, Y'|U_{2,Q},Q,Z_1))\\
    &\le I(U_{2,Q},Q;Z_2)- I(U_{2,Q},Q; S_T)  - \max(I(V_{2,Q}; S_T, Y'|U_{2,Q},Q,Z_2), I(V_{2,Q}; S_T, Y'|U_{2,Q},Q,Z_1))\\
    &= I(\tilde{U}_2; Z_2) - I(\tilde{U}_2; S_T) - \max(I(\tilde{V}_2; S_T, Y'|\tilde{U}_2,Z_2), I(\tilde{V}_2; S_T, Y'|\tilde{U}_2,Z_1)),
\end{split}
\end{equation}
where the first inequality holds because $\max(x_1, x_2) + \max(y_1,y_2) \ge \max(x_1+y_1, x_2+y_2)$, and the second inequality follows since $Q$ is independent of $S_T$. In a same way, by defining $\tilde{U}_1 = U_{1,Q}$ and $\tilde{V}_1=V_{1,Q}$, we can verify that 
\begin{equation}
    R_1 \le I(\tilde{U}_1; Z_1| \tilde{U}_2) - I(\tilde{U}_1; S_T | \tilde{U}_2) - I(\tilde{V}_1; S_T, Y'| \tilde{U}_1, \tilde{U}_2, \tilde{V}_2, Z_1).
\end{equation}
Hence, we can conclude that $(R_0, R_1,R_2)\in \calR_i(\pddf{D_1}{D_2})$. From property 2), we can infer property 3) by letting $D_1' = D_1''$ and $D_2' = D_2''$. 

Since $\ridd$ is a convex set, one only needs to determine its boundary. By leveraging the supporting hyperplane theorem\cite{boyd2004convex}, the boundary can be obtained by maximizing the weighted sum rate
\begin{equation}\label{eq:weighted_sum_rate}
\begin{split}
    J(\alpha) &= (1-\alpha)(R_0 + R_2) + \alpha R_1\\
    &= (1-\alpha)(I(U_2;Z_2) - I(U_2;S_T) - \max\br{I(V_2; S_T, Y'| U_2, Z_2), I(V_2; S_T, Y'| U_2, Z_1)})\\
    &\quad + \alpha(I(U_1;Z_1|U_2) - I(U_1;S_T|U_2) - I(V_1;S_T,Y'|U_1,U_2,V_2,Z_1))\\
    &=  (1-\alpha)(H(U_2|S_T) - H(U_2|Z_2) - \max \br{H(V_2|U_2,Z_2), H(V_2|U_2,Z_1)} - H(V_2| S_T, Y', U_2))\\
    &\quad + \alpha(H(U_1|U_2,S_T) - H(U_1|U_2,Z_1) - H(V_1|U_1,U_2,V_2,Z_1) + H(V_1| S_T,Y',U_1,U_2,V_2))
\end{split}
\end{equation}
for $0\le \alpha \le 1$. Similar to Appendix~\ref{app:p2p-prop}, $H(U_2|Z_2)$, $H(V_2|U_2,Z_2)$, $H(V_2|U_2,Z_1)$, $H(U_1|U_2,Z_1)$, $H(V_1|U_1,U_2,V_2,Z_1)$ are concave functions in $(P_{U_2Z_2}, P_{Z_2})$, $(P_{V_2U_2Z_2}, P_{U_2Z_2})$, $(P_{V_2U_2Z_1}, P_{U_2Z_1})$, $(P_{U_1U_2Z_1}, P_{U_2Z_1})$, $(P_{V_1U_1U_2V_2Z_1}, P_{U_1U_2V_2Z_1})$, respectively, which are all linear in $P_{X|U_1S_T}$. The $\max(\cdot)$ function can be applied after finding the respective optimal value for each of both cases, so $J(\alpha)$ is maximized if $P_{X|U_1S_T}$ takes an extreme point, i.e., $X$ is a deterministic function of $(U_1, S_T)$. Property 5) can be derived from property 1) and Lemma~\ref{lemma:optest}.

The cardinalities of the auxiliary random variables are restricted in a similar approach as in Appendix~\ref{app:p2p-prop}. Note that the constraint on $|\calU_1|$ preserves the joint distribution $P_{XS_TU_2}$ and three additional functions (one for the expression of $R_1$ and two for the distortions), so we have $|\calU_1|\le |\calX||\calS_T||\calU_2| + 2$.

When $S_T=Y_2'=\varnothing$, the terms $R_{s2}$, $R_{s1}$ both become zero. Due to the Markov chains $S-(U_2,Y')-V_2$ and $S-(U_1,U_2,V_2,Y')-V_1$ and Lemma~\ref{lemma:markovest}, the optimal estimator $h_1^*$ and $h_2^*$ can be chosen to be independent of $V_1$ and $V_2$, which are thus set to $\varnothing$. Furthermore, $X$ is the function of only $U_1$, so it is sufficient that $U_1= X$.
This completes the proof for $\ridd$, and it's simple to validate them for $\riddsc$ and $\riddc$ in the same procedure.

\section{}\label{app:deg-bc-scc}
Due to the similar procedure, we only elaborate the proof for the causal \ac{sit}, and its extension to strictly causal and non-causal \ac{sit} is straightforward. To prove $\riddc \subseteq \cddc$, we need to show that all $(R_0, R_1, R_2)\in \riddc$ is achievable. This is conducted by combining the methods of superposition coding\cite{el2011network}, successive refinement coding\cite{steinberg2004successive}, and block Markov coding\cite{choudhuri2013causal}. For simplicity, we omit the common message, i.e., $R_0=0$, which can be easily made to nonzero by splitting the rate $R_2$. We fix the distributions of $(U_1, U_2, V_1, V_2, X, \hS_1, \hS_2)$ achieving $\calR_i(\calP^\mathrm{C}_{D_1/(1+\epsilon_2), D_2/(1+\epsilon_2)})$ with some $\epsilon_2>\epsilon_1>0$. The transmission is assumed to happen in $b$ blocks.

% \begin{enumerate}[label=(\alph*)]
\subsubsection{Codebook generation}
    Generate $2^{n(R_2+R_{s2})}$ sequences of $u_2^n(m_{2,j}, l_{2,j-1})$ according to $P_{U_2}$, $2^{n(R_1+R_{s1})}$ sequences of $u_1^n(m_{1,j}, l_{1,j-1} | m_{2,j}, l_{2,j-1})$ according to $P_{U_1|U_2}$, $2^{n(R_{s2}+R'_{s2})}$ sequences of $v_2^n(l_{2,j}, k_{2,j}|m_{2,j}, l_{2,j-1})$ according to $P_{V_2|U_2}$ and $2^{n(R_{s1}+R'_{s1})}$ sequences of $v_1^n(l_{1,j}, k_{1,j} | m_{1,j}, l_{1,j-1},m_{2,j}, l_{2,j-1}, l_{2,j}, k_{2,j})$ according to $P_{V_1|U_1U_2V_2}$, with $m_{1,j}\in [2^{nR_1}]$, $l_{1,j}\in [2^{nR_{s1}}]$, $k_{1,j}\in [2^{nR'_{s1}}]$, $m_{2,j}\in [2^{nR_2}]$, $l_{2,j}\in [2^{nR_{s2}}]$ and $k_{2,j}\in [2^{nR'_{s2}}]$ for all $j\in [b]$. 
    % The codebook
    % \begin{equation}
    %     \calC_j=\brcur{(u_2^n(m_{2,j}, l_{2,j-1}), v_2^n(l_{2,j}, k_{2,j}| m_{2,j}, l_{2,j-1}), u_1^n(m_{1,j}, l_{1,j-1} | m_{2,j}, l_{2,j-1}), v_1^n(l_{1,j}, k_{1,j} | m_{1,j}, l_{1,j-1}, l_{2,j}, k_{2,j}))}
    % \end{equation}
    % for $j\in [b]$ is revealed to encoder and decoders.
    
\subsubsection{Encoding}
    Let $l_{1,0} = l_{2,0} =0$. At the end of block $j$, upon knowing $(s^n_T(j), y'^n(j))$ and messages $m_{1,j}, m_{2,j}$, find the unique $(l_{1,j}, k_{1,j}, l_{2,j}, k_{2,j})$ such that
    \begin{equation}
    \begin{split}
    &\left(s^n_T(j), y'^n(j), v_2^n(l_{2,j}, k_{2,j}|m_{2,j}, l_{2,j-1})\right) \in \typset{1}{P_{S_TY'V_2}},\\
    &\left(s^n_T(j), y'^n(j), v_1^n(l_{1,j}, k_{1,j} | m_{1,j}, l_{1,j-1},m_{2,j}, l_{2,j-1}, l_{2,j}, k_{2,j})\right) \in \typset{1}{P_{S_TY'V_1}}.
    \end{split}
    \end{equation}
    The existence of such sequences can be guaranteed if 
    \begin{equation}
    \begin{split}
        &R_{s2} + R'_{s2} > I(V_2; S_T, Y' | U_2),\\
        &R_{s1}+R'_{s1} > I(V_1; S_T, Y' | U_1, U_2, V_2).
    \end{split}
    \end{equation}
    The encoder then sends $x^n$ with $x_i = g_e(u_{1,i}(m_{1,j}, l_{1,j-1} | m_{2,j}, l_{2,j-1}), s_{T,i}(j))$.

\subsubsection{Decoding at Decoder 2}
    The decoding steps at the weaker user are the same as the point-to-point case, so the decoding error can be bounded to 0 if
    \begin{equation}
    \begin{split}
        &R_2 + R_{s2} < I(U_2; Z_2)\\
        & R'_{s2} < I(V_2; Z_2 |U_2).
    \end{split}\label{eq:R2cond2}
    \end{equation}

\subsubsection{Decoding at Decoder 1}
    The stronger user first find the unique indices $(\hat{m}_{2,j+1}, \hat{l}_{2,j}, \hat{k}_{2,j})$ at the end of block $j+1$ such that
    \begin{equation}
    \begin{split}
        &\br{u^n_2(\hat{m}_{2,j+1}, \hat{l}_{2,j}), z_1^n(j+1)} \in \typset{2}{P_{U_2Z_1}},\\
        &\br{v^n_2(\hat{l}_{2,j}, \hat{k}_{2,j} |\hat{m}_{2,j}, \hat{l}_{2,j-1}), z_1^n(j+1)}\in \typset{2}{P_{V_2Z_1}},
    \end{split}
    \end{equation}
    which require
    \begin{equation}
    \begin{split}
        &R_2 + R_{s2} < I(U_2, Z_1)\\
        &R'_{s2} <I(V_2; Z_1 | U_2).
    \end{split}\label{eq:R2cond1}
    \end{equation}
    Since the channel is degraded, we have automatically $R_2 + R_{s2} < I(U_2, Z_2) <I(U_2, Z_1)$. The stronger user then tries to decode the information dedicated to itself. It finds the unique indices $(\hat{m}_{1,j+1}, \hat{l}_{1,j}, \hat{k}_{1,j})$ such that
    \begin{equation}
    \begin{split}
        &\br{u_1^n(\hat{m}_{1,j+1}, \hat{l}_{1,j} | \hat{m}_{2,j+1}, \hat{l}_{2,j}), z_1^n(j+1) } \in \typset{2}{P_{U_1Z_1}},\\
        &\br{ v_1^n(\hat{l}_{1,j}, \hat{k}_{1,j} | \hat{m}_{1,j}, \hat{l}_{1,j-1},\hat{m}_{2,j}, \hat{l}_{2,j-1}, \hat{l}_{2,j}, \hat{k}_{2,j}), z_1^n(j+1) }\in \typset{2}{P_{V_1Z_1}},
    \end{split}
    \end{equation}
    and
    \begin{equation}
    \begin{split}
        &R_1+R_{s1} < I(U_1; Z_1 | U_2),\\
        &R'_{s1} < I(V_1; Z_1 | U_1, U_2, V_2).
    \end{split}
    \end{equation}

\subsubsection{Analysis}
    Combining the above requirements on $(R_1, R_{s1}, R'_{s1}, R_2, R_{s2}, R'_{s2})$, we first have
    \begin{equation}
    \begin{split}
    R_{s1} &> I(V_1; S_T, Y' | U_1, U_2, V_2) - I(V_1; Z_1 | U_1, U_2, V_2)\\
        &=I(V_1; S_T, Y'| U_1, U_2, V_2, Z_1),
    \end{split}
    \end{equation}
    and 
    \begin{equation}
    \begin{split}
        R_{s2} &> I(V_2; S_T, Y' | U_2) - \min \br{I(V_2; Z_2|U_2), I(V_2; Z_1 | U_2)}\\
        &= \max\br{I(V_2; S_T, Y'| U_2, Z_2), I(V_2; S_T, Y'|U_2, Z_1)}.
    \end{split}
    \end{equation}
    due to the Markov chains $V_1 - (U_1, U_2, V_2, S_T, Y') - (Z_1, Z_2)$ and $V_2 - (U_2, S_T, Y') - (Z_1, Z_2)$. Therefore,
    \begin{equation}
    \begin{split}
        R_1 &< I(U_1; Z_1|U_2) - R_{s1}\\
            &< I(U_1; Z_1|U_2) - I(V_1; S_T, Y'| U_1, U_2, V_2, Z_1),
    \end{split}
    \end{equation}
    and
    \begin{equation}
    \begin{split}
        R_2 &< I(U_2; Z_2) - R_{s2}\\
            &< I(U_2; Z_2) - \max\br{I(V_2; S_T, Y'| U_2, Z_2), I(V_2; S_T, Y'|U_2, Z_1)}.
    \end{split}
    \end{equation}
    The encoding and decoding error probability can then be bounded to zero as $n\rightarrow \infty$, such that the distortion $(D_1/(1+\epsilon_2), D_2/(1+\epsilon_2))$ can be achieved. Further letting $\epsilon_2 \rightarrow 0$ concludes the proof. 

% \end{enumerate}
The proof for the strictly causal case is done by simply letting the function $x$ independent of $S_T$ or replacing $U_1$ by $X$ and repeating the other steps above. If the \ac{sit} is non-causally available, the encoder can apply the random binning technique\cite{gel1980coding,steinberg2005coding} such that
\begin{align}
    R_2 &< I(U_2; Z_2) - I(U_2;S_T) - R_{s2},\\
    R_2 &< I(U_1; Z_2|U_2) - I(U_1; S_T|U_2) - R_{s2}.
\end{align}

\section{}\label{app:deg-bc-scc-commonly}
Again, we only provide the proof for the causal case, and the extension to the strictly causal case is simply done by removing $X$'s dependency on $S_T$. The proof of achievability is finished in Appendix~\ref{app:deg-bc-scc} and here we only need to show that for all required possible $D_1$ with $D_1^{(n)}\le D_1$, the rate tuple $(R_0, R_1, R_2)$ cannot be beyond the region defined in $\calR(\pddcf{D_1}{\infty}|_{Y_2'= \varnothing})$. We define the following auxiliary random variables:
\begin{equation}
\begin{split}
    W_i &= (M_1, M_2, S_T^{i-1}),\\
    U_{2,i} &= (M_2, Z_1^{i-1}, Z_2^{i-1}),\\
    U_{1,i} &= (W_i, Y'^{i-1}) = (M_1, M_2, S_T^{i-1}, Y'^{i-1}),\\
    V_{1,i} &= Z_{1,i+1}^n,
\end{split}
\end{equation}
and it is verified that 
\begin{equation*}
\begin{split}
    &P_{S_iS_{T,i}U_{1,i}U_{2,i}V_{1,i}X_iZ_{1,i}Z_{2,i}Y'_i}(s_i,s_{T,i}, u_{1,i}, u_{2,i}, v_{1,i}, x_i, z_{1,i}, z_{2,i}, y_i)=P_S(s_i)P_{S_T|S}(s_{T,i}|s_i)P_{U_{2,i}}(u_{2,i}) \\
    &\Hquad \cdot P_{U_{1,i}|U_{2,i}}(u_{1,i}|u_{2,i}) \mathbbm{1}\{x_i=g_{e,i}(u_{1,i}, s_{T,i})\} P_{Z_1Z_2|XS}(z_{1,i},z_{2,i}|x_i, s_i) \mathbbm{1}\{y'_i=\phi_1(z_{1,i})\}P_{V_i|U_{1,i}S_{T,i}Y'_i}(v_{1,i} | u_{1,i}, s_{T,i}, y'_i).\\
\end{split}
\end{equation*}
Since the cardinality of $\calU_{1,i}=\calM_1 \times \calM_2 \times \calS_T^{i-1} \times \calY'^{i-1}$ varies for different $i$, we can define a function $g_e$ independent of $i$ and defined on $\bigcup_i \calU_{1,i} \times \calS_T$ with $g_e(u_i, s_{T,i}) = g_{e,i}(u_i,s_{T,i})=x_i$. The estimator $h_{1,i}(Z_1^n)$ can be replaced by $h^*_{1,i}(U_{1,i}, U_{2,i}, V_{1,i}, Z_{1,i})$ because
\begin{equation}
    \expcs{}{d_1(S_i, h_{1,i}(Z_1^n))} \ge \expcs{}{d_1(S_i, h^*_{1,i}(U_{1,i}, U_{2,i}, V_{1,i}, Z_{1,i}))}
\end{equation}
due to the Markov chain $S_i - (U_{1,i}, U_{2,i}, V_{1,i}, Z_{1,i}) - Z_1^{i-1}$ and Lemma~\ref{lemma:markovest}. Introducing a time-sharing random variable $Q$ that is uniformly distributed over $[n]$ and independent of all the other random variables, and redefining $U_1=U_{1,Q}$, $U_2=(Q,U_{2,Q})$, $V_1=V_{1,Q}$, $S = S_Q$, $S_T=S_{T,Q}$, $X=X_Q$, $Z_1=Z_{1,Q}$, $Z_2=Z_{2,Q}$, $Y'=Y'_Q$, $\hS_1=\hS_{1,Q}$, $h_1^*(U_1, U_2, V_1, Z_1) = h^*_{1,Q}(U_{1,Q}, U_{2,Q}, V_{1,Q}, Z_{1,Q})$ it can be shown that $(U_1, U_2, V_1, V_2, X, \hS_1,\hS_2) \in \pddcf{D_1}{\infty}$ with $V_2$ chosen arbitrarily and can be set to empty as it does not contribute to the decoding and estimation stages.

For decoder 2, we have
\begin{equation}\label{eq:r2converse}
    nR_2 -n\epsilon_n \le I(M_2; Z_2^n) \le \sum_{i=1}^n I(M_2, Z_2^{i-1}; Z_{2,i}) \le \sum_{i=1}^n I(U_{2,i}; Z_{2,i}) = nI(U_2; Z_2)
\end{equation}
due to the Fano's inequality. We can also identify the equality
\begin{equation}
    I(Z_1^n; S_{T,i} | W_i) = I(Z_1^n, \hS_{1,i}; S_{T,i} | W_i)= I(\hS_{1,i}; S_{T,i} | W_i) + I(Z_1^n; S_{T,i} | W_i, \hS_{1,i}),
\end{equation}
such that
\begin{subequations}\label{eq:r1converse}
\begin{align}
    nR_1 & + \sum_{i=1}^n I(\hS_{1,i}; S_{T,i} | W_i) - n\epsilon_n\\
    &\le I(M_1; Z_1^n | M_2) + \sum_{i=1}^n \brsq{I(Z_1^n; S_{T,i} | W_i)-I(Z_1^n; S_{T,i} | W_i, \hS_{1,i})}\\
    &= I(M_1; Z_1^n | M_2) + I(Z_1^n; S_{T}^n | M_1,M_2) - \sum_{i=1}^nI(Z_1^n; S_{T,i} | W_i, \hS_{1,i})\\
    &= I(M_1, S_T^n; Z_1^n | M_2) - \sum_{i=1}^nI(Z_1^n; S_{T,i} | W_i, \hS_{1,i})\\
    &= \sum_{i=1}^n \brsq{I(M_1, S_T^n; Z_{1,i} | M_2, Z_1^{i-1}, Z_2^{i-1}) - I(Z_1^n; S_{T,i} | W_i, \hS_{1,i})}\label{subeq:bc-converse-1}\\
    &= \sum_{n=1}^n \brsq{H(Z_{1,i}| U_{2,i}) - H(Z_{1,i}| U_{1,i},U_{2,i},S_{T,i}) - I(Z_1^n; S_{T,i} | W_i, \hS_{1,i})}\label{subeq:bc-converse-3}\\
    &= \sum_{n=1}^n \brsq{I(Z_{1,i}; U_{1,i}, S_{T,i} | U_{2,i}) - I(Z_1^n; S_{T,i} | W_i, \hS_{1,i})}\\
    &= \sum_{n=1}^n \brsq{I(Z_{1,i}; U_{1,i} | U_{2,i}) + I(Z_{1,i}; S_{T,i} | U_{1,i}, U_{2,i}) - I(Z_1^n; S_{T,i} | W_i, \hS_{1,i})}\\
    &\le \sum_{n=1}^n \brsq{I(Z_{1,i}; U_{1,i} | U_{2,i}) + H(S_{T,i}) - H(S_{T,i} | Z_{1,i},U_{1,i}, U_{2,i}) - H(S_{T,i} | W_i, \hS_{1,i}) + H(S_{T,i} |U_{1,i}, U_{2,i}, V_{1,i}, Z_{1,i})}\label{subeq:bc-converse-2}\\
    &\le \sum_{n=1}^n \brsq{I(Z_{1,i}; U_{1,i} | U_{2,i}) - I(V_{1,i}; S_{T,i}|U_{1,i}, U_{2,i}, Z_{1,i}) + I(S_{T,i}; W_i, \hS_{1,i})},
    % &= \sum_{n=1}^n \brsq{I(Z_{1,i}; U_{1,i} | U_{2,i}) - I(V_{1,i}; S_{T,i}, Y'_i|U_{1,i}, U_{2,i}, Z_{1,i}) + I(S_{T,i}; \hS_{1,i} |W_i)}\label{subeq:bc-converse-3},
\end{align}
\end{subequations}
where \eqref{subeq:bc-converse-1} and \eqref{subeq:bc-converse-2} follow since the channel is degraded, \eqref{subeq:bc-converse-3} holds because $Y'_i$ is a deterministic function of $Z_{1,i}$. Subsequently, 
\begin{subequations}
\begin{align}
    R_1- \epsilon_n &\le \frac{1}{n}\sum_{n=1}^n \brsq{I(Z_{1,i}; U_{1,i} | U_{2,i}) - I(V_{1,i}; S_{T,i}|U_{1,i}, U_{2,i}, Z_{1,i})}\\
    &=\frac{1}{n}\sum_{n=1}^n C_1\br{\expcs{}{d_1(S_i, h_{1,i}^*(U_{1,i}, U_{2,i}, V_{1,i}, Z_{1,i}))}}\\
    &\le C_1\br{\frac{1}{n}\sum_{n=1}^n \expcs{}{d_1(S_i, h_{1,i}^*(U_{1,i}, U_{2,i}, V_{1,i}, Z_{1,i}))}}\\
    &=C_1 \br{\expcs{}{d_1(S, h_1^*(U_1, U_2, V_1, Z_1))}}\\
    &\le C_1(D_1),
\end{align}
\end{subequations}
where $C_1(D_1)$ is the function of $R_1$ with respect to $D_1$ in $\calR(\pddcf{D_1}{\infty}|_{Y_2'= \varnothing})$, and is a non-decreasing concave function according to Proposition~\ref{prop:deg-bc-scc}. The sum rate should satisfy
\begin{subequations}
\begin{align}
    &n(R_1 + R_2 - 2\epsilon_n) + \sum_{i=1}^n I(\hS_{1,i}; S_{T,i} | W_i)\\
    &\quad \le I(M_2; Z_2^n) + I(M_1; Z_1^n|M_2) + \sum_{i=1}^n \brsq{I(Z_1^n; S_{T,i} | W_i)-I(Z_1^n; S_{T,i} | W_i, \hS_{1,i})}\\
    &\quad \le I(M_2; Z_1^n) + I(M_1; Z_1^n|M_2) + I(Z_1^n; S_{T}^n | M_1,M_2) - \sum_{i=1}^nI(Z_1^n; S_{T,i} | W_i, \hS_{1,i})\\
    &\quad = I(M_1, M_2; Z_1^n)+ I(Z_1^n; S_{T}^n | M_1,M_2) - \sum_{i=1}^nI(Z_1^n; S_{T,i} | W_i, \hS_{1,i})\\
    &\quad = I(M_1, M_2, S_T^n; Z_1^n) - \sum_{i=1}^nI(Z_1^n; S_{T,i} | W_i, \hS_{1,i})\\
    &\quad = \sum_{i=1}^n \brsq{I(M_1, M_2, S_T^n; Z_{1,i} | Z_1^{i-1}, Z_2^{i-1}) -I(Z_1^n; S_{T,i} | W_i, \hS_{1,i})}\\
    &\quad \le \sum_{i=1}^n \brsq{I(U_{1,i}, U_{2,i}, S_{T,i} ; Z_{1,i}) - I(Z_1^n; S_{T,i} | W_i, \hS_{1,i})}\\
    &\quad = \sum_{i=1}^n \brsq{ I(U_{1,i}, U_{2,i}; Z_{1,i}) + I(S_{T,i}; Z_{1,i} | U_{1,i}, U_{2,i}) -  I(Z_1^n; S_{T,i} | W_i, \hS_{1,i})}\\
    &\quad \le \sum_{i=1}^n \brsq{I(U_{1,i}, U_{2,i}; Z_{1,i}) + H(S_{T,i}) - H(S_{T,i}|U_{1,i}, U_{2,i}, Z_{1,i}) - H(S_{T,i}|W_i, \hS_{1,i}) + H(S_{T,i} | W_i, Z_1^n)}\\
    &\quad = \sum_{i=1}^n \brsq{I(U_{1,i}, U_{2,i}; Z_{1,i}) - I(V_{1,i} ; S_{T,i} | U_{1,i}, U_{2,i}, Z_{1,i}) + I(\hS_{1,i}; S_{T,i} | W_i)},
\end{align}
\end{subequations}
which is shown to hold automatically if \eqref{eq:r2converse} and \eqref{eq:r1converse} hold since the channel is degraded, and thus can be omitted.

% use section* for acknowledgment
% \section*{Acknowledgment}

% The authors would like to thank...

% Can use something like this to put references on a page
% by themselves when using endfloat and the captionsoff option.
\ifCLASSOPTIONcaptionsoff
  \newpage
\fi

% trigger a \newpage just before the given reference
% number - used to balance the columns on the last page
% adjust value as needed - may need to be readjusted if
% the document is modified later
%\IEEEtriggeratref{8}
% The "triggered" command can be changed if desired:
%\IEEEtriggercmd{\enlargethispage{-5in}}

% references section

% can use a bibliography generated by BibTeX as a .bbl file
% BibTeX documentation can be easily obtained at:
% http://mirror.ctan.org/biblio/bibtex/contrib/doc/
% The IEEEtran BibTeX style support page is at:
% http://www.michaelshell.org/tex/ieeetran/bibtex/
%\bibliographystyle{IEEEtran}
% argument is your BibTeX string definitions and bibliography database(s)
%\bibliography{IEEEabrv,../bib/paper}
%
% <OR> manually copy in the resultant .bbl file
% set second argument of \begin to the number of references
% (used to reserve space for the reference number labels box)
\bibliographystyle{IEEEtran}
\bibliography{IEEEabrv,mybib}

% biography section
% 
% If you have an EPS/PDF photo (graphicx package needed) extra braces are
% needed around the contents of the optional argument to biography to prevent
% the LaTeX parser from getting confused when it sees the complicated
% \includegraphics command within an optional argument. (You could create
% your own custom macro containing the \includegraphics command to make things
% simpler here.)
%\begin{IEEEbiography}[{\includegraphics[width=1in,height=1.25in,clip,keepaspectratio]{mshell}}]{Michael Shell}
% or if you just want to reserve a space for a photo:

% \begin{IEEEbiographynophoto}{Xinyang Li}
% Biography text here.
% \end{IEEEbiographynophoto}

% \begin{IEEEbiographynophoto}{Vlad C. Andrei}
% Biography text here.
% \end{IEEEbiographynophoto}

% \begin{IEEEbiographynophoto}{Aladin Djuhera}
% Biography text here.
% \end{IEEEbiographynophoto}

% \begin{IEEEbiographynophoto}{Ullrich J. M\"onich}
% Biography text here.
% \end{IEEEbiographynophoto}

% % if you will not have a photo at all:
% \begin{IEEEbiographynophoto}{Holger Boche}
% Biography text here.
% \end{IEEEbiographynophoto}

\end{document}